\documentclass[sigconf]{aamas}

\usepackage{balance} 

\usepackage{graphicx}
\usepackage{booktabs} 
\usepackage{acro}
\usepackage{amsmath}
\usepackage{amsthm}
\usepackage{siunitx}
\usepackage{bm}
\usepackage{bbm}
\usepackage{tcolorbox}
\usepackage{nicefrac, xfrac}
\usepackage{mathtools}
\usepackage{subcaption}
\usepackage{dsfont}
\usepackage{tikz}
\usetikzlibrary{trees}
\usetikzlibrary{quotes}
\usetikzlibrary{positioning}
\usepackage{oplotsymbl}
\usepackage{marvosym}
\usepackage{savesym}
\savesymbol{Cross}
\usepackage{bbding}
\usepackage{multicol,multirow}
\usepackage{rotating}
\restoresymbol{bb}{Cross}
\usetikzlibrary{trees}
\usetikzlibrary{quotes}
\usetikzlibrary{shapes}
\usepackage{xspace}
\usepackage{color}
\usepackage{enumitem}
\usepackage{cleveref}
\crefrangeformat{appendix}{#3#1#4--#5#2#6}

\usepackage{algorithmicx}
\usepackage[Algorithm,ruled,section]{algorithm}
\usepackage{algpseudocode}
\usepackage{etoolbox}
\usepackage{url}
\usepackage{hyperref}

\usepackage{xr}

\newcommand{\term}[1]{\textbf{\textit{#1}}}
\newcommand{\T}{\textsc{true}}
\newcommand{\F}{\textsc{false}}

\newcommand{\cI}{\mathcal{I}}

\newcommand{\cV}{\mathcal{V}}

\newcommand{\cR}{\mathcal{R}}
\newcommand{\cS}{\mathcal{S}}
\newcommand{\bsigma}{\bm{\sigma}}

\newcommand{\bbR}{\mathbb{R}}
\newcommand{\regret}{\mathsf{Reg}}
\newcommand{\order}{\mathcal{P}}
\newcommand{\dondG}{$\textsc{Bargain}$\xspace}
\newcommand{\numItems}{m}
\newcommand{\thresh}{\nu}
\newcommand{\offer}{\omega}

\newcommand{\gengoof}{$\textsc{GenGoof}$\xspace}
\newcommand{\gengoofK}[1]{$\textsc{GenGoof}_{#1}$\xspace}
\newcommand{\modgg}{$\textsc{PrivateGenGoof}$\xspace}
\newcommand{\modggK}[1]{$\textsc{PrivateGenGoof}_{#1} $\xspace}
\newcommand{\given}{\hspace{0.1em} \vert \hspace{0.1em}}

\newcommand{\Udist}{\mathbb{U}}

\DeclareMathOperator*{\argmax}{arg\,max}

\newcommand{\Prob}{\mathsf{Pr}}

\definecolor{player1}{RGB}{213, 94, 0}
\definecolor{player1_1}{RGB}{100, 143, 255}
\definecolor{player1_2}{RGB}{0, 153, 255}
\definecolor{player1_3}{RGB}{226, 146, 57}
\definecolor{player1_4}{RGB}{161, 191, 1}
\definecolor{player1_5}{RGB}{121, 71, 186}
\definecolor{player1_6}{RGB}{193, 32, 224}
\definecolor{player1_7}{RGB}{122, 4, 4}
\definecolor{player1_8}{RGB}{37, 124, 163}
\definecolor{player1_9}{RGB}{210, 153, 94}
\definecolor{player1_10}{RGB}{94, 210, 153}
\definecolor{player1_11}{RGB}{107, 65, 147}
\definecolor{player1_12}{RGB}{184, 126, 122}
\definecolor{player1_13}{RGB}{176, 203, 166}
\definecolor{player1_14}{RGB}{143, 175, 107}
\definecolor{player1_15}{RGB}{232, 244, 218}
\definecolor{player1_16}{RGB}{145, 76, 61}
\definecolor{player1_17}{RGB}{193, 94, 194}
\definecolor{player1_18}{RGB}{153, 0, 0}
\definecolor{player1_19}{RGB}{0, 153, 153}
\definecolor{player1_20}{RGB}{48, 233, 186}

\definecolor{player2}{RGB}{0, 114, 189}
\definecolor{player2_1}{RGB}{135, 100, 255}
\definecolor{player2_2}{RGB}{176,224,230}
\definecolor{player2_3}{RGB}{128, 0, 128}
\definecolor{player2_4}{RGB}{255, 178, 255}
\definecolor{player2_5}{RGB}{205,188,198}
\definecolor{player2_6}{RGB}{214, 198, 164}
\definecolor{player2_7}{RGB}{204, 0, 204}
\definecolor{player2_8}{RGB}{204, 204, 0}
\definecolor{player2_9}{RGB}{184, 173, 241}
\definecolor{player2_10}{RGB}{64, 224, 208}
\definecolor{player2_11}{RGB}{204, 255, 0}
\definecolor{player2_12}{RGB}{229, 193, 0}
\definecolor{chance}{RGB}{0, 158, 115}
\definecolor{otherplayer1}{RGB}{204, 121, 167}
\definecolor{otherplayer2}{RGB}{240, 228, 66}
\definecolor{terminal}{RGB}{253, 236, 160}
\definecolor{oldplayer}{RGB}{150, 150, 150}

\definecolor{gg_player1}{RGB}{128, 0, 128}
\definecolor{gg_player1_1}{RGB}{210, 151, 139}
\definecolor{gg_player1_2}{RGB}{192, 222, 150}
\definecolor{gg_player1_3}{RGB}{0, 105, 150}
\definecolor{gg_player1_4}{RGB}{255, 105, 180}
\definecolor{gg_player1_5}{RGB}{142, 59, 0}
\definecolor{gg_player1_6}{RGB}{152, 152, 255}
\definecolor{gg_player1_7}{RGB}{243, 222, 138}
\definecolor{gg_player1_8}{RGB}{1, 137, 223}
\definecolor{gg_player1_9}{RGB}{214, 85, 85}
\definecolor{gg_player1_10}{RGB}{218, 165, 32}
\definecolor{gg_player1_11}{RGB}{80, 200, 120}
\definecolor{gg_player1_12}{RGB}{120, 80, 200}
\definecolor{gg_player1_13}{RGB}{204,161,201}
\definecolor{gg_player1_14}{RGB}{190,209,227}
\definecolor{gg_player1_15}{RGB}{238,251,29}

\definecolor{gg_player2}{RGB}{230, 0, 230}
\definecolor{gg_player2_1}{RGB}{65, 105, 225}
\definecolor{gg_player2_2}{RGB}{68, 170, 153}
\definecolor{gg_player2_3}{RGB}{230, 97, 0}

\definecolor{gg_chance}{RGB}{126, 127, 154}

\setcopyright{ifaamas}
\acmConference[AAMAS '26]{Proc.\@ of the 25th International Conference
on Autonomous Agents and Multiagent Systems (AAMAS 2026)}{May 25 -- 29, 2026}
{Paphos, Cyprus}{C.~Amato, L.~Dennis, V.~Mascardi, J.~Thangarajah (eds.)}
\copyrightyear{2026}
\acmYear{2026}
\acmDOI{}
\acmPrice{}
\acmISBN{}

\acmSubmissionID{937}

\title[PBE computation with application to EGTA]{Computing Perfect Bayesian Equilibria, with Application to Empirical Game-Theoretic Analysis}

\author{Christine Konicki$^*$}\thanks{$^*$Konicki worked on this paper while a PhD student at the University of Michigan}
\affiliation{
  \institution{Michigan Tech Research Institute}
  \city{Ann Arbor}
  \country{USA}}
\email{ckonicki@mtu.edu}

\author{Mithun Chakraborty}
\affiliation{
  \institution{University of Michigan}
  \city{Ann Arbor}
  \country{USA}}
\email{dcsmc@umich.edu}

\author{Michael P. Wellman}
\affiliation{
  \institution{University of Michigan}
  \city{Ann Arbor}
  \country{USA}}
\email{wellman@umich.edu}

\begin{abstract}
Perfect Bayesian Equilibrium (PBE) is a refinement of the Nash equilibrium for imperfect-information extensive-form games (EFGs) that enforces consistency between the two components of a solution: agents’ strategy profile describing their decisions at information sets and the belief system quantifying their uncertainty over histories within an information set. We present a scalable approach for computing a PBE of an arbitrary two-player EFG. We adopt the definition of PBE enunciated by Bonanno in 2011 using a consistency concept based on the theory of belief revision due to Alchourr\'{o}n, G\"{a}rdenfors, and Makinson. Our algorithm for finding a PBE is an adaptation of Counterfactual Regret Minimization (CFR) that minimizes the expected regret at each information set given a belief system, while maintaining the necessary consistency criteria. We prove that our algorithm is correct for two-player zero-sum games and has a reasonable slowdown in time-complexity relative to classical CFR given the additional computation needed for refinement. We also experimentally demonstrate the competent performance of PBE-CFR in terms of equilibrium quality and running time on medium-to-large non-zero-sum EFGs. Finally, we investigate the effectiveness of using PBE for strategy exploration in empirical game-theoretic analysis. Specifically, we compute PBE as a meta-strategy solver (MSS) in a tree-exploiting variant of Policy Space Response Oracles (TE-PSRO). Our experiments show that PBE as an MSS leads to higher-quality empirical EFG models with complex imperfect information structures compared to MSSs based on an unrefined Nash equilibrium.
\end{abstract}

\keywords{Perfect Bayesian Equilibrium, Extensive-Form Empirical Game, Policy Space Response Oracles}

\newcommand{\BibTeX}{\rm B\kern-.05em{\sc i\kern-.025em b}\kern-.08em\TeX}

\begin{document}


\pagestyle{fancy}
\fancyhead{}


\maketitle 


\section{Introduction}
Game theory offers a variety of approaches for formally representing strategic interactions among several autonomous agents and reasoning about their outcomes with the help of \textit{solution concepts}. The preeminent game-theoretic solution concept is the Nash Equilibrium (NE), a strategy profile such that no agent can improve its payoff by unilaterally deviating from the profile. Ever since its introduction and the proof of its guaranteed existence for finite games \cite{Nash51}, the NE has been the focus of several threads of theoretical and empirical research.

An important thread concerns \textit{refinements} of the NE for \textit{extensive-form games} (EFGs), tree-based representations of dynamic multiagent interactions that explicitly capture the sequential nature of action-taking and conditioning on observations. In general, the NE of a game is non-unique, and a refinement is a set of criteria that selects plausible outcomes from among all Nash equilibria, given the characteristics of a class of games. The \textit{subgame perfect equilibrium} (SPE) \cite{selten65} is a natural refinement
for an EFG with perfect information (i.e., when every agent knows the full game history leading up to each of its decision points). An SPE rules out non-credible threats by requiring the solution to induce a NE in each subgame. 
Under imperfect information, EFGs use the device of \textit{information sets} to represent agents' inability to distinguish certain game histories; 
\citet{kaminski19} generalized SPEs to (potentially infinite) imperfect-information EFGs by refining the definition of subgames such that every information set is contained within a single subgame.

The most powerful NE refinements for imperfect-information EFGs augment the game solution space from that of strategy profiles to that of \textit{assessments}. An assessment consists of a strategy profile and a \textit{belief system}, a quantification of each agent's uncertainty over all decision points in each of its information sets via probability distributions. To be an equilibrium, an assessment must meet two conditions. First, it must satisfy \textit{sequential rationality}, which stipulates that no unilateral deviation can improve expected utility at any information set. Second, all distributions induced by the assessment's strategies and beliefs must conform to Bayes' rule. However, game theorists have also given much thought to additional notions of \textit{consistency} to be enforced \textit{between} the two components of an assessment to address further plausibility issues, resulting in a few different refined solution concepts. \citet{kw82} proposed the \textit{sequential equilibrium} (SE) that satisfies a topological consistency notion, called KW-consistency by \citet{agm11}, which is unintuitive and hard to verify. A simplification of SE called the \textit{weak sequential equilibrium} \cite{myerson91} imposes conformity to Bayes' rule only on information sets reached with positive probability under the strategy profile.  \citet{pbe91} introduced a solution concept of intermediate strength that they termed \textit{perfect Bayesian equilibrium}, but they demonstrated its construction only for a restricted class of games called multi-stage signaling games. A major issue with the practical implementation of these weaker equilibria is that the lack of concrete, general consistency restrictions on off-equilibrium paths makes them unsuitable as bases for general-purpose game-solving algorithms.  

Bonanno \cite{bonanno_plaus11,agm11} introduced a new consistency notion that covers both on- and off-equilibrium paths, based on the theory of belief revision due to Alchourr\'{o}n, G\"{a}rdenfors, and Makinson \cite{AGM85}, which he termed \textit{AGM-consistency} after the original authors. 
This notion requires the concept of \textit{plausibility orders} over the nodes (representing histories or, equivalently, decision points) of the EFG tree based entirely on structural properties (edge incidence and information set membership). An assessment is AGM-consistent if it is possible to construct a plausibility order such that positive probabilities are assigned to nodes or edges of the EFG tree by the assessment if and only if they satisfy certain relationships under the order in question. \citet{agm11} reused the same term `perfect Bayesian equilibrium' (PBE) for this refinement using AGM-consistency, which is still weaker than KW-consistency but easier to algorithmically verify in principle---we will use this definition of PBE in this paper. Although it is known that every finite EFG admits at least one PBE \cite{agm11}, the implementation and evaluation of robust, scalable algorithmic approaches towards the computation of a PBE for arbitrary dynamic games of imperfect information is an important practical question. This is the primary research question that motivates this work.

The framework of empirical game-theoretic analysis (EGTA) \citep{wellman2025empirical} is highly relevant to our work. 
For multiagent scenarios that are too complicated for an analytic representation but admit procedural descriptions that can be queried (e.g., a simulator), EGTA offers a toolkit for using data collected from such queries to estimate a coarser model, called an \textit{empirical game}; a key idea is to make this model amenable to off-the-shelf game-solving algorithms so that approximate insights about the underlying scenario can be derived from it. 
A popular and powerful iterative approach to EGTA called \textit{policy space response oracles} (PSRO) \cite{bighashdel_psro24,psro17} uses an arbitrary game-solving algorithm as a module called the \textit{meta-strategy solver} (MSS), which provides a principled basis for exploring the underlying strategy space to augment the model. 
Whereas the empirical game in EGTA has commonly been maintained in the less expressive normal form, we developed a \textit{tree-exploiting} variant of EGTA (TE-EGTA) in prior work \citep{konicki22,konicki25} that represents the empirical game as an EFG tree. 
This approach makes use of refined solution concepts feasible for empirical games and potentially conducive to higher-quality game models, and the MSS for a tree-based model provides a natural use case for our algorithmic contributions in this paper. 



\subsection{Our Contributions}

We propose a novel practical algorithm PBE-CFR (Algorithms~\ref{alg:pbe_cfr}, \ref{alg:pbe_traverse}, \ref{alg:update_beliefs}) for computing a PBE \cite{agm11} in arbitrary two-player EFGs. It is a non-trivial adaptation of the classic Counterfactual Regret Minimization (CFR) algorithm \cite{cfrm07} that minimizes the expected regret at each information set given a belief system, while enforcing AGM-consistency.
\begin{itemize}[leftmargin=*]
    \item For two-player zero-sum games, we prove that the algorithm is correct by establishing a guarantee of convergence to an exactly sequentially rational solution, and analyze its space and time complexity (Section~\ref{sec:pbe_proof}).
    \item For two qualitatively different classes of two-player general-sum games, we experimentally demonstrate that PBE-CFR performs competently in practice both in approximation quality and running time (Sections~\ref{sec:setup} and \ref{sec:pbe_cfr_time_exp}).
\end{itemize}
  We also report experiments that demonstrate the usefulness of PBE-CFR, vis-\`a-vis an unrefined NE obtained by classical CFR, for strategy exploration in TE-PSRO \citep{konicki22,konicki25}.
  In particular, we characterize how the speed of convergence to zero of the regret of the TE-PSRO empirical model with PBE as the MSS depends on the degree of coarsening of the information structure of the underlying game (Section~\ref{sec:pbe_exp_tepsro}). The code for our implementation of PBE-CFR and all our experiments can be found at \url{https://github.com/ckonicki-umich/AAMAS26/}.

\subsection{Further Related Work}\label{sec:relwork}
We provide a more detailed review of consistency concepts for NE refinements for EFGs in Apps.~\crefrange{sec:seq_eq}{sec:pbe_msg}. \citet{wellman2025empirical} provides an overview of EGTA techniques including PSRO. A body of work exists on  algorithms for computing (approximate) NE refinements \citep{azhar05,turocy2010computing,panozzo14,Thoma25,graf_sequent_eq24}, but the scalability and practicality of these algorithms has not been adequately established, to the best of our knowledge; please see App.~\ref{sec:related_chap7} for further details. There is a rich literature on extensions of the CFR approach including warm-start CFR \cite{brown_cfr14}, CFR$^+$ \cite{cfrplus14}, CFR-D \cite{zun_subgame4}, discounted CFR \cite{brown2019superhuman}, linear CFR \cite{brown2019solving}, deep CFR \cite{dcfrm_19}, Monte Carlo CFR \cite{mccfr09}, PCFR$^+$ \cite{farina2021faster}, and dynamic discounted CFR \cite{xu2024dynamic}.
In prior work, we developed a scalable, modular implementation \cite{konicki25} of the generalized backward induction algorithm \cite{kaminski19} for computing an SPE of an imperfect-information EFG, and found in experiments that TE-PSRO with an SPE as MSS converges to a high-quality model faster than with an unrefined NE as MSS for diverse game classes. A similar treatment of PBE is a natural next step.



\section{Technical Preliminaries}\label{sec:prelim}
A finite imperfect-information \term{extensive-form game} (EFG) is a tuple $G:=\langle N$, $H$, $V$, $\{\mathcal{I}_j\}_{j=0}^n$, $\{A_j\}_{j=1}^n$, $X$, $P$, $u \rangle$, where
\begin{itemize}[leftmargin=*]
    \item $N = \{0, \dotsc, n\}$ is the player set. Player~$0$ denotes \term{Nature}, a non-strategic agent responsible for chance events that impact the course of play.
    \item $H$, the \term{game tree}, is a finite tree rooted at node $h_0$ that captures players' dynamic interactions. Each node $h \in H$ represents a \term{state} or, equivalently, a \term{history} of the game beginning at $h_0$ (which corresponds to the empty history $\emptyset$). The \term{terminal nodes} $Z \subset H$ or leaves of the game tree represent possible end-states of the game. The remaining nodes $D = H \setminus Z$ are \term{decision nodes}. 
    \item $V: D \rightarrow N$ assigns a player to each decision node $h$. A node $h$ where $V(h) = 0$ is called a \term{chance node}.
    \item For each player~$j \in N$, the set $\cI_j$ is a partition of $V^{-1}(j)$ where each $I \in \cI_j$ is an \term{information set} of $j$. All nodes $h \in I$ are indistinguishable to player~$j$. $I(h)$ denotes the information set to which a node $h$ belongs. We assume all information sets to be consistent with perfect recall \citep[Definition 5.2.3]{shoham2008multiagent}. 
    \item $A_j(I)$ denotes actions that player~$j$ can take at information set $I \in \cI_j$.
    \item $X(h)$ is the set of possible outcomes of Nature's stochastic event at $h$.
    \item $P(\cdot \given h)$ is the probability distribution over $X(h)$.
    \item  $u: Z \rightarrow \bbR^n$ maps each terminal node $z$ to a vector of players' \term{utilities} $\{u_j(z)\}_{j=1}^n$.
\end{itemize}
The directed edge connecting any $h \in I$ to its child represents a state transition resulting from $V(h)$'s move and is labeled with an outcome $x \in X(h)$ if $V(h) = 0$ or an action $a \in A_{V(h)}(I)$ otherwise, the child-node being denoted by $hx$ or $hs$ respectively. The set of nodes within $H$ that succeed a given node $h$ is denoted by $\mathsf{Succ}(h)$. The function $\varphi$ maps each node $h \in I \in \cI_j$ to the ordered sequence of actions and chance outcomes observable to $j$ from the root node leading up to $I$, according to the designated rules of the (imperfect-information) game. When the input is a terminal node $z \in Z$, which does not belong to any information set, $\varphi$ returns a complete history from $z$ to the root node, or the sequence of a specific player's actions given $z$ and $j \in N$.

A \term{pure strategy} $\pi_j(\cdot)$ for player~$j \in N \setminus \{0\}$ specifies the action $a \in A_j(I)$ that $j$ selects at each information set $I \in \cI_j$.
More generally, a \term{mixed strategy} $\sigma_j(\cdot \given I)$ defines a probability distribution over $A_j(I)$ at each information set of agent $j$ where an action $a \in A_j(I)$ is selected with probability $\sigma_j(a \given I)$.
A \term{strategy profile} is a vector $\bsigma = (\sigma_1, \dotsc, \sigma_n)$, and $\bsigma_{-j}$ denotes the collection of strategies of all players other than $j$ in $\bsigma$. 
$\Sigma_j$ denotes the set of all strategies available to player~$j$, and $\Sigma = \times_{j = 1}^n \Sigma_j$ the space of  strategy profiles.

The likelihood that node $h \in H$ is reached by strategy profile $\bsigma$ is given by its \term{reach probability}
\begin{equation*}
r(h, \bsigma) \coloneq r_0(h) \prod_{j \in N \setminus \{0\}} r_j(h, \sigma_j),
\end{equation*}
where $r_j(h, \sigma_j)$ is the joint probability of player~$j$ choosing actions that lead to $h$ according to $\sigma_j$ at each of its decision nodes on the path to $h$; 
Nature's contribution $r_0(h)$ is the joint probability of each chance node along the path to $h$ producing an outcome 
leading to $h$. The reach probability of information set $I$ under $\bsigma$ is $r(I, \bsigma) = \sum_{h\in I} r(h, \bsigma)$. 
A node or information set with a positive reach probability is said to be \textbf{\textit{reachable}} under the given strategy profile. 
The \term{expected utility} of player~$j$ 
under a strategy profile $\bsigma$ is given by 
\begin{equation*}
U^E_j(\bsigma) \coloneq \sum_{z \in Z} u_j(z) r(z, \bsigma). 
\end{equation*}
The \term{regret} of player~$j$ at profile $\bsigma$ is given by 
\[\regret_j(\bsigma) = \max_{\sigma \in \Sigma_j}~U^E_j(\sigma, \bsigma_{-j}) - U^E_j(\bsigma).\] 
We define the regret of a profile as the sum of player regrets, that is $\regret(\bsigma) = \sum_{j=1}^n \regret_j(\bsigma)$.
A strategy profile $\bsigma$ with $\regret(\bsigma) = 0$ is a \term{Nash equilibrium} (NE).

For any $h$ that precedes a terminal node $z$, we denote by $r(z \given h, \bsigma)$ the conditional reach probability of $z$ according to $\bsigma$, given that $h$ has already been reached. That is, $r(z \given h, \bsigma) = r(z, \bsigma)/r(h, \bsigma)$ whenever $h$ is reachable under $\bsigma$ and $z\in\mathsf{Succ}(h)$ as well as the joint probability of all players choosing the right actions that lead to $z$ starting from state $h$ according to $\bsigma$.
Moreover, the \term{conditional expected utility} of player $j$ given that it is at a node $h$ under a strategy profile $\bsigma$ is given by  
\begin{equation*}
U^E_j(\bsigma \given h) \coloneq \sum_{z \in Z} u_j(z) r(z \given h, \bsigma) = \sum_{z \in Z \cap \mathsf{Succ}(h)} u_j(z) r(z \given h, \bsigma). 
\end{equation*}

For an EFG with 
at least one non-singleton information set, players' uncertainty about game states is naturally captured by a \term{system of beliefs} denoted by $\mu$ and defined as a collection of probability distributions, one for each information set $I$. 
At an information set $I \in \cI_j$ of player $j \in N \setminus \{0\}$, $\mu(\cdot \given I)$ represents player $j$'s belief about which tree node it is actually at; $\mu(h\given I) \ge 0$ for every $h\in I$, and $\sum_{h \in I}\mu(h\given I)=1$. 
An ordered pair $(\bsigma, \mu)$ containing a strategy profile $\bsigma$ and a system of beliefs $\mu$ is called an \term{assessment} and serves as a solution candidate for an imperfect-information EFG. App.~\ref{app:main_example} provides an example illustrating EFGs and assessments.  

\subsection{Perfect Bayesian Equilibrium}\label{sec:pbe}
We now present the definition of the \term{perfect Bayesian equilibrium} (PBE) 
proposed by \citet{agm11}. 
The three defining properties of a PBE are sequential rationality, AGM-consistency, and compatibility of beliefs with Bayes' rule throughout the game tree.


Sequential rationality is the natural extension of subgame perfection from strategies to assessments. It stipulates that an assessment must induce an NE at each player's information set, conditioned on both the player's belief distribution at that information set and the assumption that the information set has been reached during gameplay. Let $U^B_j\left(\bsigma, \mu \given I\right)$ denote the \term{believed utility} of player $j$ at information set $I \in \mathcal{I}_j$ for playing strategy $\sigma_j$ while the others play the profile $\bsigma_{-j}$, 
given its belief $\mu(\cdot \given I)$; i.e.,
\begin{align}
    U^B_j\left(\bsigma, \mu \given I\right) &\coloneq \sum_{h \in I} \sum_{z \in Z} \mu(h \given I)\ r(z \given h, \bsigma)\ u_j(z)\notag\\
    &= \sum_{h \in I} \mu(h \given I)\ U^E_j\left( \bsigma \given h \right)\notag \\
    &= \sum_{a \in A_j(I)} \sigma_j(a \given I) \left( \sum_{h \in I} \mu(h \given I)\ U^E_j\left( \bsigma \given ha \right) \right) \label{UB_comp} 
\end{align}


\begin{definition}[Sequential Rationality]\label{def:seq_rat}
An assessment $(\bsigma, \mu)$ is \term{sequentially rational} if, at every information set $I \in \cI_j$ of each player $j \in N \setminus \{0\}$, 
\begin{equation*}
    U^B_j\left(\bsigma, \mu \given I\right) \geq U^B_j\left(\sigma'_j, \bsigma_{-j}, \mu \given I\right),\ \forall \sigma'_j \in \Sigma_j.
\end{equation*}
It is sufficient to restrict $\sigma'_j$ to pure strategy deviations at $I$ \citep{morrill21_seqrat}. 
\end{definition}



The AGM-consistency criterion is based on the concept of a \term{plausibility order} over the nodes in $H$ defined as follows \citep{agm11}.
\begin{definition}[Plausibility Order]\label{def:plaus}
  A plausibility order is a total preorder $\precsim$ on the set $H$ that satisfies the following conditions:\footnote{We say that node $a$ is \textit{at least as plausible} as node $b$ if $a \precsim b$; the symbols $\prec$ and $\sim$ have standard meanings given preorder $\precsim$.}
\begin{itemize}[leftmargin=*]
    \item For any node $h \in D$ and any action $a \in A_{V(h)}(I(h))$, it is impossible that $ha \prec h$.
    \item Every node $h \in D$ has at least one action $a \in A_{V(h)}(I(h))$ such that $ha \precsim h$; each $a$ that satisfies $ha \precsim h$ also satisfies $h'a \precsim h'$ for all $h' \in I(h)$.
    \item For every chance node $h$ and every outcome $e \in X(h)$, $he \precsim h$.
\end{itemize} 
\end{definition}
 Given a history $h$, we say that plausibility is \textit{preserved} in another history $h' \in \mathsf{Succ}(h)$ if $h' \precsim h$. 


\begin{definition}[AGM-consistency \citep{agm11}]\label{def:agm}
An assessment $(\bsigma, \mu)$ for game $G$ is \term{AGM-consistent} if a plausibility order $\order$ can be constructed on $H$ such that:
\begin{itemize}[leftmargin=*]
    \item For each node $h \in H$ and action $a \in A_{V(h)}(I(h))$, $\bsigma(a) > 0$ if and only if $h \sim ha$ in $\order$; 
    \item For each chance node $h \in H$ and possible chance outcome $x \in X(h)$, $P(x \given h) > 0 $ if and only if $ h \sim hx$ in $\order$;
    \item For each node $h \in H$, $\mu(h \given I(h)) > 0 $ if and only if $ h \precsim h'$ in $\order$ for all $h' \in I(h)$.
\end{itemize} 
\end{definition}
A plausibility order $\order$ satisfying the three conditions in Definition~\ref{def:agm} is said to \term{rationalize} the assessment $(\bsigma, \mu)$.

\begin{figure}[ht!]
    \centering
    \includegraphics[scale=0.27]{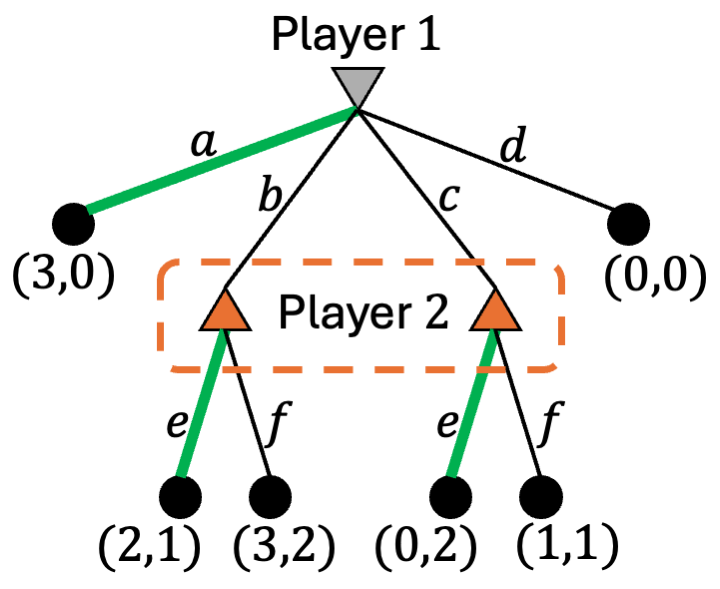}
    \caption{Example of an imperfect-information EFG from \citet{agm11}, augmented with leaf utilities. There is one non-singleton information set (for Player~$2$) represented by the orange box. The equilibrium path induced by the AGM-consistent assessment $(\bm{\sigma}^*, \mu^*)$ described in Example~\ref{ex:agm_consistency}  is highlighted in green.}
    \label{fig:agm_consist}
\end{figure}

With this background in place and assuming familiarity with Bayes' rule, we now furnish the definition of PBE that we will use in the rest of the paper.

\begin{definition}[Perfect Bayesian Equilibrium \citep{agm11}]\label{def:pbe_agm11}
    An assessment $(\bsigma, \mu)$ is a \term{perfect Bayesian equilibrium} for a given imperfect-information game $G$ if it satisfies sequential rationality (Definition~\ref{def:seq_rat}) and AGM-consistency (Definition~\ref{def:agm}), and every distribution in $\mu$ follows Bayes' rule given $\bsigma$; specifically, at every reachable information set $I \in \cI_j$ for every player $j \in N \setminus \{ 0 \}$ and every $h \in I$, $\mu(h) = \frac{r(h, \bsigma)}{r(I, \bsigma)} = \frac{r(h, \bsigma)}{\sum\limits_{h' \in I} r(h', \bsigma)}$.
\end{definition}

The example below illustrates a PBE of an imperfect-information EFG with emphasis on its AGM-consistency; for an example of an assessment violating AGM-consistency, see App.~\ref{app:agm_example}.

\begin{example}\label{ex:agm_consistency}
Consider the $2$-player, imperfect-information EFG depicted in Figure~\ref{fig:agm_consist}. Let $(\bm{\sigma}^*, \mu^*)$ be an assessment of this EFG where $\bsigma^*$ assigns a probability of $1$ to each of actions $a$ and $e$ and $0$ to every other edge, and $\mu^*(c) = 1$.
To rationalize $(\bm{\sigma}^*, \mu^*)$, a plausibility order must require that $a \sim \emptyset$ and $a \precsim b, c, d$, since $\sigma^*_1(a) = 1$. Likewise, since $\sigma^*_2(e) = 1$, $b \sim \mathit{be}$ and $c \sim \mathit{ce}$. By extension, this means that $b \precsim \mathit{bf}$ and $c \precsim \mathit{cf}$. Since $\mu^*(c) = 1$ (and hence $\mu^*(b) = 0$), we must have that $c \precsim b$. Moreover, transitivity entails that $\mathit{be} \precsim \mathit{bf}$ and $\mathit{ce} \precsim \mathit{cf}$. No contradictions arise in this construction; in fact, there are multiple plausibility orders that rationalize $(\bm{\sigma}^*, \mu^*)$  
depending on where nodes $\mathit{cf}$ and $d$ are placed in the order. Therefore, $(\bm{\sigma}^*, \mu^*)$ satisfies AGM-consistency; it trivially conforms to Bayes' rule, and it can be checked algebraically from definitions that it is also sequentially rational.
\end{example}


\section{Algorithm for Finding PBE}\label{sec:alg_pbe}

Before presenting our main algorithmic contribution, we will mention a collection of algorithms that we devised to verify whether a given assessment is a PBE of a given imperfect-information EFG, each focusing on one of the three conditions in Definition~\ref{def:pbe_agm11}. We present pseudocode and written descriptions of verification methods \textsc{IsSequentRational}, \textsc{SatisfiesBayes}, and \textsc{IsConsistent}, respectively, in App.~\ref{app:algo}; since PBE-CFR applies to two-player EFGs, we present the two-player versions of these procedures as Algorithms~\ref{alg:verify_seq_rat},~\ref{alg:verify_bayes}, and~\ref{alg:verify_agm} respectively, but they can be naturally extended to an arbitrary number of players. 
In the rest of the paper, we will sometimes use $\bsigma(I)(a)$ to denote the probability assigned to action $a$ at information set $I$ by the strategy profile $\bsigma$ (i.e., $\sigma_j(a|I)$ where $j$ is the player active at $I$). 

We now present our central contribution PBE-CFR, an algorithm for computing a PBE of a given EFG; Algorithms~\ref{alg:pbe_cfr},~\ref{alg:pbe_traverse}, and~\ref{alg:update_beliefs} provide the pseudocode for the main algorithm and its subroutines. PBE-CFR is an adaptation of CFR that minimizes what we call the \term{believed regret} of playing $\bsigma$ at each information set given a belief system $\mu$ while keeping $\mu$ consistent. 

Let $U^B_j \left( \mu^t, \left.\bsigma^t\right|_{I \rightarrow a} \given I \right)$ denote the \term{believed action utility} of playing action $a$ at $I$ in iteration $t$ of the algorithm. It can be computed in a way similar to $U^B_j\left(\bsigma, \mu \given I\right)$ in Equation~\eqref{UB_comp} except for marginalization over $A_j(I)$. In addition, we define
\[R^T_{j, \mathit{imm}}(I)(a) \coloneq \frac{1}{T} \sum_{t=1}^T \left[ U^B_j \left( \mu^t, \left.\bsigma^t\right|_{I \rightarrow a} \given I \right) - U^B_j \left( \mu^t, \bsigma^t \given I \right) \right]\]
Then, the \term{immediate believed regret} of playing $\bsigma$ at information set $I$ at timestep $T$ is  given by
\begin{align*}
    R^T_{j, \mathit{imm}}(I) &\coloneq \max\limits_{a \in A_j(I)} R^T_{j, \mathit{imm}}(I)(a)
 \end{align*}
 An action $a^*\in\argmax R^T_{j, \mathit{imm}}(I)$ is a local best response given $I$ was reached. We now have all the notation we need for the pseudocode of PBE-CFR (Algorithm~\ref{alg:pbe_cfr}) and a sketch of all associated proofs (Section~\ref{sec:pbe_proof}; see App.~\ref{app:proofs} for details). 

\begin{algorithm}[htp!]
\small
\caption{\textsc{PBE-CFR}}
\label{alg:pbe_cfr}
\begin{algorithmic}[1]
\Require{Input game $G$, number of timesteps $T$}

\For{$I \in G$}
\State{$j = V(I)$}
\State{$\bsigma^1(I)(a) \gets \frac{1}{\vert A_{j}(I) \vert}$ for all $a \in A_{j}(I)$}
\State{$\mu(h \given I) \gets \frac{1}{\vert I(h) \vert}$ for all $h \in I$}
\State{Initialize $R^T_{j, \mathit{imm}}(I)(a) \gets 0$ for all $a \in A_{j}(I)$}
\State{Initialize cumulative infoset strategy weights $S_I(a) \gets 0$ for all $a \in A_{j}(I)$}
\State{Initialize $U^E(\cdot \given h) = 0$ for all $h \in I$}
\State{Initialize $U^B(\cdot \given I) = 0$}
\EndFor

\For{$t \in \{1, \dotsc, T\}$}
\State{$U^E(\bsigma^t \given \emptyset) \gets \textsc{TraverseWithBeliefs}\left( G, \emptyset, U^E, \bm{1}_{3}, \bsigma^t, \mu^t \right)$} 
\State{$\mu \gets \textsc{UpdateBeliefs}(G, \bsigma^{t + 1})$}
\EndFor
\For{$I \in G$}
\State{$\bsigma^{*}(I) \gets \textsc{Average}\left( \{\bsigma^t(I)\}_{t = 1}^T \right)$}
\EndFor
\State{$\mu^{*} \gets \textsc{UpdateBeliefs}(G, \bsigma^{*})$}\\
\Return $\bsigma^{*}, \mu^{*}$
\end{algorithmic}
\end{algorithm}


We now describe the scheme of PBE-CFR in terms of two major but natural modifications to the original CFR algorithm \citep{cfrm07}. First, in CFR, the counterfactual regrets of player $j$'s strategy at information set $I$ are weighted by the probability that $I$ was reached by $\bsigma_{-j}$, given that player $j$ played to reach $I$. Furthermore, when computing the average strategy for $I$ at the end of CFR, every strategy $\bsigma^t_j(I)(a)$ is weighted by the likelihood $r_j(\bsigma^t, I)$ of that state being reached by $I$. Instead in PBE-CFR, we compute the believed utility $U^B(\bsigma, \mu \given I)$ at every information set $I$ given strategy profile $\bsigma$ and belief system $\mu$ (Definition~\ref{def:seq_rat}), \textit{given that it was reached} by $\bsigma$. Hence, we exclude the aforementioned probability of reaching $I$ associated with $\bsigma_{-j}$ as part of $U^B(\bsigma, \mu \given I)$ and also $r_j(\bsigma^t, I)$ at the end when computing the average strategy. Moreover, the immediate believed regret $R^T_{j, \mathit{imm}}(I)$ is computed cumulatively using the strategy $\bsigma^t$ at timestep $t$, the belief that node $h \in I$ has been reached $\mu^t(h \given I)$ at timestep $t$, and the expected utility of taking each action $a \in A(I)$ at node $h$, $U^E_j\left(\left.\bsigma^t\right|_{I \rightarrow a} \given \mathit{ha}\right)$. $U^E_j$ is computed separately during recursive calls to \textsc{TraverseWithBeliefs} (Algorithm~\ref{alg:pbe_traverse}).


The second change is that after updating $\bsigma$ for timestep $t+1$ and returning from the original call to \textsc{TraverseWithBeliefs}, $\mu$ is also updated for the next timestep using \textsc{UpdateBeliefs} (Algorithm~\ref{alg:update_beliefs}). \textsc{UpdateBeliefs} first constructs a plausibility order $\order$ given $\bsigma^{t+1}$ and then computes $\mu^{t+1}$ for each information set $I$ as follows.
If $I$ is off of the equilibrium path, the nodes of $I$ are divided into two tiers according to their relative plausibilities in $\order$, with the most plausible nodes being added to set $V$. $\mu^{t+1}(\cdot \given I)$ is set to the uniform distribution over all nodes in $V$ and 0 for all nodes excluded from $V$. If $I$ is on the equilibrium path, meaning $r(I, \bsigma^{t+1}) > 0$, then $\mu^{t+1}$ is updated using Bayes' rule and the reach probabilities of each node in $I$ given $\bsigma^{t+1}$.



\section{Theoretical results}\label{sec:pbe_proof}

Our first result establishes that the space and time complexity of PBE-CFR is polynomial as a function of the input game size and the number of timesteps $T$. The proof is relegated to App.~\ref{app:proof_pbe_complex}.
\begin{theorem}\label{thm:pbe_complex}
The worst-case space and time complexities of PBE-CFR are $O(\lvert H \rvert \cdot \lvert A_{max} \rvert^2)$ and $O(T \cdot \lvert H \rvert \cdot \lvert A_{max} \rvert^2)$ respectively, where $A_{max}$ is the largest action set across all players' information sets.
\end{theorem}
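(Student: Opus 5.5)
The plan is to account separately for the storage held by PBE-CFR and for the work done in one iteration. Then multiply the per-iteration cost by $T$ and add the one-time initialization and averaging costs. Throughout, I use that the information sets partition the decision nodes, so $\sum_{I} \lvert I \rvert \le \lvert H \rvert$ and the number of information sets is at most $\lvert H \rvert$. I also use that every node has at most $\lvert A_{max} \rvert$ children, so the number of edges is $O(\lvert H \rvert)$. I treat chance outcome sets $X(h)$ as bounded by $\lvert A_{max} \rvert$ as well, or absorb them into the same bound.

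\emph{Space.} The per-information-set tables of Algorithm~\ref{alg:pbe_cfr} are $\bsigma^t(I)$, $S_I$, $R^T_{j,\mathit{imm}}(I)(\cdot)$ and $U^B(\cdot \given I)$. Each has $O(\lvert A_{max} \rvert)$ entries, giving $O(\lvert H \rvert \cdot \lvert A_{max} \rvert)$ in total. The per-node quantities are $\mu(h \given I)$ and $U^E(\cdot \given h)$. The latter is stored per node, and when \textsc{TraverseWithBeliefs} needs it per action it becomes $U^E_j(\bsigma^t|_{I\to a} \given ha)$. These cost at most $O(\lvert H \rvert \cdot \lvert A_{max} \rvert)$.

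The dominant term comes from \textsc{UpdateBeliefs}, which materializes a plausibility order $\order$. I intend to argue that $\order$ need not be stored as a full $\lvert H \rvert^2$ relation. It suffices to store a rank (tier) per node, and while building it to keep, for each node, comparison information against its siblings and actions. Each of the $\lvert H \rvert$ nodes then carries $O(\lvert A_{max} \rvert^2)$ data, for instance action-pair comparisons needed to check the second condition of Definition~\ref{def:plaus} uniformly across $I(h)$. This yields the claimed $O(\lvert H \rvert \cdot \lvert A_{max} \rvert^2)$ bound, which dominates the others. The recursion stack of the traversal is at most the tree depth, which is $\le \lvert H \rvert$.

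\emph{Time.} I bound one call to \textsc{TraverseWithBeliefs} from the root. It visits each node once. At a decision node $h \in I$ it loops over actions $a \in A_j(I)$ to obtain $U^E_j(\cdot \given ha)$. To form the believed action utilities and regrets at $I$ it combines these with $\mu^t(h \given I)$ and with $\bsigma^t(I)$. If the regret-matching update of $\bsigma^{t+1}(I)$ is charged per node, it costs another factor of $\lvert A_{max} \rvert$, which gives $O(\lvert A_{max} \rvert^2)$ per node and $O(\lvert H \rvert \cdot \lvert A_{max} \rvert^2)$ per traversal.

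\textsc{UpdateBeliefs} has three parts:
\begin{itemize}
\item The reach probabilities $r(h,\bsigma^{t+1})$ come from one top-down pass costing $O(\lvert H \rvert)$.
\item The plausibility order is built by a pass that assigns each child $ha$ the tier of $h$ when $\bsigma(a)>0$ and a strictly larger tier otherwise. Checking consistency across $I(h)$ costs $O(\lvert A_{max} \rvert)$ per node, or $O(\lvert A_{max} \rvert^2)$ if pairwise action comparisons are done.
\item The belief update for each $I$ takes the minimum tier over its members, or applies Bayes' rule, in $O(\lvert I \rvert)$.
\end{itemize}
Summing gives $O(\lvert H \rvert \cdot \lvert A_{max} \rvert^2)$ per iteration. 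Initialization, the final \textsc{Average}, and the last \textsc{UpdateBeliefs} are each within the per-iteration bound, so the total is $O(T \cdot \lvert H \rvert \cdot \lvert A_{max} \rvert^2)$.

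The main obstacle is \textsc{UpdateBeliefs}. A naive construction of a total preorder over $H$ could cost $\Theta(\lvert H \rvert^2)$ or require sorting. I will therefore first show that the order produced by Algorithm~\ref{alg:update_beliefs} is determined by a single depth-first labelling of tiers. The tier of a node is the number of zero-probability edges on its path, which is well defined and consistent across information sets under perfect recall. This labelling satisfies Definition~\ref{def:plaus} and keeps the per-node work at $O(\lvert A_{max} \rvert^2)$ at most.
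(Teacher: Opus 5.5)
Your bound holds for the implementation you describe, but you reach it by a different route than the paper, and one announced step is not true of the pseudocode as written.

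\textbf{The paper's route.} The paper counts what \textsc{ConstructOrderGivenProfile} materializes. At each node $h$ it stores $\lvert A_j(I)\rvert$ parent--child relations plus $\lvert V\rvert\cdot\lvert W\rvert\le\frac14\lvert A_j(I)\rvert^2$ sibling relations. This gives $O(\lvert H\rvert\cdot\lvert A_{max}\rvert^2)$ both for storing $\order$ and for each call to \textsc{UpdateBeliefs}. The traversal is charged only $O(\lvert H\rvert)$ per iteration. Those explicit sibling pairs are the only source of the square.

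\textbf{Your route.} You encode the order as one tier per node. This labelling satisfies Definition~\ref{def:plaus} and is compatible with every relation the pseudocode adds. You then take the off-path support to be the minimum-tier members of $I$, in $O(\lvert I\rvert)$. With tiers alone, your accounting actually gives $O(\lvert H\rvert)$ space and $O(T\cdot\lvert H\rvert)$ time. The action loops over all nodes sum to $\lvert H\rvert-1$, and regret matching at $I$ costs $O(\lvert A_j(I)\rvert)$, not $O(\lvert A_j(I)\rvert^2)$. So the extra factors you insert to reach $\lvert A_{max}\rvert^2$ are padding: harmless for an upper bound, but not where the square genuinely arises.

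\textbf{What your route buys.} It covers a step the paper's proof never charges. As written, the off-path branch of Algorithm~\ref{alg:update_beliefs} loops over all $\binom{I}{2}$ pairs. That costs $\Theta(\lvert H\rvert^2)$ per call when an off-path information set contains a constant fraction of the nodes, even with $\lvert A_{max}\rvert=2$. Your linear minimum-tier scan is what keeps that step inside the stated bound.

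\textbf{The price.} You are analyzing a tier-based realization of \textsc{UpdateBeliefs}, not its literal pseudocode. The lemma you announce, that the order produced by Algorithm~\ref{alg:update_beliefs} is determined by the tier labelling, does not hold as stated. The pseudocode builds $V$ only from relations stored directly between members of $I$. For example, under the profile of Example~\ref{ex:agm_consistency}, no relation between $b$ and $c$ is stored. The pseudocode therefore returns $V=\emptyset$ at $\{b,c\}$, even though both nodes have minimum tier.

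Present the tier construction instead as an implementation choice that matches the paper's prose (``the most plausible nodes being added to set $V$''). Also count only zero-probability player actions as tier-increasing, not chance outcomes, since Definition~\ref{def:plaus} requires $he\precsim h$ for every chance outcome $e$.
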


Next, we prove that PBE-CFR is guaranteed to converge to a PBE for two-player zero-sum EFGs. We will use the concept of \term{local sequential rationality} which means that the property of sequential rationality at information set $I$ holds for all strategies that differ from $\bsigma$ only at $I$. \citet{hendon94} state that if beliefs $\mu$ are consistent, we need only consider these local deviations
at each information set $I$ in order to verify sequential rationality for $I$. The one-shot deviation principle follows:
\begin{definition}[One-shot deviation]\label{def:one_shot}
    Let $(\bsigma, \mu)$ be an assessment that satisfies \term{local sequential rationality} at every information set, meaning that   
    the property of sequential rationality at information set $I$ holds for all strategies that differ from $\bsigma$ only at $I$. If $(\bsigma, \mu)$ is also consistent, then $(\bsigma, \mu)$ is sequentially rational and therefore a sequential equilibrium.
\end{definition}
We prove that the assessment $(\bsigma^{*}, \mu^{*})$ returned by PBE-CFR satisfies sequential rationality at every player information set.
\begin{theorem}\label{thm:PBECFR}
In a two-player zero-sum game, for any information set $I \in \cI_j$ , $j \neq 0$, a consistent assessment $(\bsigma^{*}, \mu^{*})$, and any strategy profile $\sigma'_j \in \Sigma_j$,
\begin{equation*}
U^B_j\left(\sigma'_j \bsigma^{*}_{-j}, \mu^{*} \given I\right) \leq U^B_j\left(\bsigma^{*}, \mu^{*} \given I\right).
\end{equation*}
\end{theorem}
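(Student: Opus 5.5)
The plan is to reduce the statement, via the one-shot deviation principle (Definition~\ref{def:one_shot}), to a local statement at each information set. Then I would establish that local statement by a CFR-style regret argument in which counterfactual values are replaced by believed values.

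\textbf{Step 1: reduction to local deviations.} Since $(\bsigma^{*},\mu^{*})$ is assumed consistent (it is produced by \textsc{UpdateBeliefs}, hence AGM-consistent and Bayesian), Definition~\ref{def:one_shot} lets us consider only deviations $\sigma'_j$ that differ from $\bsigma^{*}$ at the single information set $I$. By \eqref{UB_comp}, $U^B_j$ is linear in $\sigma_j(\cdot\given I)$. So it suffices to show, for every $a\in A_j(I)$,
\[
U^B_j\left(\mu^{*},\left.\bsigma^{*}\right|_{I\to a}\given I\right)\le U^B_j\left(\bsigma^{*},\mu^{*}\given I\right).
\]

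\textbf{Step 2: vanishing immediate believed regret.} At each $I$, PBE-CFR runs regret matching on the per-iteration believed action utilities $U^B_j(\mu^t,\bsigma^t|_{I\to a}\given I)$. These are bounded by $\Delta=\max|u|$, since $\mu^t(\cdot\given I)$ is a distribution and no reach weights are used. The standard regret-matching (Blackwell) bound then gives
\[
R^T_{j,\mathit{imm}}(I)\le \Delta\sqrt{|A_j(I)|}/\sqrt{T}\to 0,
\]
regardless of how $\mu^t$ and the downstream strategies vary with $t$, because regret matching is adversarially robust. Summing over a bottom-up induction along the tree, as in the CFR theorem of \citet{cfrm07}, bounds the full (non-local) believed regret at $I$ by the sum of immediate believed regrets at $I$ and its successor information sets of $j$.

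\textbf{Step 3: passage from time averages to $(\bsigma^{*},\mu^{*})$.} I would argue with a subsequence and limits: $\bsigma^{*}$ is the uniform average of the $\bsigma^t$, and $\mu^{*}=\textsc{UpdateBeliefs}(\bsigma^{*})$. On reachable information sets, $\mu^t\to\mu^{*}$ by continuity of Bayes' rule once the support stabilizes. On unreachable ones, $\mu$ is determined by the plausibility order, which depends only on supports; since there are finitely many supports, I would show that eventually the support of $\bsigma^t$ coincides with that of $\bsigma^{*}$. In the two-player zero-sum case, the opponent's average is near-minimax, so $\bsigma^{*}_{-j}$ is fixed in the limit. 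The believed utility is then bilinear in $(\sigma_j(\cdot\given I),\ \text{continuation strategies})$ for fixed $\mu$, and the average of the per-iteration utilities converges to the utility at the average.

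\textbf{Main obstacle.} The hard step is Step 3. Believed utility depends on $\mu^t$, which changes discontinuously with supports, and the utility at averaged strategies is not the average of utilities when strategies at several information sets vary jointly. I would first try to prove support stabilization, that is, that actions with positive average weight are played with positive weight in all late iterations. Failing that, I would weaken the conclusion to $\epsilon$-sequential rationality with $\epsilon\to0$ as $T\to\infty$ and take limits.
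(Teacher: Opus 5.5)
Your Steps~1--2 follow the paper's route: the one-shot deviation reduction is Lemma~\ref{lemma:localBR}, and the regret-matching bound on $R^T_{j,\mathit{imm}}(I)$ is Lemma~\ref{lemma:imm_bel_reg}. Everything hinges on Step~3, which you leave open. The paper does not supply it either. It simply asserts that $R^{*}_{j,\mathit{imm}}(I)\le R^T_{j,\mathit{imm}}(I)$ holds ``by definition''. That is exactly the passage from time-averaged believed regrets to the regret at $(\bsigma^*,\mu^*)$ that you rightly refuse to take for granted.

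The devices you propose for Step~3 do not work. Regret-matching iterates need not converge; $\bsigma^*$ is only their Ces\`aro average. So ``$\mu^t\to\mu^*$ by continuity of Bayes' rule'' has no basis, and Bayes' rule, a ratio of products of behavior probabilities, does not commute with averaging.

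Support stabilization is false as you state it. An action that regret matching plays only during an initial transient keeps average weight $\Theta(1/T)>0$ but has weight $0$ in every late iterate. An information set can therefore be unreachable under every late $\bsigma^t$, so its beliefs come from the plausibility rule in \textsc{UpdateBeliefs}. Yet it is reachable under $\bsigma^*$, so $\mu^*$ there comes from Bayes' rule applied to the transient.

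``The opponent's average is near-minimax'' is also unsupported. PBE-CFR drops the reach weights from both the regrets and the average, and those weights are what CFR's average-strategy guarantee rests on. Near-minimaxity would in any case say nothing at information sets the average barely reaches. Separately, the CFR-style bound on full believed regret in your Step~2 does not transfer, because the weights $r^{\bsigma^t,\mu^t}(I'\given I)$ linking believed values at $I$ and its successors vary with $t$. After Step~1 you do not need that bound, though.

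This is a real obstruction, not a missing technicality. Consider the following zero-sum game.
\begin{itemize}
\item Player~1 at the root chooses $v$ (payoff $-10$ to player~1), $z$ (payoff $5$), or $w$.
\item After $w$, chance leads to $h_1$ with probability $0.1$ and to $h_2$ with probability $0.9$.
\item Player~2 has the single information set $I=\{h_1,h_2\}$ with actions $c,d$.
\item Payoffs are $u_1(h_1c)=0$, $u_1(h_1d)=3$, $u_1(h_2c)=1$, $u_1(h_2d)=0$.
\end{itemize}
Under PBE-CFR the root plays $w$ with probability $1/3$ at $t=1$ and about $0.24$ at $t=2$. After that its cumulative regret for $w$ is negative and stays negative, since $u^t(w)\le 1.2<5$. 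So the root plays $z$ for all $t\ge 3$.

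From then on $I$ is unreachable under $\bsigma^t$, and \textsc{UpdateBeliefs} gives the tied nodes $h_1\sim h_2$ uniform belief. Against that belief $c$ is strictly better for player~2. Hence $\bsigma^*_2(c\given I)\to 1$ while $R^T_{2,\mathit{imm}}(I)=O(1/T)$.

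However, $\bsigma^*_1(w)=\Theta(1/T)>0$, whether or not $\bsigma^1$ is included in the average. So $\mu^*(\cdot\given I)=(0.1,0.9)$ by Bayes' rule, and against this belief $d$ beats $c$ by $0.6$. Thus $R^*_{2,\mathit{imm}}(I)=0.6\,\bsigma^*_2(c\given I)\to 0.6$, even though $(\bsigma^*,\mu^*)$ is consistent. If you take the pseudocode literally, with $V=\emptyset$ on ties, player~2 instead freezes at $(5/6,1/6)$ and the limit is $0.5$.

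So $R^*_{j,\mathit{imm}}(I)\le R^T_{j,\mathit{imm}}(I)$ fails. The theorem, read as a statement about PBE-CFR's output, cannot be proved along your Steps~1--3 or the paper's route. One would have to change the algorithm so that the output beliefs agree with those the late iterates were actually optimized against, or weaken the claim.
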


We break the proof down into lemmas and provide all omitted proofs of these lemmas in Appendices~\ref{app:proof1} through \ref{app:proof_imm_believed}. 
We first demonstrate that the immediate believed regret at any information set $I$ after running PBE-CFR for $T$ timesteps, given by $R^T_{j, \mathit{imm}}(I)$, is equal to the immediate believed regret of the average strategy $\bsigma^{*}$ given a consistent belief $\mu^{*}$.

\begin{lemma}\label{lemma:agm}
    $(\bsigma^{*}, \mu^{*})$ is an AGM-consistent assessment rationalized by plausibility order $\order$, and $\mu$ is Bayesian relative to $\order$.
\end{lemma}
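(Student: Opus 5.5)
The approach is constructive. The final call $\mu^{*} \gets \textsc{UpdateBeliefs}(G, \bsigma^{*})$ in Algorithm~\ref{alg:pbe_cfr} builds a plausibility order $\order$ from $\bsigma^{*}$ and then reads $\mu^{*}$ off that order. So the plan is to make that construction explicit, check that the resulting $\order$ satisfies the three axioms of Definition~\ref{def:plaus}, and then check the three rationalization conditions of Definition~\ref{def:agm} together with Bayes' rule on reachable information sets.

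\textbf{Step 1: define the order.} I would define $\order$ by a rank function $\kappa: H \to \mathbb{Z}_{\ge 0}$ and set $h \precsim h'$ iff $\kappa(h) \le \kappa(h')$. Put $\kappa(h_0)=0$. For a child $ha$ of $h$, set $\kappa(ha)=\kappa(h)$ if the edge has positive probability, i.e.\ $\bsigma^{*}(I(h))(a)>0$ or $P(a\given h)>0$; otherwise set $\kappa(ha)=\kappa(h)+1$.

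\textbf{Step 2: check the plausibility axioms.} Any total preorder induced by a rank function is a total preorder, so only the three axioms need checking.
\begin{itemize}
\item The first axiom holds because ranks never decrease along edges.
\item For the second axiom, some action has positive probability at every information set, so it satisfies $ha \sim h$. Whether $ha \precsim h$ depends only on $\bsigma^{*}(I(h))(a)>0$, which is constant across $I(h)$; this gives the uniformity requirement.
\item The third axiom holds if we restrict $X(h)$ to outcomes of positive probability, which is the convention of Definition~\ref{def:agm}.
\end{itemize}

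\textbf{Step 3: check the first two AGM conditions.} These hold by construction, since $h \sim ha$ exactly when the edge has positive probability.

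\textbf{Step 4: verify the belief condition and Bayes' rule.} This is the main obstacle, because it requires matching the two-tier rule in \textsc{UpdateBeliefs} to minimality of rank within $I$. I would split on whether $r(I,\bsigma^{*})>0$.
\begin{itemize}
\item If $I$ is reachable, the nodes of minimum rank in $I$ are exactly those with $r(h,\bsigma^{*})>0$. This is because every edge on the path to such an $h$ has positive probability, so $\kappa(h)=0$, while every unreachable node has $\kappa\ge 1$. Bayes' rule assigns positive mass exactly to these nodes, so both $\mu^{*}(h\given I)>0 \iff h \precsim h'$ for all $h'\in I$ and Bayesian-ness hold.
\item If $I$ is unreachable, \textsc{UpdateBeliefs} places the uniform distribution on the set $V$ of most plausible nodes of $I$ under $\order$. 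I need $V$ to be precisely $\arg\min_{h\in I}\kappa(h)$. I would argue this by induction on depth, showing that the algorithm's ``tiering'' coincides with comparing $\kappa$, for example by having it compute the number of zero-probability edges on each path. If the algorithm's tiers are coarser than this, I would redefine $\order$ so that it matches the algorithm's tiering, and then recheck the axioms.
\end{itemize}

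\textbf{Step 5: Bayesian relative to $\order$.} Finally, I would verify that $\mu^{*}$ is Bayesian relative to $\order$ in Bonanno's sense. This requires a full-support measure $\nu$ on each equivalence class of $\order$ such that, for every $I$, $\mu^{*}(\cdot\given I)$ is $\nu$ conditioned on the most plausible nodes of $I$. On rank-$0$ nodes, take $\nu \propto r(h,\bsigma^{*})$. On each higher rank class, take $\nu$ uniform, which is consistent with the uniform off-path beliefs. The subtle point is checking that an off-path information set never mixes nodes in a way that requires non-uniform weights within one rank class. I would handle this by refining the rank classes further, so that each off-path information set's minimal nodes form their own class, which is allowed as long as the axioms in Step 2 still hold.
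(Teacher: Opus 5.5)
Your Steps 1--3 and the reachable case of Step 4 are correct. Building the order from the rank function $\kappa$ is also a cleaner foundation than the paper's, since \textsc{ConstructOrderGivenProfile} does not even produce a total preorder.

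The gap is the unreachable case of Step 4, which carries all of the lemma's off-path content. You need the set $V$ computed by \textsc{UpdateBeliefs} to equal $\arg\min_{h\in I}\kappa(h)$, and for the algorithm as written this is false. \textsc{ConstructOrderGivenProfile} only relates a node to its parent or to a sibling. So even after transitive closure, two nodes of $I$ are comparable only if the path from their last common ancestor to one of them uses positive-probability edges exclusively.

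For example, let the root be a chance node with outcomes $c_1,c_2$, let player~1 play $m$ with probability $0$ at $\{c_1,c_2\}$, and let player~2 have $I_1=\{c_1m,c_2m\}$. The two nodes are incomparable, so the test ``$(h_1,h_2)\in\order$ and $h_1\prec h_2$'' never fires. Hence $V=\emptyset$, and $\mu^{*}(\cdot\given I_1)\equiv 0$ is not even a distribution, although $\arg\min\kappa=I_1$.

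Your fallback of redefining $\order$ to match the algorithm's tiering does not help, because the tiering need not be rationalizable. Let player~2 play $w$ with probability $1$ and $z$ with probability $0$ at $I_1$, and let player~1 have $I_2=\{c_1mz,c_2mw\}$, which respects perfect recall. These nodes are again incomparable, so reading $V$ as ``the minimal nodes'' gives uniform beliefs on both $I_1$ and $I_2$. But full support on $I_1$ forces $c_2mw\sim c_2m\sim c_1m\prec c_1mz$ in every rationalizing order, so $\mu^{*}(c_1mz\given I_2)>0$ is impossible.

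The missing ingredient is therefore a change to the algorithm, not to the argument: \textsc{UpdateBeliefs} must compute $V=\arg\min_{h\in I}\kappa(h)$ directly, e.g.\ by counting zero-probability edges during a traversal. With that change, your Steps 1--4 are a complete proof of AGM-consistency and of Bayes' rule on reachable sets. The paper's own proof has the same hole: it appends $h'\prec h$ to $\order$ without checking that the result is still a plausibility order.

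Step 5 has a separate problem. In Bonanno's definition, the measure $\nu_E$ on a class $E$ must also agree with the strategies inside $E$: whenever $h$ and $ha_1\cdots a_k$ both lie in its support, $\nu_E(ha_1\cdots a_k)=\nu_E(h)\prod_i\bsigma(a_i)$. Without this condition the notion adds nothing to AGM-consistency. Information sets are disjoint, so $\nu$ can be chosen separately on each one; your ``subtle point'' then never arises, and no refinement is needed.

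With the condition, uniform off-path beliefs fail. Suppose player~1 plays $b$ with probability $0$ at the root, player~2 mixes $x_1,x_2$ with probabilities $0.9,0.1$ at the singleton $\{b\}$, and player~1 has $I=\{bx_1,bx_2\}$. Then $b\sim bx_1\sim bx_2$ in every rationalizing order, and Bonanno's condition forces $\mu^{*}(bx_1\given I)=0.9$, not $1/2$. Refining the classes cannot help, because AGM-consistency forces $h\sim ha$ whenever $\bsigma(a)>0$. So Step 5 goes through only under the weaker reading that the paper's proof implicitly adopts.
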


Absent the algorithm, the immediate believed regret of the returned assessment $(\bsigma^{*}, \mu^{*})$ at information set $I$ is given by
\begin{equation*}
    R^{*}_{j, \mathit{imm}}(I) = \max_{a \in A_j(I)} U^B_j\left(\mu^{*}, \left.\bsigma^{*}\right|_{I \rightarrow a} \given I\right) - U^B_j\left(\mu^{*}, \bsigma^{*} \given I\right).
\end{equation*}

If the immediate believed regret after $T$ timesteps at information set $I$ given the strategy $\bsigma^t$ and belief $\mu^t$ at each timestep $t$ can be written in accordance with the domain of regret-matching, Blackwell's approachability theorem applies, and convergence is guaranteed for two-player zero-sum games. In a zero-sum game, the range of utilities to player $j$ is $\Delta_{u, j} = \max\limits_{z \in Z} u_j(z) - \min\limits_{z \in Z} u_j(z)$; given this range, we have the following lemma for convergence:
\begin{lemma}\label{lemma:imm_bel_reg}
For any information set $I \in \cI_j$ in a two-player zero-sum game, where $R^{*}_{j, \mathit{imm}}(I)$ denotes the immediate believed regret of the average strategy $\bsigma^{*}$ given belief $\mu^{*}$ at $I$ and $R^T_{j, \mathit{imm}}(I)$ denotes the cumulative immediate believed regret at $I$ after $T$ timesteps,
\begin{equation*}
    R^{*}_{j, \mathit{imm}}(I) \leq R^T_{j, \mathit{imm}}(I) \leq \varepsilon,
\end{equation*}
satisfying local sequential rationality for large enough $T$ where
\begin{equation*}
    T \leq \left( \frac{\Delta_{u, j} \vert A_j(I) \vert}{\varepsilon} \right)^2.
\end{equation*}
\end{lemma}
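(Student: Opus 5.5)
The plan is to prove the two inequalities separately: first $R^{*}_{j,\mathit{imm}}(I)\le R^T_{j,\mathit{imm}}(I)$, then $R^T_{j,\mathit{imm}}(I)\le\varepsilon$ once $T$ is of order $(\Delta_{u,j}\lvert A_j(I)\rvert/\varepsilon)^2$.

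\emph{Second inequality.} At each timestep $t$, the updates of PBE-CFR at $I$ are exactly regret matching on the vector of instantaneous believed regrets
\[
r^t(a)=U^B_j\!\left(\mu^t,\left.\bsigma^t\right|_{I\to a}\given I\right)-U^B_j\!\left(\mu^t,\bsigma^t\given I\right).
\]
Since $\mu^t(\cdot\given I)$ is a probability distribution, both believed utilities are convex combinations of leaf utilities $u_j(z)$. Hence $\lvert r^t(a)\rvert\le\Delta_{u,j}$ for every $a$ and $t$.

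The standard Blackwell-approachability bound for regret matching (as used for immediate counterfactual regret in \citet{cfrm07}) then gives
\[
R^T_{j,\mathit{imm}}(I)\le\frac{\Delta_{u,j}\sqrt{\lvert A_j(I)\rvert}}{\sqrt T}\le\frac{\Delta_{u,j}\lvert A_j(I)\rvert}{\sqrt T}.
\]
Setting the right-hand side at most $\varepsilon$ yields the stated threshold $(\Delta_{u,j}\lvert A_j(I)\rvert/\varepsilon)^2$, above which the regret is at most $\varepsilon$. Note that this step does not need the belief to be fixed across iterations. Regret matching is an adversarial online learner, so a payoff vector that changes with $t$ through $\mu^t$ is allowed.

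\emph{First inequality.} This is the main obstacle. I would write the believed action utility as $\sum_{h\in I}\mu(h\given I)\,U^E_j(\bsigma\given ha)$, which is linear in $\mu$. In the zero-sum setting, the continuation values $U^E_j(\bsigma\given ha)$ depend on the opponent's strategy below $I$, and averaging these over $t$ gives values under $\bsigma^{*}$. The key is to argue that $\mu^{*}$ agrees with the beliefs $\mu^t$ that the algorithm used, at least in the limit. This relies on Lemma~\ref{lemma:agm}: $\mu^{*}$ is produced by \textsc{UpdateBeliefs} on $\bsigma^{*}$, and the plausibility-order tiering depends only on the support of the strategy.

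I would first try to show that supports stabilize along the iterates. For reachable $I$, Bayes' rule is continuous in $\bsigma$, so $\mu^{*}$ is the limit of the $\mu^t$. For unreachable $I$, the tiering is constant once supports are fixed. With agreement of beliefs in hand, I would bound $\max_a$ of the average of $r^t(a)$ from below by $\max_a$ of $r^{*}(a)$. Concretely, $\frac1T\sum_t U^B_j(\mu^t,\bsigma^t\given I)$ should be compared with $U^B_j(\mu^{*},\bsigma^{*}\given I)$, using the minimax and linearity structure of the zero-sum game. In that game, player $j$'s averaged play at $I$ is at least as good against the opponent's average as its average realized value.

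I expect that this step may hold only up to a vanishing error term rather than exactly. If so, I would absorb that error into $\varepsilon$ by increasing $T$.

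Finally, local sequential rationality follows directly. Since $R^{*}_{j,\mathit{imm}}(I)\le\varepsilon$, no single-action deviation at $I$ gains more than $\varepsilon$ in believed utility. Combined with Definition~\ref{def:one_shot}, this feeds into Theorem~\ref{thm:PBECFR}.
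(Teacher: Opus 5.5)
\paragraph{The second inequality.} This part of your proposal is sound and is essentially the paper's argument. At $I$ the algorithm runs regret matching on $u^t(a)=U^B_j(\mu^t,\bsigma^t|_{I\to a}\given I)$. The instantaneous regrets lie in $[-\Delta_{u,j},\Delta_{u,j}]$ because believed utilities are convex combinations of leaf payoffs, and Blackwell's bound holds against adaptively chosen payoffs. This gives $R^T_{j,\mathit{imm}}(I)\le\Delta_{u,j}\sqrt{\lvert A_j(I)\rvert/T}$, which is sharper than the paper's constant. You also rightly read the condition on $T$ as a lower bound, not the upper bound printed in the statement.

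\paragraph{The gap: the first inequality.} The gap is $R^*_{j,\mathit{imm}}(I)\le R^T_{j,\mathit{imm}}(I)$. The paper only asserts it ``by definition,'' and your route to it does not work, for four reasons.
\begin{itemize}
\item Regret-matching iterates need not converge, so there is no limit to which continuity of Bayes' rule could apply, and supports need not stabilize.
\item $\bsigma^*$ is a time average whose support is the union of the iterates' supports; it is completely mixed when the uniform $\bsigma^1$ is averaged in. So $\mu^*$ is plain Bayes' rule even where the $\mu^t$ came from AGM tiering. Since Bayes' rule is nonlinear, $\mu^*$ is also not an average of the $\mu^t$.
\item Even with equal beliefs, $U^E_j(\bsigma\given ha)$ is multilinear in the behaviour strategies at every information set below $ha$, including $j$'s own. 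Hence $\frac1T\sum_t U^E_j(\bsigma^t\given ha)\ne U^E_j(\bsigma^*\given ha)$ in general.
\item Your key step, $\bsigma^*(I)\cdot u^*\ge\frac1T\sum_t\bsigma^t(I)\cdot u^t$, is a sign claim about the covariance between $\bsigma^t(I)$ and $u^t$. Zero-sum structure gives no such local control. The folk-theorem argument bounds a player's regret of the average by the sum of \emph{both} players' total, reach-weighted regrets, and PBE-CFR's believed regrets are neither.
\end{itemize}

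\paragraph{A counterexample.} The exact inequality is in fact false, even with only singleton information sets. At the root $I_0$, player~1 chooses $b$ (payoff $\tfrac14$ to player~1) or $a$. After $a$, player~2 chooses $x$ (payoff $0$) or $y$ (payoff $1$); the game is zero-sum.
\begin{itemize}
\item Player~2: regret matching from uniform gives $\sigma^t_2(x)=1$ for all $t\ge2$.
\item Player~1 at $I_0$: play is $(\tfrac12,\tfrac12)$, then $a$, then $(\tfrac12,\tfrac12)$, then $b$ for all $t\ge4$. The cumulative regrets are $(-\tfrac{T-3}{4},\tfrac14)$, so $R^T_{1,\mathit{imm}}(I_0)=\tfrac1{4T}$.
\item Average profile: $\bsigma^*$ puts $\tfrac2T$ on $a$ and $\tfrac1{2T}$ on $y$. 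Hence $R^*_{1,\mathit{imm}}(I_0)=\tfrac1{2T}-\tfrac1{T^2}>\tfrac1{4T}$ for $T>4$.
\end{itemize}
Averaging $\bsigma^2,\dots,\bsigma^{T+1}$, as the $S_I$ update in \textsc{TraverseWithBeliefs} does, gives $\tfrac3{8T}$ instead, which is also too large. Here $u^*(a)$ equals the time average of $u^t(a)$, so the violation is exactly the positive covariance your argument would need to exclude: the early uniform play at $I_0$ coincided with player~2's exploratory $y$.

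\paragraph{What would be needed.} The most one can hope for is your fallback, $R^*_{j,\mathit{imm}}(I)\le R^T_{j,\mathit{imm}}(I)+o(1)$, and that is the real content of the lemma. Proving it requires bounding the covariance term. It also requires bounding the gap between averaged continuation values and continuation values under the averaged profile and beliefs. Neither your proposal nor the paper supplies such a bound.
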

We now show that the one-shot deviation principle is satisfied, completing the proof of Theorem~\ref{thm:PBECFR}.
\begin{lemma}\label{lemma:localBR}
    For a given finite EFG $G$, any player $j$, and a consistent assessment $(\mu^{*}, \bsigma^{*})$ learned through \textsc{PBE-CFR}, if $\pi'_j = \{ a \in \argmax\limits_{a \in A(I)} R^*_{j, \mathit{imm}}(I) \}_{I \in \cI_j}$, then $\pi'_j$ is a sequential best response to $(\mu^{*}, \bsigma^{*})$ $\iff$ $\pi'_j(I)$ is a local best response to $(\mu^{*}, \pi'_j, \bsigma^{*}_{-j})$ for all $I \in \cI_j$.
\end{lemma}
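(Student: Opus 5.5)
The plan is to prove the lemma as a one-shot deviation principle for the believed-utility setting, using the consistency of $(\bsigma^{*}, \mu^{*})$ from Lemma~\ref{lemma:agm} and perfect recall. The forward direction is immediate. Suppose $\pi'_j$ is a sequential best response to $(\mu^{*}, \bsigma^{*}_{-j})$. Then at every $I \in \cI_j$ no strategy $\sigma'_j$ improves $U^B_j(\cdot, \bsigma^{*}_{-j}, \mu^{*} \given I)$. Local deviations, which change play only at $I$, are a special case. Hence $\pi'_j(I)$ is a local best response at every $I$.

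For the reverse direction I would argue by backward induction on the information sets of player $j$, ordered by depth. Perfect recall makes this ordering well defined, since every successor information set of $j$ lies strictly below $I$. The induction claim is that, for each $I$, $\pi'_j$ maximizes $U^B_j(\cdot, \bsigma^{*}_{-j}, \mu^{*} \given I)$ over all $\sigma'_j \in \Sigma_j$.

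\emph{Base case.} If $I$ has no successor information sets of $j$, then any $\sigma'_j$ affects utility from $I$ only through its action distribution at $I$. By~\eqref{UB_comp}, the believed utility is linear in $\sigma'_j(\cdot \given I)$, so it is maximized by a pure local best response. That is exactly $\pi'_j(I)$.

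\emph{Inductive step.} I would decompose $U^B_j(\sigma'_j, \bsigma^{*}_{-j}, \mu^{*} \given I)$ via~\eqref{UB_comp} as $\sum_a \sigma'_j(a \given I) \sum_{h \in I} \mu^{*}(h \given I)\, U^E_j(\cdot \given ha)$. I would then split each continuation value over the next information sets $I'$ of $j$ reached after $a$, weighting by the induced probability of reaching $I'$ from the belief-weighted nodes of $I$. The key step is to show that these continuation weights, restricted to the nodes of $I'$, are proportional to $\mu^{*}(\cdot \given I')$. Given that, the continuation value equals a positive multiple of $U^B_j(\cdot \given I')$. The inductive hypothesis then lets me replace $\sigma'_j$ below $I$ by $\pi'_j$ without loss. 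Finally, local optimality of $\pi'_j(I)$ handles the choice at $I$ itself.

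The main obstacle is the step that matches the conditional distribution over nodes of $I'$ (induced by $\mu^{*}(\cdot \given I)$ and $\bsigma^{*}$) with $\mu^{*}(\cdot \given I')$, which must hold both on and off the path. On-path, this follows from Bayes' rule, guaranteed by Definition~\ref{def:pbe_agm11} and \textsc{UpdateBeliefs}. Off-path, I would use the plausibility order $\order$ from Lemma~\ref{lemma:agm}.
\begin{itemize}
\item The nodes of $I'$ reached with positive weight from the support of $\mu^{*}(\cdot \given I)$ via $\bsigma^{*}$-positive edges are exactly the $\order$-minimal nodes of $I'$.
\item AGM-consistency says such edges preserve plausibility, so these nodes are $\order$-minimal.
\end{itemize}
The delicate point is that the weights, not just the supports, must agree. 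Here I would lean on the uniform-over-$V$ construction and the ``Bayesian relative to $\order$'' clause of Lemma~\ref{lemma:agm}, which is exactly the consistency \citet{hendon94} require. If exact proportionality fails off-path, the fallback is to appeal directly to the one-shot deviation principle (Definition~\ref{def:one_shot}) under that consistency notion.
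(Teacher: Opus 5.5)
Your skeleton is the paper's proof. The forward direction comes from restricting to local deviations. The reverse direction is backward induction over $\cI_j$: split the believed utility at $I$ into terminal payoffs that avoid $j$'s successor information sets plus continuation terms at each $I'\in\mathsf{Succ}(I,j)$, then apply the induction hypothesis at $I'$ and local optimality at $I$. Your proportionality requirement is exactly the factorization $r^{\mu^*,\pi'_j\bsigma^*_{-j}}(z\given I)=r^{\mu^*,\pi'_j\bsigma^*_{-j}}(z\given I')\,r^{\mu^*,\pi'_j\bsigma^*_{-j}}(I'\given I)$ on which the paper's argument rests. The paper, however, never derives it from the algorithm. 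It takes it as the content of the hypothesis that $(\bsigma^*,\mu^*)$ is consistent in the sense of \citet{hendon94}.

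Where you depart from that, the step fails. Your plan derives proportionality off-path from $\order$, the uniform-over-$V$ rule and the ``Bayesian relative to $\order$'' clause. It breaks in exactly the case the induction cannot avoid: $j$ itself choosing some $a$ with $\bsigma^*_j(a\given I)=0$, since you must evaluate every $a\in A_j(I)$. In that case every node of $I'$ is strictly less plausible than the support of $\mu^*(\cdot\given I)$. AGM-consistency constrains only supports, and not even those here, because nothing forces $h_1a\precsim h_2a$ when $h_1\sim h_2$. The Bayesian-relative-to-$\order$ condition only ties beliefs together within one plausibility class, so it says nothing across this jump.

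Concretely, let $I=\{h_1,h_2\}$ be reached via a fair chance move, with an option $b$ worth $0.5$ and $\bsigma^*_j(b\given I)=1$. After $h_1a$ (resp.\ $h_2a$), chance yields $y$ with probability $0.9$ (resp.\ $0.1$), and otherwise a leaf worth $0$. At $I'=\{h_1ay,h_2ay\}$, action $U$ pays $2$ at $h_1ay$ and $-2$ at $h_2ay$, while $D$ pays $0.5$ at both. The uniform belief on $I'$ is AGM-consistent (take $h_1a\sim h_2a$), and it is what the uniform-over-$V$ rule is meant to produce. Under it, $D$ is locally optimal at $I'$, and then $b$ is locally optimal at $I$ ($0.5>0.25$). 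Yet ``$a$ then $U$'' earns $0.8$ at $I$, because the true update at $I'$ is $(0.9,0.1)$.

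So the missing input is genuinely the condition of \citet{hendon94}: $\mu^*(\cdot\given I')$ must be the Bayesian update of $\mu^*(\cdot\given I)$ under every strategy of $j$ that reaches $I'$ from $I$, own deviations included. Your fallback of appealing to the one-shot deviation principle would just cite what the lemma asserts. Instead, use that consistency hypothesis only for the factorization step, as the paper does; the rest of your induction then goes through. Alternatively, if the average in Algorithm~\ref{alg:pbe_cfr} really includes the uniform $\bsigma^1$, then $\bsigma^*$ is completely mixed and every information set is reached. Your on-path Bayes argument then suffices everywhere: by perfect recall, $\mu^*(h\given I)$ is proportional to the chance-and-opponent reach of $h$, which factors along the path to $I'$.
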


\begin{lemma}\label{lem:imm_believed}
    If local sequential rationality is satisfied at every information set by strategy $\pi'_j$, then the consistent assessment $(\bsigma^{*}, \mu^{*})$ is also sequentially rational, with $R^{*}_{j, \mathit{imm}}(I) \leq \frac{\Delta_{u, j} \vert A_j(I) \vert}{\sqrt{T}}$ at every information set.
\end{lemma}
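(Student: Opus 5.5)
The plan is to combine the one-shot deviation principle (Definition~\ref{def:one_shot}) with the bound of Lemma~\ref{lemma:imm_bel_reg} and the equivalence of Lemma~\ref{lemma:localBR}. By Lemma~\ref{lemma:agm}, the returned assessment $(\bsigma^{*}, \mu^{*})$ is AGM-consistent and its beliefs are Bayesian relative to the plausibility order $\order$ built by \textsc{UpdateBeliefs}. This is the consistency hypothesis needed to invoke the principle of \citet{hendon94}. So it suffices to show two things: local sequential rationality at every $I \in \cI_j$ lifts to full sequential rationality, and the local deviation gain is bounded by $\Delta_{u,j}\lvert A_j(I)\rvert/\sqrt{T}$.

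First, I would fix $I$ and record the quantitative bound. The believed action utilities $U^B_j(\mu^t, \bsigma^t|_{I\to a}\given I)$ lie in an interval of width $\Delta_{u,j}$. The per-$I$ update is plain regret matching on these payoffs, with no reach weighting. The standard Blackwell-approachability bound for regret matching therefore gives $R^T_{j,\mathit{imm}}(I) \le \Delta_{u,j}\sqrt{\lvert A_j(I)\rvert}/\sqrt{T} \le \Delta_{u,j}\lvert A_j(I)\rvert/\sqrt{T}$. Lemma~\ref{lemma:imm_bel_reg} supplies $R^{*}_{j,\mathit{imm}}(I) \le R^T_{j,\mathit{imm}}(I)$, which yields the displayed bound on $R^{*}_{j,\mathit{imm}}(I)$. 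In particular, every pure one-shot deviation at $I$ gains at most this amount in believed utility. This is exactly local sequential rationality, up to a vanishing $\varepsilon$, for $\bsigma^{*}$ and for the local best-response choice $\pi'_j$.

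Second, I would lift the local condition to the full one by backward induction over player $j$'s information sets, ordered by depth; perfect recall makes this order well defined. Suppose sequential rationality holds at every information set of $j$ strictly succeeding $I$. Any deviation $\sigma'_j$ can then be replaced, below $I$, by the continuation of $\bsigma^{*}_j$ without lowering the believed utility at $I$. Here I must check that beliefs at successor sets are Bayes-updated from $\mu^{*}(\cdot\given I)$ along the deviation path. That holds wherever the successors are reached with positive probability from $I$, since $\order$ makes $\mu^{*}$ Bayesian relative to it. After this replacement the deviation differs from $\bsigma^{*}$ only at $I$, so the local condition applies. Lemma~\ref{lemma:localBR} packages this step: $\pi'_j$ is a sequential best response iff it is locally best everywhere. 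Applying it with the local property just established, and noting that $\bsigma^{*}_j$ attains the same believed value at $I$ up to $R^{*}_{j,\mathit{imm}}(I)$, gives sequential rationality of $(\bsigma^{*},\mu^{*})$ as $T\to\infty$.

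The main obstacle is the lifting step at successor information sets $I'$ that are unreachable from $I$ under the deviation. There, Bayes' rule is silent, and $\mu^{*}(\cdot\given I')$ is fixed by the plausibility tiers of \textsc{UpdateBeliefs}. I would handle this by showing that such $I'$ do not affect the believed utility at $I$, because they carry zero conditional reach from $\mu^{*}(\cdot\given I)$. I would also show that their own local optimality is already covered by the per-$I$ regret bound. Together these make the induction go through regardless of how the off-path beliefs were chosen.
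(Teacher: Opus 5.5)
Your outline is the paper's own route: the bound from Lemma~\ref{lemma:imm_bel_reg}, consistency from Lemma~\ref{lemma:agm}, then the one-shot deviation principle. It has a genuine gap at the lifting step, and the paper's one-line appeal to that principle has the same gap. You assert that beliefs at $j$'s later information sets are Bayes updates of $\mu^{*}(\cdot\given I)$ along the deviation path ``since $\order$ makes $\mu^{*}$ Bayesian relative to it.'' That is exactly the preconsistency hypothesis that Hendon--Jacobsen--Sloth's principle requires, and AGM-consistency does not supply it. Suppose $I'\in\cI_j$ is reachable from $I$ only through an action $b$ with $\bsigma^{*}_j(b\given I)=0$. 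Then each node $hby$ of $I'$ is strictly less plausible than its ancestor $h\in I$. Neither Definition~\ref{def:agm} nor the Bayesian-relative-to-$\order$ condition (which only links a node to its descendants inside one equivalence class) ties the weights at $I'$ to those at $I$. \textsc{UpdateBeliefs} simply spreads mass uniformly over the top tier of $I'$.

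Here is a counterexample. Chance picks $h_1$ (probability $0.9$) or $h_2$ ($0.1$). At $I=\{h_1,h_2\}$ player $j$ takes $a$ (payoff $0.8$) or $b$. At $I'=\{h_1b,h_2b\}$ he takes $c$ (payoffs $1$ after $h_1b$, $0$ after $h_2b$) or $d$ (payoffs $0$ and $1.2$). Consider the assessment playing $a$ and $d$ with $\mu(\cdot\given I)=(0.9,0.1)$ and $\mu(\cdot\given I')=(1/2,1/2)$.
\begin{itemize}
\item It is rationalized by the order with tiers $\{\emptyset,h_1,h_2,h_1a,h_2a\}\prec\{h_1b,h_2b,h_1bd,h_2bd\}\prec\{h_1bc,h_2bc\}$, and it is Bayesian relative to that order.
\item It has zero immediate regret everywhere: $b$ followed by $d$ gives $0.12<0.8$ at $I$, and $c$ gives $0.5<0.6$ at $I'$.
\item Yet the two-step deviation $(b,c)$ gives $0.9>0.8$ at $I$.
\end{itemize}
So local optimality plus AGM-consistency does not imply sequential rationality. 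Your ``main obstacle'' paragraph treats the harmless case, successors unreachable from $I$ under the deviation. The damaging case is successors reachable from $I$ only via a deviation, and that is the one you assumed away. Lemma~\ref{lemma:localBR} cannot fill the gap. Its proof uses the factorization $r(z\given I)=r(z\given I')\,r(I'\given I)$, which is preconsistency itself; here it fails, since $0.9\neq 0.5\cdot 1$ for $z=h_1bc$. You would need either to prove that PBE-CFR's output is preconsistent, which the uniform-on-top-tier rule does not guarantee, or to change the belief update so it propagates $j$'s own beliefs by Bayes' rule along $j$'s deviations.

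Two further points on the quantitative step. First, you take $R^{*}_{j,\mathit{imm}}(I)\le R^T_{j,\mathit{imm}}(I)$ from Lemma~\ref{lemma:imm_bel_reg}, but this does not hold ``by definition.'' The quantity $U^B_j(\mu,\bsigma|_{I\to a}\given I)$ is multilinear in $\mu(\cdot\given I)$ and in both players' behavior at all later information sets. Moreover, $\mu^{*}$ is recomputed from $\bsigma^{*}$ rather than averaged, and $\bsigma^{*}$ is an unweighted average of behavior strategies. So your (correct) regret-matching bound on the iterates' average regret does not transfer to the regret of the averaged assessment without a new argument. Second, even granting preconsistency, $\varepsilon$-local optimality yields only approximate sequential rationality. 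The error accumulates over $j$'s later information sets, since the full believed regret at $I$ is bounded by a reach-weighted sum of immediate regrets below $I$. Also, ``sequentially rational as $T\to\infty$'' says nothing about any single returned assessment. The belief system $\mu^{*}$ depends discontinuously on the supports of $\bsigma^{*}$, so AGM-consistency need not survive the limit.
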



\section{Experiments}\label{sec:pbe_exp}
\subsection{Experimental Setup}\label{sec:setup}
\begin{figure}[h]
    \centering
    \includegraphics[width=0.9\columnwidth]{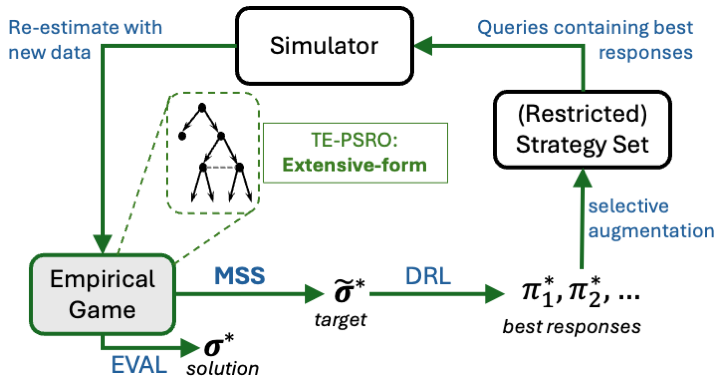}
    \caption{TE-PSRO Schematic: Empirical game is extensive-form, so PBE may be used as MSS and/or EVAL.}
    \label{fig:TEPSRO_schematic}
\end{figure}
We begin with an overview the TE-PSRO framework \citep{konicki22,konicki25} and the two parameterized classes of general-sum imperfect-information games, \gengoof and \dondG, which we use in our experiments. 


\textbf{Policy space response oracles (PSRO)} \citep{psro17} is a powerful implementation of the empirical game-theoretic analysis (EGTA) approach \citep{wellman2025empirical}. 
Given access to a simulator that encapsulates the full procedural description of a prohibitively complex game (called the \term{true game}), EGTA uses accumulated simulation data to induce a coarser but more tractable model of the game called the \term{empirical game}.
The empirical game generally covers a restricted space of the original strategy profile space. 
In iterative approaches to EGTA, analysis of the empirical game drives further refinement of the model through extension of the profile space.

In PSRO, the model is updated in the following steps, illustrated in Figure~\ref{fig:TEPSRO_schematic}). First, an arbitrary game-solving algorithm, called the meta-strategy solver (MSS) in this context, is applied to the current empirical game to obtain a solution called the \emph{target}. Then, each agent's best response (BR) to this target is approximated by a single-agent deep reinforcement learning (DRL) approach in an environment consistent with the simulator. Finally, agents' strategy sets are augmented with the respective BRs, and the simulator is queried to obtain further data (payoff information, in particular) to complete the latest empirical game iterate. Moreover, a game-solver termed EVAL, not necessarily the same as the MSS, gauges model quality and decides whether further refinement is necessary. 

In prior work  \cite{konicki22}, we introduced the Tree-Exploiting PSRO (TE-PSRO) variant where the empirical game is in extensive form, though still a coarser version of a full description of the true game. We followed up with methodological advances to improve the tradeoff between tractability and fidelity of the induced empirical game, and hence the scalability of TE-PSRO for imperfect-information games \cite{konicki25}. We devised an \emph{abstraction framework} in which each edge of the extensive-form empirical game represents a DRL-derived implicit \emph{policy} executable in the simulator, allowing much of the underlying state and observation spaces to remain implicit in the model. We also employed a parameterized heuristic to control the growth of the empirical game tree by adding edges induced by the latest BR policies at select information sets only. For a fixed integer $M$, we first estimated the gains of playing BR policies rather than target policies at candidate information sets of the current model, constructed a  softmax distribution over these information sets using those gains, and then sampled (up to) $M$ information sets from this distribution for adding edges to. We adopt this framework in this paper too and call $M$ the \term{growth parameter}.

\textbf{\gengoofK{K}} \citep{konicki25}, parameterized by a positive integer $K > 1$, generalizes the $2$-player version of the widely studied symmetric zero-sum card game Goofspiel \citep{rhoads_goof12} to $K-1$ rounds and arbitrary real-valued utilities. We start with a support of $K$ possible stochastic outcomes and a categorical distribution over them sampled uniformly at random. At the start of each round, Nature uniformly samples one outcome without replacement, re-normalizing the distribution over the residual support for the next round; then, players~$1$ and $2$ sequentially choose one of $K$ respective actions each, observing the full history of all previous rounds and the latest revealed stochastic outcome. For each triplet of stochastic outcome and players' actions, a uniformly random finite reward is sampled for each player and publicly revealed; the utility of each player on termination is the sum of rewards over all rounds. 

Additionally, we introduce a novel modified version of this game class called \textbf{\modggK{K}} which differs from \gengoofK{K} in the following way only: in each round, player~$2$ observes player~$1$'s action before moving but neither player observes the revealed stochastic outcome, the history of past rounds still being public. If the true game is \modgg, we tend to have more non-singleton information sets in empirical games in TE-PSRO iterations than those for \gengoof.

\textbf{\dondG} \citep{konicki25} is a finite-horizon negotiation game where two players engage in an alternating-offer bargaining protocol to decide how to split a public pool of indivisible items of multiple types between themselves. Each player has a vector of private valuations over item types, satisfying mild assumptions, as well as an \emph{outside offer} in the form of a private set of items of the same types. At the start of the game, Nature picks valuation vectors and outside offers from public probability distributions. Each player is also allowed to communicate to the other a binary signal (high/low) indicating whether the value of their private offer exceeds a fixed threshold; we encode the decision of whether or not to disclose this coarsened information by another binary signal (true/false) called \textit{revelation}.
The game proceeds in rounds, with players~$1$ and~$2$ sequentially taking one action each from the following options in each round: accept the other player's latest offer (if any), walk away (ending the game), or produce an offer-revelation combination. An offer takes the form of a proposed partition of the pool between the agents. If an offer is accepted by the other player in any round, the game stops, the pool is split accordingly, and each agent's realized utility is the total value of their share in the split; otherwise, negotiation fails and each agent receives their outside offer. Each agent's utility is geometrically discounted over rounds.

Detailed descriptions of these game classes along with additional references and respective parameter configurations used in our experiments are available in Appendices \ref{app:games_gengoof} (\gengoof), \ref{app:games_abs_private} (\modgg) and \ref{app:games_bargain} (\dondG).

\subsection{PBE-CFR Performance Evaluation}\label{sec:pbe_cfr_time_exp}
In our first set of experiments, we estimated the effectiveness of PBE-CFR in approximating a PBE of a general-sum imperfect-information game as well as the memory and wall time needed for convergence. We generated test games of varying complexity by running multiple iterations of TE-PSRO (which we call epochs to distinguish them from CFR/PBE-CFR iterations) on several parameterized instances of \modggK4 and \modggK5; in each epoch, we used deep Q-networks for best-response approximation, using the same methodology as \citet{konicki25}. This resulted in approximately $1200$ empirical games for \modggK4 and approximately $800$ for \modggK5 across all epochs. $2$ and $3$ GB of memory were sufficient for completing every full TE-PSRO run for \modggK4 and \modggK5 respectively on our local computing cluster using a single core.

We gauged the approximation quality of PBE-CFR by measuring how close it gets to achieving local sequential rationality, which implies sequential rationality by the one-shot deviation principle (Section~\ref{sec:pbe_proof}). Note that the solution generated by PBE-CFR satisfies the other two defining criteria of PBE by construction. We applied PBE-CFR to each of our \modgg empirical games with different values of the total number of PBE-CFR iterations $T$. For each solution, we computed the regret at each information set of not choosing another action $a \in A_j(I)$, given the assessment $(\bsigma^*, \mu^*)$, and recorded the maximum of all these regrets, termed the \term{worst-case local regret}. Table~\ref{table:seq_rat_pbe} shows the resulting worst-case local regrets, averaged over all empirical games for each \modgg variant: for all $T$, we obtain regret values of the order of $10^{-3}$ to $10^{-2}$ for leaf utilities of the order of $10^1$, with a slight reduction as $T$ increases. This suggests that PBE-CFR closely approximates a PBE of these general-sum games. 

\begin{table}[htp]
\centering
\begin{tabular}{ |c||c|c| } 
\hline
  $T$ & \modggK4 & \modggK5 \\ 
 \hline 
 500 & 0.0104 & 0.0113 \\
 \hline
 1000 & 0.0080 & 0.0099 \\
 \hline
2000 & 0.0078 & 0.0097 \\
\hline
5000 & 0.0073 & 0.0096 \\
\hline
\end{tabular}
\caption{Worst-case local regret of PBE in \modggK4\ and \modggK5\ for various values of $T$.}
\label{table:seq_rat_pbe}
\end{table}

To assess speed, we applied traditional CFR with the same values of $T$ to each empirical game in parallel to PBE-CFR, and recorded the respective running times. Figure~\ref{fig:pbe_runtimes} provides scatter plots of these running times against the sizes of the corresponding games measured in terms of the total number of information sets of both players for a representative value of $T$. PBE-CFR running times are typically larger than but of the same order of magnitude as those for CFR. The slowdown is reasonable given the additional modules that PBE-CFR needs to execute to ensure equilibrium refinement. Plots for other values of $T$, being qualitatively similar, are omitted.


\begin{figure}[htp]
\centering
\begin{subfigure}[b]{0.45\textwidth}
\centering
    \includegraphics[width=0.75\columnwidth]{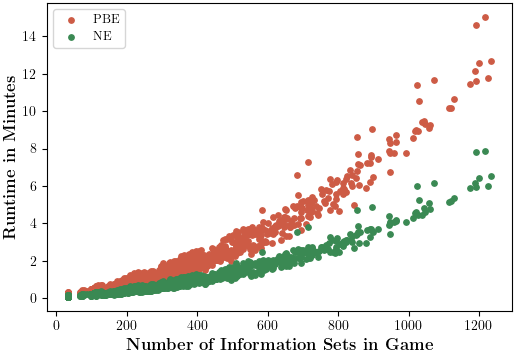}
\end{subfigure}\\
\vspace{10pt}
\begin{subfigure}[b]{0.45\textwidth}
\centering
    \includegraphics[width=0.75\columnwidth]{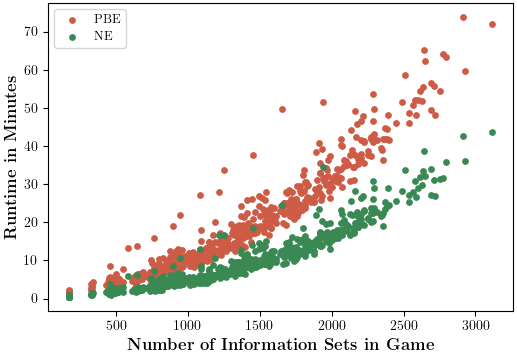}
\end{subfigure}
\caption{Time required by CFR and PBE-CFR for games generated from \modggK4\ (top) and \modggK5\ (bottom) with $T = 1000$.}
    \label{fig:pbe_runtimes}
\end{figure}

\subsection{Application to TE-PSRO as MSS}\label{sec:pbe_exp_tepsro}\label{sec:tepsro_expts}
We conducted another set of experiments to evaluate the advantage that may be gained by using PBE computed using PBE-CFR as the MSS in TE-PSRO. We used traditional CFR, which approximates NE with no guarantee of refinement, as a baseline MSS for the same true game(s) and parameter configurations.   
We drew multiple true game instances from the \dondG and \gengoofK4 classes and used several values of the growth parameter $M$ from $\{1, 2, 4, 8, 16\}$. We set $T=500$ for both CFR and PBE-CFR. 
Additionally for \gengoofK4, we experimented with different degrees of coarseness of the empirical games by specifying which rounds' stochastic event could be included in the empirical game tree, which we denote by $IR$ to refer to \term{included rounds}, zero-indexed with respect to the root; e.g., $IR=[0,1]$ means that the third and last stochastic event in the true game is necessarily abstracted away from every TE-PSRO-induced empirical game by construction. 
For each empirical game, we used the NE returned by CFR as the EVAL regardless of the MSS and used the regret of this EVAL, computed with respect to the true game, as the metric of model quality. This is the regret that we plot on the vertical axis in Figures \ref{fig:barg_pbe_mss_chap8} and~\ref{fig:abstract_pbe_mss_chap8}.

Figure~\ref{fig:barg_pbe_mss_chap8} shows how regret varies over TE-PSRO epochs for \dondG, averaged over $25$ trials. Figure~\ref{fig:abstract_pbe_mss_chap8} shows the same for \gengoofK4, averaged over $5$ trials, for each of the $3$ IR treatments. Error bars correspond to a 95\% confidence interval. These figures correspond to two representative values of $M$ for each true game class; plots for all the values of $M$ we considered are available in App.~\ref{sec:app_pbe_mss}. Figure~\ref{fig:barg_pbe_mss_chap8} does not support a clear winner for \dondG: the regret curves stay close to each other while converging to approximately zero regret, with NE and PBE slightly outperforming the other as an MSS for $M = 4$ and $M = 8$ respectively. By contrast, in Figure~\ref{fig:abstract_pbe_mss_chap8}, PBE mostly appears to outperform NE more clearly as an MSS for \gengoofK4 under the same IR treatment. 
\begin{figure*}[t]
    \centering
    \begin{subfigure}[b]{0.44\textwidth}
    \includegraphics[width=0.85\textwidth]{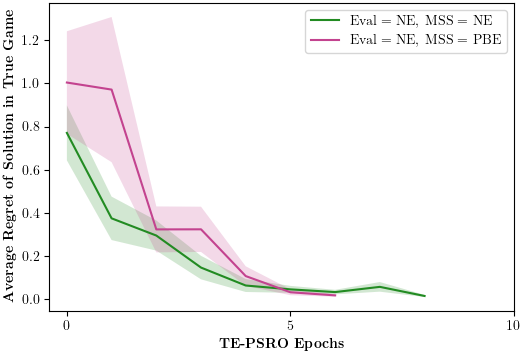}
    \caption{$M = 4$ \label{fig:pbe_barg_M4}}
    \end{subfigure}~
    \begin{subfigure}[b]{0.44\textwidth}
    \includegraphics[width=0.85\textwidth]{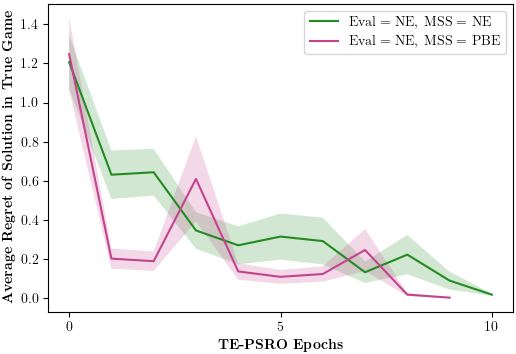}
    \caption{$M = 8$ \label{fig:pbe_barg_M8}}
    \end{subfigure}\\
    \caption{Average regret of $\bsigma^*$ evaluated in \dondG\ over the course of TE-PSRO's runtime, using NE or PBE as the MSS.}
    \label{fig:barg_pbe_mss_chap8}
\end{figure*}


\begin{figure*}[h]
    \centering
    \begin{subfigure}[b]{0.45\textwidth}
    \includegraphics[width=0.85\textwidth]{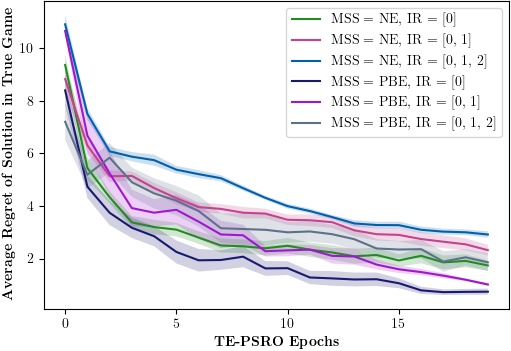}
    \caption{$M = 2$ \label{fig:pbe_abs4_M2}}
    \end{subfigure}~
    \begin{subfigure}[b]{0.45\textwidth}
    \includegraphics[width=0.85\textwidth]{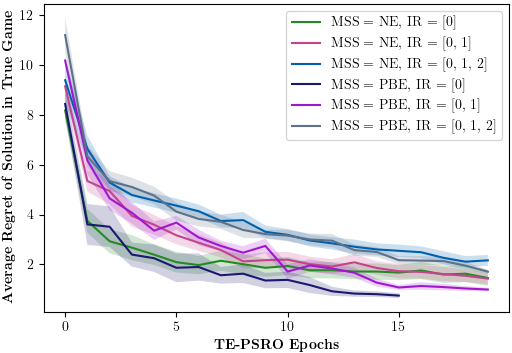}
    \caption{$M = 4$ \label{fig:pbe_abs4_M4}}
    \end{subfigure}\\
    \caption{Average regret of $\bsigma^*$ evaluated in \gengoofK4\ over the course of TE-PSRO's runtime, using NE or PBE as the MSS.}
    \label{fig:abstract_pbe_mss_chap8}
\end{figure*}

We offer a potential intuitive explanation for this difference in PBE performance between the two game classes, based on our observations of the structural evolution of the respective empirical game sequences over TE-PSRO epochs. For \gengoofK4, player 1's action in the current round is hidden from player 2. As $M$ increases, more information sets for both players have their action spaces augmented with new best response policies, leading to more non-singleton information sets belonging to player 2; this might be the reason why an MSS that incorporates beliefs for player 2 (PBE) is beneficial. For \dondG empirical games, imperfect information manifests only at the beginning of the game due to the opponent's outside offer signal being hidden, but this only persists as long as at least one agent keeps it signal hidden; thus, substantial portions of the empirical games for \dondG\ ended up containing primarily singleton information sets, rendering a refined MSS less useful.

\section{Discussion}\label{sec:disc}

We proposed the first algorithm that efficiently and effectively approximates a general PBE concept for arbitrary two-player EFGs of imperfect information.
Our algorithm specifically addresses the PBE concept defined by \citet{agm11}.
It is based on two non-trivial modifications to the classic CFR algorithm, which approximates unrefined NE.
 
 Given the ability to compute PBE, we investigate the opportunity to employ PBE for strategy exploration, as the MSS in a tree-exploiting variant of PSRO.
 We conduct experiments on two parameterized game classes, a general-sum variant on the card game Goofspiel, and a bargaining game with signaling options.
 We assess effectiveness in terms of the rate of convergence to low-regret empirical games, compared to unrefined NE as MSS. 
 Our results suggest that the benefit of PBE-as-MSS can depend significantly on structural properties of the game concerned. 
 In particular, we found the performance of PBE-as-MSS to be better for Goofspiel than for our bargaining game, as empirical game trees for the former tended to have more downstream non-singleton information sets.

 Natural future research directions include assessing PBE as an MSS for other game classes (e.g., poker) and improvements to PBE-CFR by invoking variants of CFR (Section~\ref{sec:relwork}).



\begin{acks}
This work was supported in part by the US National Science Foundation under CRII Award 2153184.
\end{acks}



\clearpage
\bibliographystyle{ACM-Reference-Format} 
\bibliography{references}


\clearpage
\onecolumn
\appendix

\section*{TECHNICAL APPENDICES}

\section{Illustrative examples}\label{sec:examples}
\subsection{Example illustrating belief systems and assessments}\label{app:main_example}
Consider the three-player, imperfect-information EFG depicted in Figure~\ref{fig:assessments_chap7} where Player 1 has two information sets $I^1_1 = \{ \emptyset \}$ and $I^2_1 = \{ \mathit{DA}, \mathit{DB} \}$, Player 2 has two singleton information sets $I^1_2 = \{ U \}$ and $I^2_2 = \{ D \}$, and Player 3 has one information set $I^1_3 = \{ \mathit{UL}, \mathit{UR} \}$. We will use it to illustrate several concepts introduced in Section~\ref{sec:prelim}. 

\begin{figure}[htp]
    \centering
    \includegraphics[width=0.53\linewidth]{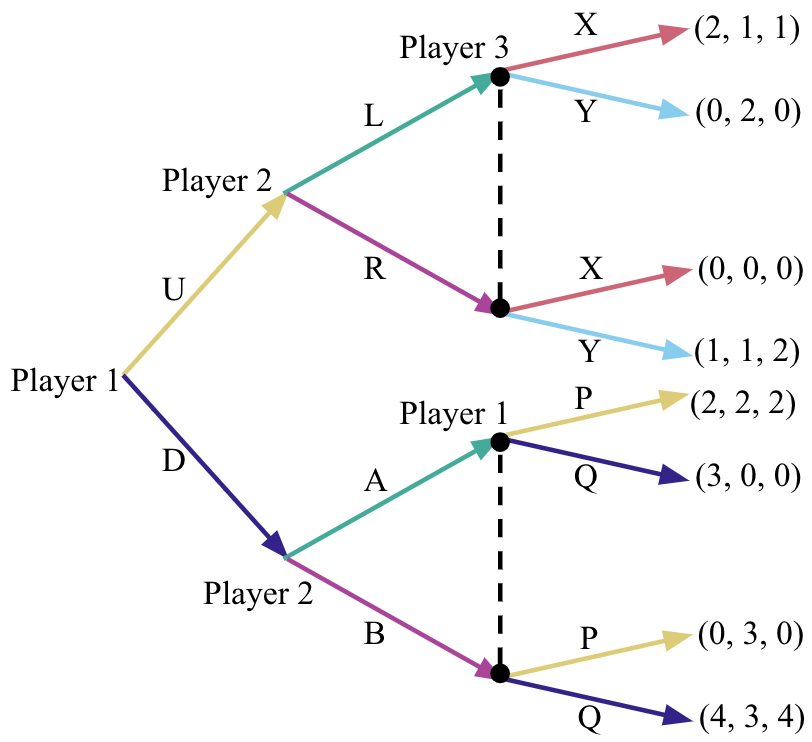}
    \caption{An imperfect information game for 3 players.}
    \label{fig:assessments_chap7}
\end{figure}

A possible strategy profile for this game is given by $\bsigma = (\sigma_1,\sigma_2,\sigma_3)$ defined as follows:
\begin{align*}
    &\sigma_1(I^1_1) = \{ U: \nicefrac{1}{3}, D: \nicefrac{2}{3} \},\ \sigma_1(I^2_1) = P; \\
    &\sigma_2(I^1_2) = \{ L: \nicefrac{1}{2}, R: \nicefrac{1}{2} \},\ \sigma_2(I^2_2) = B; \\
    &\sigma_3(I^1_3) = Y,
\end{align*}
where $\{a_1 : p_1, a_2 : p_2, \dots, a_m : p_m\}$ with $p_k \ge 0$ for every $k \in \{1,2,\dots,m\}$ and $\sum_{k=1}^m p_k=1$ represents a probability distribution over the set $\{a_1,a_2,\dots,a_m\}$; a denegerate distribution putting the entire probability mass on one action is represented by the action itself in a slight abuse of notation.

A possible system of beliefs $\mu$ for the same game is completely described by the following assignments: 
\begin{align*}
     &\mu(\emptyset \given I^1_1) = 1, \mu(\mathit{DB} \given I^2_1) = 1, \mu(\mathit{DA} \given I^2_1) = 0; \\
     & \mu(\mathit{U} \given I^1_2) = 1, \mu(\mathit{D} \given I^1_2) = 1; \\
     &\mu(\mathit{UL} \given I^1_3) = \mu(\mathit{UR} \given I^1_3) = \nicefrac{1}{2}.
\end{align*}
For the above $\bsigma$ and $\mu$, $(\bsigma,\mu)$ is a possible assessment of the game.

Let us describe salient aspects of the above assessment in plain language. Here, we assume that the players know each other's strategies. Because $\bsigma(I^1_1)$ is mixed, Player 3 believes that Player 1 will play $U$ with positive probability and hence he himself may have to move. Likewise, Player 1 knows that he may have to move a second time if he chooses to play $D$ with positive probability. If Player 3 must move, according to $\mu$, he assigns equal probability to him reaching history $\mathit{UL}$ and him reaching $\mathit{UR}$ during gameplay, given that information set $I^1_3$ has been reached. If Player 1 must move again, according to $\mu$, he believes with absolute certainty that history $\mathit{DB}$ has been reached and that history $\mathit{DA}$ will not, given that information set $I^2_1$ has been reached.
Incidentally, the assessment under consideration is AGM-consistent and compatible with Bayes' rule (see Section~\ref{sec:pbe}).

\subsection{Example of assessment violating AGM-consistency}\label{app:agm_example}
\begin{figure}[h]
    \centering
    \includegraphics[scale=0.5]{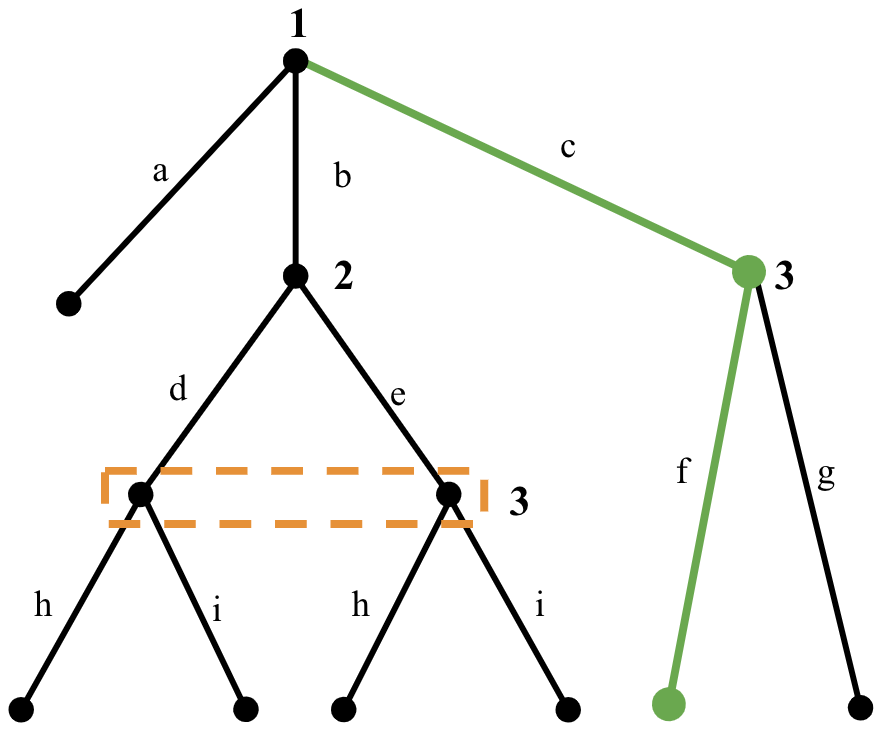}
    \caption{Reproduced example of an imperfect-information EFG from \citet{agm11} with one non-singleton information set (for Player~$3$) represented by the orange box. The equilibrium path induced by the assessment $(\bm{\sigma}, \mu)$  for which a viable plausibility order $\precsim$ cannot be constructed is highlighted in green.}
    \label{fig:not_agm_consist}
\end{figure}
Consider the three-player, imperfect-information EFG depicted in Figure~\ref{fig:not_agm_consist} and the following strategy profile for this game:
\begin{align*}
    \sigma_1(I(\emptyset)) &= c \\
    \sigma_2(I(b)) &= d \\
    \sigma_3(I(c)) &= f,\ \sigma_3(I(\mathit{bd})) = h.
\end{align*}

Using the definition of AGM-consistency given in Section~\ref{sec:pbe}, we can prove by contradiction that this strategy profile cannot produce an AGM-consistent assessment when paired with any system of beliefs $\mu$ with $\mu(\mathit{be}) > 0$, hence no such assessment can be a PBE for the game under consideration. Let us try to construct a plausibility order and check if it can be made AGM-consistent. Because $\sigma_2(d) = 1$, node $b$ must be as plausible as node $\mathit{bd}$ and more plausible than node $\mathit{be}$. It follows from transitivity that node $\mathit{bd}$ is more plausible than node $\mathit{be}$. But because Player 3's belief $\mu(\mathit{be})$ is strictly positive, node $\mathit{be}$ must be at least as plausible as node $\mathit{bd}$, the other node in the information set. This is a contradiction. 

\section{Omitted pseudocode and algorithm descriptions}\label{app:algo}
Here we provide all details of the algorithms sketched in Section~\ref{sec:alg_pbe}.

Algorithm~\ref{alg:verify_seq_rat} (\textsc{IsSequentRational}) checks whether the input assessment $(\bsigma, \mu)$ is sequentially rational (Definition~\ref{def:seq_rat}) for the input game $G$. It visits every player information set $I$ in the game for $j \in N \setminus \{ 0 \}$ and computes the believed utility $U^B_j\left(\bsigma, \mu \given I\right)$, given that $I$ is reached; it then iterates over the actions $a \in A_j(I)$, creating a new strategy profile $\bsigma'$ for each action which is identical to $\bsigma$ at every information set except $I$, where $\bsigma'(I) = a$; finally, it computes the difference between the believed utilities of $\bsigma'$ and $\bsigma$, returning \textsc{False} if $\bsigma'$ yields a higher payoff at $I$.

\begin{algorithm}[htbp!]
\small
\caption{\textsc{IsSequentRational}}
\label{alg:verify_seq_rat}
\begin{algorithmic}[1]
\Require{Game $G$, strategy profile $\bsigma$, belief system $\mu$}

\State{Acquire $\mathcal{I}_1, \mathcal{I}_2$ from input game $G$}
\For{$I \in \mathcal{I}_1 \cup \mathcal{I}_2$}
\State{$j \gets V(h \in I)$}
\State{$U^B_j\left(\bsigma, \mu \given I\right) \gets 0$}
\For{$h \in I$}
\For{$z \in Z$}
\State{$U^B_j\left(\bsigma, \mu \given I\right) \gets U^B_j\left(\bsigma, \mu \given I\right) + \mu(h \given I) \cdot r(z \given h, \bsigma) \cdot u_j(z)$}
\EndFor
\EndFor

\State{$\bsigma' \gets \mathsf{copy}(\bsigma)$}
\For{$a \in A(I)$}
\State{$\bsigma'(I)(a) \gets 1$}
\State{$U^B_j\left(\bsigma', \mu \given I\right) \gets 0$}
\For{$h \in I$}
\For{$z \in Z$}
\State{$U^B_j\left(\bsigma', \mu \given I\right) \gets U^B_j\left(\bsigma', \mu \given I\right) + \mu(h \given I) \cdot r(z \given h, \bsigma') \cdot u_j(z)$}
\EndFor
\If{$U^B_j\left(\bsigma', \mu \given I\right) > U^B_j\left(\bsigma, \mu \given I\right)$}
\Return \textsc{False}
\EndIf
\EndFor
\EndFor
\EndFor \\
\Return \textsc{True}
\end{algorithmic}
\end{algorithm}

Algorithm~\ref{alg:verify_bayes} (\textsc{SatisfiesBayes}) checks whether an assessment $(\bsigma, \mu)$ satisfies Bayes' rule at every information set of game $G$. It first ensures that $\mu$ specifies a proper probability distribution for a given information set $I$, meaning all components of the input map add up to 1; then, for every $I$ that is reachable by $\bsigma$, it computes the reach probability of every node $h \in I$ given that $I$ was visited and checks whether the probability is equal to $\mu(h \given I)$; any inequality leads \textsc{SatisfiesBayes} to return \textsc{False}.

\begin{algorithm}[htbp!]
\small
\caption{\textsc{SatisfiesBayes}}
\label{alg:verify_bayes}
\begin{algorithmic}[1]
\Require{Game $G$, strategy profile $\bsigma$, belief system $\mu$}

\State{Acquire $\mathcal{I}_1, \mathcal{I}_2$ from input game $G$}
\For{$I \in \mathcal{I}_1 \cup \mathcal{I}_2$}
\If{$\sum_{h \in I} \mu(h \given I) \neq 1$}
\Return \textsc{False}
\EndIf

\If{$r(I, \bsigma) > 0$}
\For{$h \in I$}
\If{$\frac{r(h, \bsigma)}{r(I, \bsigma)} \neq \mu(h \given I)$}
\Return \textsc{False}
\EndIf
\EndFor
\EndIf
\EndFor \\
\Return \textsc{True}
\end{algorithmic}
\end{algorithm}

Algorithm~\ref{alg:verify_agm} (\textsc{IsConsistent}) verifies AGM-consistency for $(\bsigma, \mu)$ given $G$ using two helper methods. Given the assessment, \textsc{ConstructOrderGivenProfile} constructs a plausibility order $\order$ using $\bsigma$ and the chance outcomes $X(\cdot)$ (included as part of $G$) according to the first two conditions for AGM-consistency outlined in Definition~\ref{def:agm}. Then, \textsc{UpdateOrderGivenBelief} is called to modify $\order$ according to $\mu$ one information set at a time according to the third and final condition of AGM-consistency. If any contradictions arise such as the example in Figure~\ref{fig:not_agm_consist}, \textsc{UpdateOrderGivenBelief} returns \textsc{None}, so \textsc{IsConsistent} returns \textsc{False}. 

\begin{algorithm}[htp!]
\small
\caption{\textsc{IsConsistent}}
\label{alg:verify_agm}
\begin{algorithmic}[1]
\Require{Game $G$, strategy profile $\bsigma$, belief system $\mu$}

\State{Acquire $\mathcal{I}_1, \mathcal{I}_2, X$ from input game $G$}
\State{$\order\ \gets \textsc{ConstructOrderGivenProfile}(\mathcal{I}_1 \cup \mathcal{I}_2, X, \bsigma)$}
\State{$\order\ \gets \textsc{UpdateOrderGivenBelief}(\mathcal{I}_1 \cup \mathcal{I}_2, \mu, \order)$}
\If{$\order$ is \textsc{None}}
\Return \textsc{False}
\EndIf \\
\Return \textsc{True}
\end{algorithmic}
\end{algorithm}

At each information set $I$ and every decision node $h \in I$, \textsc{ConstructOrderGivenProfile} checks which of the actions specified by the strategy $\bsigma(I)$ have positive probability. For action $a$, if the child node $ha$ is reached by $\bsigma(I)$, $\order$ is updated with a new binary relation between $h$ and $ha$, in accordance with Definition~\ref{def:agm}. Additionally, the child nodes of $h$ are grouped into set $V$ if they reached by $\bsigma(I)$ or set $W$ if not. Then for every $v \in V$ and $w \in W$, a new binary relation $v \prec w$ is added to $\order$ in accordance with transivity (Definition~\ref{def:plaus}). \textsc{ConstructOrderGivenProfile} also iterates over all the chance nodes $h \in X$; for every outcome $x \in X(h)$, a new binary relation $h \sim hx$ is also added to $\order$.

\begin{algorithm}[htp!]
\small
\caption{\textsc{ConstructOrderGivenProfile}}
\label{alg:build_order_agm}
\begin{algorithmic}[1]
\Require{List of player information sets $\mathcal{I}_1 \cup \mathcal{I}_2$, chance node distribution $X$, strategy profile $\bsigma$}

\State{$\order\ \gets \{ \}$}
\For{$I \in \mathcal{I}_1 \cup \mathcal{I}_2$}
\For{$h \in I$}
\State{$V, W \gets \emptyset$}
\For{$a \in \bsigma(I)$}
\If{$\bsigma(I)(a) > 0$}
\State{$\order\gets \order \cup \{ h \sim \mathit{ha} \}$}
\State{$V \gets V \cup \{\mathit{ha}\}$}
\Else
\State{$\order \gets \order \cup \{ h \prec \mathit{ha} \}$}
\State{$W \gets W \cup \{\mathit{ha}\}$}
\EndIf
\EndFor
\For{$(h_1, h_2) \in V \times W$}
\State{$\order \gets \order \cup \{ h_1 \prec h_2 \}$}
\EndFor
\EndFor
\EndFor
\For{$h \in X$}
\For{$x \in X(h)$}
\State{$\order \gets \order \cup \{ h \sim hx \}$}
\EndFor
\EndFor \\
\Return $\order$
\end{algorithmic}
\end{algorithm}

Given an $\order$ constructed so that assessment $(\bsigma, \mu)$ meets the first two conditions of AGM-consistency (Definition~\ref{def:agm}), \textsc{UpdateOrderGivenBelief} iterates over each non-singleton information set, since a belief distribution over a singleton information set is trivial. At information set $I$, all decision nodes in $I$ with positive belief $\mu(h \given I)$ are grouped into set $V$ and the remaining nodes are grouped into $W$. For any pairing $(h_1, h_2)$ of nodes in $V$, the subroutine requires that the pairing does not already have a relation specified in $\order$, or that $\order$ states that $h_1 \sim h_2$; if both conditions are not met, the subroutine returns \textsc{None}. \textsc{UpdateOrderGivenBelief} also checks every pairing $(h_1, h_2)$ with $h_1 \in V$ and $h_2 \in W$ to see if the pairing does not already have a relation specified in $\order$, or whether $\order$ states that $h_1 \prec h_2$; if both conditions are not met, the subroutine returns \textsc{None}.


\begin{algorithm}[htp]
\small
\caption{\textsc{UpdateOrderGivenBelief}}
\label{alg:update_order_agm}
\begin{algorithmic}[1]
\Require{List of player information sets $\mathcal{I}_1 \cup \mathcal{I}_2$, belief system $\mu$, plausibility order $\order$}

\For{$I \in \mathcal{I}_1 \cup \mathcal{I}_2$}
\State{$\textsc{Assert} \sum_{h \in I} \mu(h \given I) = 1$}
\If{$\textsc{len}(I) > 1$}
\State{$V = \{ h \mid h \in I, \mu(h \given I) > 0\}$}
\State{$W = I \setminus V$}
\For{$(h_1, h_2) \in \binom{V}{2}$}
\If{$(h_1, h_2) \in \order$}
\If{$h_1 \not\sim h_2$}
\Return \textsc{None}
\EndIf
\Else
\State{$\order \gets \order \cup \{ h_1 \sim h_2 \}$}
\EndIf
\EndFor
\For{$(h_1, h_2) \in V \times W$}
\If{$(h_1, h_2)$ is ordered in $\order$}
\If{$h_1 \not\prec h_2$}
\Return \textsc{None}
\EndIf
\Else
\State{$\order \gets \order \cup \{ h_1 \prec h_2 \}$}
\EndIf
\EndFor
\EndIf
\EndFor \\
\Return $\order$
\end{algorithmic}
\end{algorithm}

\section{Full description of PBE-CFR}\label{app:full_pbecfr}

PBE-CFR is an adaptation of CFR that minimizes the \term{believed regret} of playing $\bsigma$ at each information set given a belief system $\mu$ while keeping $\mu$ consistent. The \term{immediate believed regret} of playing $\bsigma$ at information set $I$ at timestep $T$ is therefore given by
\begin{align*}
    R^T_{j, \mathit{imm}}(I) &\coloneq \max\limits_{a \in A(I)} R^T_{j, \mathit{imm}}(I, a) \\
    &= \frac{1}{T} \max\limits_{a \in A(I)} \sum_{t=1}^T U^B_j \left( \mu^t, \left.\bsigma^t\right|_{I \rightarrow a} \given I \right) - U^B_j \left( \mu^t, \bsigma^t \given I \right) \\
    &= \frac{1}{T} \max\limits_{a \in A(I)} \sum_{t=1}^T \left( \sum_{h \in I} \mu^t(h \given I) \cdot \left( U^E_j\left(\left.\bsigma^t\right|_{I \rightarrow a} \given \mathit{ha}\right) - \sum_{a' \in A(I)} U^E_j\left(\bsigma^t \given \mathit{ha'}\right) \cdot  \bsigma^t(I)(a') \right) \right),
\end{align*}
where playing $a \in \argmax R^T_{j, \mathit{imm}}(I)$ is the local best response given $I$ was reached and where $U^B_j \left( \mu^t, \left.\bsigma^t\right|_{I \rightarrow a} \given I \right)$ denotes the \term{believed action utility} of playing any action $a$ at $I$. $D(I)$ denotes the information sets of $j$ reachable from $I$ and $\left.\bsigma\right|_{D(I) \rightarrow \pi'}$ denotes a strategy profile equal to $\bsigma$ except at the information sets of $D(I)$ where it is equal to the pure strategy $\pi'$. The \term{full believed regret} is
\begin{align*}
    R^T_{j, \mathit{full}}(I) &= \frac{1}{T} \max\limits_{\pi'_j \in \Pi_j} \sum_{t=1}^T U^B_j \left( \mu^t, \left.\bsigma^t\right|_{D(I) \rightarrow \pi'_j} \given I \right) - U^B_j \left( \mu^t, \bsigma^t \given I \right).
\end{align*}

\begin{algorithm}[htp!]
\small
\caption{\textsc{PBE-CFR}}
\label{alg:pbe_cfr_repeat}
\begin{algorithmic}[1]
\Require{Input game $G$, timesteps $T$}

\For{$I \in G$}
\State{$j = V(I)$}
\State{$\bsigma^1(I)(a) \gets \frac{1}{\vert A_{j}(I) \vert}$ for all $a \in A_{j}(I)$}
\State{$\mu(h \given I) \gets \frac{1}{\vert I(h) \vert}$ for all $h \in I$}
\State{Initialize cumulative immediate believed regrets $R^T_{j, \mathit{imm}}(I)(a) \gets 0$ for for all $a \in A_{j}(I)$}
\State{Initialize cumulative infoset strategy weights $S_I(a) \gets 0$ for all $a \in A_{j}(I)$}
\State{Initialize expected utilities $U^E(\cdot \given h) = 0$ for all $h \in I$}
\State{Initialize believed utilities $U^B(\cdot \given I) = 0$}
\EndFor

\For{$t \in \{1, \dotsc, T\}$}
\State{$U^E(\bsigma^t \given \emptyset) \gets \textsc{TraverseWithBeliefs}\left( G, \emptyset, U^E, \bm{1}_{3}, \bsigma^t, \mu^t \right)$} 
\State{$\mu \gets \textsc{UpdateBeliefs}(G, \bsigma^{t + 1})$}
\EndFor
\For{$I \in G$}
\State{$\bsigma^{*}(I) \gets \textsc{Average}\left( \{\bsigma^t(I)\}_{t = 1}^T \right)$}
\EndFor
\State{$\mu^{*} \gets \textsc{UpdateBeliefs}(G, \bsigma^{*})$}\\
\Return $\bsigma^{*}, \mu^{*}$
\end{algorithmic}
\end{algorithm}

The original CFR algorithm traverses the tree recursively, updating strategies and counterfactual regret at each information set and returning the average strategy which minimizes regret for the entire game. The average strategy returned by the algorithm converges to NE and is the extensive-form equivalent of the mixed NE that would be returned if the input game were in normal form. Therefore, the counterfactual regrets of player $j$'s strategy at information set $I$ must be weighted by the probability that $I$ was reached by $\bsigma_{-j}$, given that player $j$ played to reach $I$. Furthermore, when computing the average strategy for $I$ at the end of CFR, every strategy $\bsigma^t_j(I)(a)$ must be weighted by the likelihood of that state being reached by $I$, $r_j(\bsigma^t, I)$. Although this works when the goal is to minimize regret in the overall game, two key modifications to CFR are required in order to compute PBE since the goal is to minimize regret at every decision point in the game, alongside incorporating beliefs that follow Bayes' rule into this regret.

The first change is computing the believed utility $U^B(\bsigma, \mu \given I)$ at every information set $I$ given strategy profile $\bsigma$ and belief system $\mu$ (Definition~\ref{def:seq_rat}), \textit{given that it was reached} by $\bsigma$. This means not including the aforementioned probability of reaching $I$ associated with $\bsigma_{-j}$ as part of $U^B(\bsigma, \mu \given I)$, as defined earlier. Additionally, this means not incorporating $r_j(\bsigma^t, I)$ at the end when computing the average strategy since the algorithm minimizes regret at every information set under the assumption that it was reached. It is also important to mention that the immediate believed regret $R^T_{j, \mathit{imm}}(I)$ is computed cumulatively using the strategy $\bsigma^t$ at timestep $t$, the belief that node $h \in I$ has been reached $\mu^t(h \given I)$ at timestep $t$, and the expected utility of taking each action $a \in A(I)$ at node $h$, $U^E_j\left(\left.\bsigma^t\right|_{I \rightarrow a} \given \mathit{ha}\right)$. Thus, in order to compute $U^B$ at each information set, $U^E_j$ is computed separately during recursive calls to \textsc{TraverseWithBeliefs} (Algorithm~\ref{alg:pbe_traverse}).

The second change is that after updating $\bsigma$ for timestep $t+1$ and returning from the original call to \textsc{TraverseWithBeliefs}, $\mu$ is also updated for the next timestep using \textsc{UpdateBeliefs} (Algorithm~\ref{alg:update_beliefs}). \textsc{UpdateBeliefs} first constructs a plausibility order $\order$ given $\bsigma^{t+1}$ and then computes $\mu^{t+1}$ for each information set $I$, depending on whether $I$ is on or off the path of $\bsigma^{t+1}$ through $G$. If $I$ is off the equilibrium path, then the nodes of $I$ are divided into two tiers according to their relative plausibilities in $\order$, with the most plausible nodes being added to set $V$. $\mu^{t+1}(\cdot \given I)$ is set to the uniform distribution over all nodes in $V$ and 0 for all nodes excluded from $V$. If $I$ is on the equilibrium path, meaning $r(I, \bsigma^{t+1}) > 0$, then $\mu^{t+1}$ is updated using Bayes' rule and the reach probabilities of each node in $I$ given $\bsigma^{t+1}$.

\begin{algorithm}[htp]
\small
\caption{\textsc{TraverseWithBeliefs}}
\label{alg:pbe_traverse}
\begin{algorithmic}[1]
\Require{Input game $G$, current node $h$, expected utilities $U^E$, player reach probabilities vector $\bm{r}$, current strategy $\bsigma^t$, current beliefs $\mu^t$}

\State{$j \gets V(h)$}
\If{$h$ is terminal}
\Return{$u(h)$}
\ElsIf{$j = 0$}
\State{$\bm{r}' \gets \bm{r}$}
\State{$u^{E} = \vec{0}$}
\For{ $x \in X(h)$}
\State{$\bm{r}'_0 \gets \bm{r}_0 \cdot P(x \given h)$}
\State{$U^{E}(\mathit{hx}) \gets \textsc{TraverseWithBeliefs}\left(G, \mathit{hx}, U^E, \bm{r}', \bsigma^t, \mu^t \right)$}
\State{$u^E = u^E + P(x \given h) \cdot U^E(\mathit{hx})$}
\EndFor
\State{$U^E(h) = u^E$}

\Return $U^E$
\EndIf

\State{$I \gets I(h)$}
\State{Initialize immediate expected utilities $U^E_j(\bsigma^t \given \mathit{ha}) \gets 0$ for $a \in A_{j}(I)$}
\State{Initialize immediate believed action utilities $U^B_j(\bsigma^t, \mu^t \given I, a) \gets 0$ for $a \in A_{j}(I)$}
\For{$a \in A_{j}(I)$}
\State{$\bm{r}' \gets \bm{r}$}
\State{$\bm{r}'_j \gets \bm{r}_j \cdot \bsigma^t(I)(a)$}
\State{$U^E(\mathit{ha}) \gets \textsc{TraverseWithBeliefs}\left(G, ha, U^E, \bm{r}', \bsigma^t, \mu^t \right)$}
\State{$U^E(h) \gets U^E(h) + \bsigma^t(I)(a) \cdot U^E(\mathit{ha})$}
\State{$U^B_j(\bsigma^t, \mu^t \given I, a) \gets U^B_j(\bsigma^t, \mu^t \given I, a) + \mu^t(h \given I) \cdot U^E_j(\mathit{ha})$}
\State{$U^B_j(\bsigma^t, \mu^t \given I) \gets U^B_j(\bsigma^t, \mu^t \given I) + \bsigma^t(I)(a) \cdot U^B_j(\bsigma^t, \mu^t \given I, a)$}
\EndFor

\For{$a \in A_{j}(I)$}
\State{$R^T_{j, \mathit{imm}}(I, a) \gets R^T_{j, \mathit{imm}}(I, a) + U^B_j(\bsigma^t, \mu^t \given I, a) - U^B_j(\bsigma^t, \mu^t \given I)$}

\EndFor
\State{Update $\bsigma^{t + 1}(I)$ using $R^T_{j, \mathit{imm}}(\cdot)$ values and regret-matching}
\State{Update $S_I(a) \gets S_I(a) + \bsigma^{t + 1}(I)(a)$ for all $a \in A_j(I)$}\\

\Return $U^E(h)$
\end{algorithmic}
\end{algorithm}

\begin{algorithm}[htp]
\small
\caption{\textsc{UpdateBeliefs}}
\label{alg:update_beliefs}
\begin{algorithmic}[1]
\Require{Input game $G$, current strategy profile $\bsigma$}
\State{$\mu \gets \emptyset$}
\State{$\order\ \gets \textsc{ConstructOrderGivenProfile}(G, \bsigma)$}
\For{$I \in G$}
\State{$r(I, \bsigma) \gets \sum_{h \in I} r(h, \bsigma)$}

\If{$r(I, \bsigma) = 0$}
\State{$V \gets \emptyset$}
\For{$(h_1, h_2) \in \binom{I}{2}$}
\If{$(h_1, h_2) \in \order$ and $h_1 \prec h_2$}
\State{$V \gets V \cup \{h_1\}$}
\EndIf
\EndFor
\For{$h \in I$}
\If{$h \in V$}
\State{$\mu(h \given I) \gets \frac{1}{\vert V \vert}$}
\Else
\State{$\mu(h \given I) \gets 0$}
\EndIf
\EndFor

\Else
\State{$\mu(h \given I) \gets \frac{r(h, \bsigma)}{r(I, \bsigma)}$}
\EndIf
\EndFor \\

\Return $\mu$
\end{algorithmic}
\end{algorithm}

\section{Omitted proofs}\label{app:proofs}

\subsection{Proof of Theorem~\ref{thm:pbe_complex}}\label{app:proof_pbe_complex}
\begin{proof}
PBE-CFR requires space for the current strategy of each information set, the cumulative regrets of each information set (in this case, believed regrets), the cumulative sum of strategy weights for each information set, 
the beliefs at each information set, the expected utilities at each node, and the believed utilities at each information set at timestep $t$. These objects collectively take up $O\left( \vert H \vert \right)$ space.

During the call to \textsc{UpdateBeliefs} at timestep $t$, an order $\order$ must be constructed given $\bsigma$. The relation is constructed between each node $h$ in an information set $I$ and each of its children $ha$ given each action specified by $\bsigma(I)$, dividing its children between sets $V$ and $W$. Therefore, the total number of relations added to $\order$ through each node $h \in I$ is $\vert V \times W \vert$ + $\vert A_j(I) \vert$. However, $\vert W \vert$ is equal to $\vert A_j(I) \vert - \vert V \vert$, and 
\begin{equation*}
    \vert V \times W \vert \ = \ \vert V \vert \cdot \vert W \vert = \left( \vert A_j(I) \vert - \vert V \vert \right) \cdot \vert V \vert.
\end{equation*}

In the worst-case, $\vert V \vert = \vert W \vert$, meaning $\vert V \times W \vert = \nicefrac{1}{4} \cdot \vert A_j(I) \vert ^2$. Let $A_{max}$ denote the largest action space in the game across all players' information sets. The total space required by $\order$ in the worst case is therefore $O \left( \vert H \vert \cdot \vert A_{max} \vert ^2 \right)$.

In each of $T$ iterations, PBE-CFR must traverse the entire tree, updating cumulative believed regrets, utilities, and information set strategies. This portion of the algorithm over all $T$ steps requires time $O \left( T \cdot \vert H \vert \right)$. During each call to \textsc{UpdateBeliefs}, each information set is visited, and a new order $\order$ is constructed, as described earlier, visiting each node in the game tree once and dividing its children between two sets. This last step of \textsc{ConstructOrderGivenProfile} takes takes time $O \left( \vert A_{max} \vert ^2 \right)$ for a single node. It follows that this subroutine runs in $O \left( \vert H \vert \cdot \vert A_{max} \vert ^2 \right)$, and the total runtime of $T$ calls to \textsc{UpdateBeliefs} is $O( \left( T \cdot \vert H \vert \cdot \vert A_{max} \vert ^2 \right)$. Therefore, the runtime of PBE-CFR is $O( \left( T \cdot \vert H \vert \cdot \vert A_{max} \vert ^2 \right)$.
\end{proof}

\subsection{Proof of Lemma~\ref{lemma:agm}}\label{app:proof1}
\begin{proof}
The final step of Algorithm~\ref{alg:pbe_cfr} is to update the belief system $\mu$ given the average strategy $\bsigma^{*}$ using \textsc{UpdateBeliefs}. The plausibility order $\order$ according to $\bsigma^*$ is constructed using \textsc{ConstructOrderGivenProfile} so that the first and second requirements of AGM-consistency is satisfied. Specifically, if action $a$ is chosen at decision node $h \in H$ with positive probability $\bsigma^*(I(h))(a)$, then $h$ and $ha$ are equally plausible according to $\order$. 

Regarding the final requirement of AGM-consistency concerning $\mu$, we must separately consider those information sets that are specifically on the equilibrium path according to $r(\bsigma^*, \cdot)$ and those that are not.
In the first case, for any information set $I$ that is reached with probability $r(\bsigma^*, I) > 0$, \textsc{UpdateBeliefs} sets the belief $\mu^*(h \given I)$ according to $r$ using Bayes' rule in lines 19-21 (Algorithm~\ref{alg:update_beliefs}). All nodes $h \in I$ where $r(\bsigma^*, h) > 0$ therefore have positive belief $\mu^*(h \given I)$; let $B_+(I)$ denote these nodes. All nodes where $r(\bsigma^*, h) = 0$ have belief $\mu^*(h \given I) = 0$; let $B_0(I)$ denote these nodes. This means that at some preceding information set $I'$, $\bsigma^*$ specified an action $a$ (or actions) with positive probability, meaning $\order$ classified the nodes following $a$ (in $B_+(I)$) as more plausible than the nodes succeeding a different action (in $B_0(I)$) selected with zero probability. It also means that the nodes succeeding $a$ are all equally plausible. The transitive property of $\order$ tells us that if $h_1 \prec h_2$ and $h_1 \sim h_3$, then $h_3 \prec h_2$. This holds for all nodes succeeding $I'$, meaning that for $h_1 \in B_+(I)$ and $h_2 \in B_0(I)$, $h_1 \prec h_2$, and $h_1 \sim h'$, for all $h' \in B_+(I)$. The final requirement of AGM-consistency is therefore satisfied by $(\bsigma^*, \mu^*)$ for all information sets where $r(\bsigma^*, I) > 0$.

In the second case of an information set $I$ where $r(\bsigma^*, I) = 0$, we use the plausibility order $\order$ constructed around $\bsigma^*$ in order to extract the pairwise plausibility between any two nodes in $I$. For example, at an information set $I'$ preceding $I$, for every node $h \in I'$, suppose that $\bsigma^*$ specifies action $a$ rather than action $b$, yet the overall reach probability of $I$ given $\bsigma^*$ is 0. Suppose also that $\mathit{ha}, \mathit{hb} \in I$. $\order$ given $\bsigma^*$ indicates that $\mathit{ha} \prec \mathit{hb}$ because $b$ is chosen with zero probability. Therefore, these two nodes must be treated differently when updating $\mu^*$. \textsc{UpdateBeliefs} groups those nodes in $I$ that are all equally plausible into a ``plausibility clique" $V$, and $\mu^*(\cdot \given I)$ is uniformly distributed over all nodes in $V$. For all nodes $h \in I \setminus V$, $\mu(h \given I)$ is set to 0, and $h' \prec h$ is added to $\order$ for all $h' \in V$. This satisfies the final requirement of AGM-consistency for information sets off the equilibrium path, meaning the plausibility order $\order$ rationalizes $(\bsigma^*, \mu^*)$ and $(\bsigma^*, \mu^*)$ is AGM-consistent.

We revisit both cases to consider whether $\mu^*$ is Bayesian relative to $\order$. This is true in the first case because \textsc{UpdateBeliefs} sets the belief $\mu^*(h \given I)$ according to Bayes' rule using reach probabilities for every information set $I$ that is reachable by $\bsigma^*$. In the second case, at every information set $I$, its nodes are divided into two sets to demarcate plausibility. The most plausible nodes of $I$ yield a distribution where each is assigned positive $\mu^*$ while the least plausibile nodes have an assigned belief of 0. As stipulated by \citet{agm11}, this is sufficient. Therefore, $\mu^*$ is Bayesian relative to $\order$, meaning the second and third conditions for PBE are now met (Definition~\ref{def:pbe_agm11}).
\end{proof}

\subsection{Proof of Lemma~\ref{lemma:imm_bel_reg}}\label{app:proof2}
\begin{proof}
We expand the immediate believed regret of not playing action $a \in A_j(I)$ after running the algorithm for $T$ timesteps:
\begin{align*}
    R^T_{j, \mathit{imm}}(I)(a) &= \frac{1}{T} \sum_{t=1}^T U^B_j\left(\mu^t, \left.\bsigma^t\right|_{I \rightarrow a} \given I\right) - U^B_j\left(\mu^t, \bsigma^t \given I\right) \\
    &= \frac{1}{T} \sum_{t=1}^T U^B_j\left(\mu^t, \left.\bsigma^t\right|_{I \rightarrow a} \given I\right) - \sum_{h \in I} \mu^t(h \given I)  U^E_j\left( \bsigma^t \given h\right) \\
    &= \frac{1}{T} \sum_{t=1}^T U^B_j\left(\mu^t, \left.\bsigma^t\right|_{I \rightarrow a} \given I\right) - \sum_{h \in I} \mu^t(h \given I)  \sum_{a' \in A_j(I)} \bsigma^t(I)(a') U^E_j\left(\left.\bsigma^t\right|_{I \rightarrow a'} \given \mathit{ha'}\right) \\
    &=  \frac{1}{T} \sum_{t=1}^T U^B_j\left(\mu^t, \left.\bsigma^t\right|_{I \rightarrow a} \given I\right) - \sum_{h \in I} \mu^t(h \given I) \cdot \left( \sum_{a' \in A_j(I)} \bsigma^t(I)(a') U^E_j\left(\left.\bsigma^t\right|_{I \rightarrow a'} \given \mathit{ha'}\right) \right) \\
    &= \frac{1}{T} \sum_{t=1}^T U^B_j\left(\mu^t, \left.\bsigma^t\right|_{I \rightarrow a} \given I\right) - \sum_{a' \in A_j(I)} \bsigma^t(I)(a') \cdot \left( \sum_{h \in I} \mu^t(h \given I) \cdot U^E_j\left(\left.\bsigma^t\right|_{I \rightarrow a'} \given \mathit{ha'}\right) \right) \\
    &= \frac{1}{T} \sum_{t=1}^T U^B_j \left( \mu^t, \left.\bsigma^t\right|_{I \rightarrow a} \given I \right) - \sum_{a' \in A_j(I)} \bsigma^t(a') \cdot U^B_j \left( \mu^t, \left.\bsigma^t\right|_{I \rightarrow a'} \given I \right)
\end{align*}

Recall that regret-matching is defined in a domain where there is a fixed set of actions $A$, a function $u^t: A \rightarrow \mathbb{R}$, and a distribution over the actions $p^t$ selected at each timestep $t$. The regret of not playing action $a$ until time $T$ is given by
\begin{equation*}
    R^t(a) = \frac{1}{T} \sum_{t=1}^T u^t(a) - \sum_{a' \in A} p^t(a')u^t(a').
\end{equation*}

It is not hard to see that our expansion of $R^T_{j, \mathit{imm}}(I)(a)$ follows this format, when the belief is treated as part of the utility $u^t$ for playing the action $a$. Furthermore, the strategy $\bsigma^{t+1}$ is updated in the same fashion as in Blackwell's algorithm, computing the cumulative immediate believed regret $R^{T, +}_{j, \mathit{imm}}(I) = \max(0, R^T_{j, \mathit{imm}}(I))$. It follows that for any $h \in I$ and any $a \in A(I)$,
\begin{equation*}
   U^E_j\left(\left.\bsigma^t\right|_{I \rightarrow a} \given \mathit{ha}\right) - \sum_{a' \in A(I)} U^E_j\left(\bsigma^t \given \mathit{ha}\right) \cdot  \bsigma^t(I)(a') \leq \Delta_{u, j}.
\end{equation*}

We know that $\mu^t(\cdot \given I)$ sums to 1 over all nodes in $I$. Suppose that node $h \in I$ maximizes the difference in the expected utilities given by the previous equation. It follows that the difference in the believed utilities would be maximized when $\mu^t(h \given I) = 1$ and $\mu^t$ is equal to 0 for the remaining nodes in $I$, meaning

\begin{align*}
\sum_{h \in I} \mu^t(h \given I) \cdot \left( U^E_j\left(\left.\bsigma^t\right|_{I \rightarrow a} \given \mathit{ha}\right) - \sum_{a' \in A(I)} U^E_j\left(\bsigma^t \given \mathit{ha}\right) \cdot  \bsigma^t(I)(a') \right) &\leq \Delta_{u, j} \\
    U^B_j \left( \mu^t, \left.\bsigma^t\right|_{I \rightarrow a} \given I \right) - \bsigma^t(a) \cdot U^B_j \left( \mu^t, \left.\bsigma^t\right|_{I \rightarrow a} \given I \right) &\leq \Delta_{u, j}.
\end{align*}

Thus, for all $I \in \cI_j$, $a \in A(I)$, the immediate believed regret $R^T_{j, \mathit{imm}}(I)$ must be less than $\frac{\Delta_{u, j} \vert A_j(I) \vert}{\sqrt{T}}$, which we denote by $\varepsilon$. By definition, the immediate believed regret $R^{*}_{j, \mathit{imm}}(I)$ that results from our average strategy $\bsigma^{*}$ and associated consistent belief $\mu^{*}$ can be bounded from above by $R^T_{j, \mathit{imm}}(I)$ and therefore by $\varepsilon$ for sufficiently large $T$. 

\begin{align*}
    R^{*}_{j, \mathit{imm}}(I) &\leq R^T_{j, \mathit{imm}}(I) \leq \frac{\Delta_{u, j} \vert A_j(I) \vert}{\sqrt{T}} \equiv \varepsilon \\
    \implies T &\leq \left( \frac{\Delta_{u, j} \vert A_j(I) \vert}{\varepsilon} \right)^2
\end{align*}
\end{proof}

\subsection{Proof of Lemma~\ref{lemma:localBR}}\label{ap:proof3}
\begin{proof}
Let $\mathsf{Succ}(I, j)$ be the set of information sets $I' \in \cI_j$ that immediately succeed information set $I$. Let also $\pi'_I \pi_{j \setminus I}$ be the pure strategy where player $j$ plays $\pi'(I)$ at information set $I$ and $\pi_j$ elsewhere. We also define the probability $r^{\bsigma, \mu}(h' \given I)$ of reaching node $h'$ given that information set $I \neq I(h')$ has been reached according to assessment $(\bsigma, \mu)$:
\begin{equation*}
    r^{\bsigma, \mu}(h' \given I) = \sum_{h \in I} r(\bsigma, h' \given h) \cdot \mu(h \given I).
\end{equation*}
It follows that
\begin{align*}
    r^{\bsigma, \mu}(I' \given I) &= \sum_{h' \in I'} r^{\bsigma, \mu}(h' \given I) \\
    &= \sum_{h' \in I'} \sum_{h \in I} r(\bsigma, h' \given h) \cdot \mu(h \given I) \\
    &= \sum_{h \in I} \mu(h \given I) \sum_{h' \in I'} r(\bsigma, h' \given h).
\end{align*}

It is fairly obvious that the $(\rightarrow)$ direction is true; if a strategy is a sequential best response, it must also be a local best response at every information set.
To prove the $(\leftarrow)$ direction, we use backward induction to show that $\pi'_j$ maximizes the believed utility $U^B_j\left(\mu^{*}, \pi_j \bsigma^{*}_{-j} \given I\right)$ for all $I \in \cI_j$. First, we can reasonably assume that $\pi'_j$ maximizes $U^B_j\left(\mu^{*}, \pi_j \bsigma^{*}_{-j} \given I'\right)$ for all $I' \in \mathsf{Succ}(I, j)$ if the RHS is true.

Consider another strategy $\pi''_j \in \Pi_j$. Player $j$ believes he can acquire utility $U^B_j\left(\mu^{*}, \pi''_j \bsigma^{*}_{-j} \given I\right)$ when playing this strategy in response to the given assessment. We can rewrite this quantity as a sum of the contributions of $I' \in \mathsf{Succ}(I, j)$ to the believed utility and the contributions of the remaining paths leading to terminal nodes (either directly or through other players' decision nodes/information sets) to the believed utility:
\begin{equation*}
\begin{split}
    U^B_j\left(\mu^{*}, \pi''_j \bsigma^{*}_{-j} \given I\right) = \sum\limits_{\substack{z \in Z(I) \setminus \\ Z(\mathsf{Succ}(I, j)}} r^{\mu^{*}, \pi''_j \bsigma^{*}_{-j}}(z \given I) \cdot u_j(z) + 
    \sum\limits_{\substack{z \in Z(I')\\ I' \in \mathsf{Succ}(I, j) }} r^{\mu^{*}, \pi''_j \bsigma^{*}_{-j}}(z \given I) \cdot u_j(z)
\end{split}
\end{equation*}

The full believed regret at information set $I$ assuming $\pi''_j$ is a sequential best response to $(\mu^{*}, \bsigma^{*}_{-j})$ is given by 
\begin{align*}
    R^*_{j, full}(I) &= \left( \max\limits_{\pi''_j \in \Pi_j} \sum\limits_{\substack{z \in Z(I) \setminus \\ Z(\mathsf{Succ}(I, j)}} r^{\mu^{*}, \pi''_j \bsigma^*_{-j}}(z \given I) \cdot u_j(z) +\sum\limits_{\substack{z \in Z(I')\\ I' \in \mathsf{Succ}(I, j) }} r^{\mu^*, \pi''_j \bsigma^*_{-j}}(z \given I) \cdot u_j(z) \right) \\
    &- U^B_j\left(\mu^*, \bsigma^* \given I\right) \\
\end{align*}

Earlier, we divided $U^B_j\left(\mu, \pi''_j \bsigma_{-j} \given I\right)$ according to the contributions of the successor information sets $\mathsf{Succ}(I, j)$ and the remaining terminal nodes whose utilities are affected by $j$'s choice of action $a \in A_j(I)$. Similarly, we divide the strategy $\pi''_j$ into two disjoint parts: choosing $a \in A_J(I)$ to maximize the immediate utility and choosing the remainder of $\pi''_j$ for the information sets of $\mathsf{Succ}(I, j)$, denoted $\pi''_j \in \Pi_j \setminus A_j(I)$

\begin{align*}
    R^*_{j, full}(I) &= \max_{a \in A_j(I)\ } \max\limits_{\substack{\pi''_j \in \Pi_j \setminus \\ A_j(I)}} \sum\limits_{\substack{z \in Z(I) \setminus \\Z(\mathsf{Succ}(I, j, a)}} r^{\mu^*, \pi''_j \bsigma^*_{-j}}(z \given I) u_j(z) +\sum\limits_{\substack{z \in Z(I')\\ I' \in \mathsf{Succ}(I, j, a) }} r^{\mu^*, \pi''_j \bsigma^*_{-j}}(z \given I) u_j(z)
    \\
    &- U^B(\mu^*, \bsigma^* \given I) \\
\end{align*}

Notice that since the domain of $\pi''_j$ is now restricted to the action spaces of the successor information sets, the reach probability of any node $z \in Z(I) \setminus Z(\mathsf{Succ}(I, j, a)$ is independent of $\pi''_j$ and depends only on $\mu^*$ and $\bsigma^*_{-j}$:

\begin{align*}
    R^*_{j, full}(I) &= \max_{a \in A_j(I) \ } \sum\limits_{\substack{z \in Z(I) \setminus \\Z(\mathsf{Succ}(I, j, a)}} r^{\mu^*, \bsigma_{-j}^*}(z \given I) u_j(z) 
+ \max\limits_{\substack{\pi''_j \in \Pi_j \setminus \\ A_j(I)}} \sum\limits_{\substack{z \in Z(I')\\ I' \in \mathsf{Succ}(I, j, a) }} r^{\mu^*, \pi''_j \bsigma^*_{-j}}(z \given I) u_j(z) \\
&- U^B_j\left(\mu^*, \bsigma^* \given I\right)
\end{align*}

We now expand $U^B_j\left(\mu^*, \bsigma^* \given I\right)$:

\begin{align*}
    R^*_{j, full}(I) &= \max_{a \in A_j(I)} \sum\limits_{\substack{z \in Z(I) \setminus \\Z(\mathsf{Succ}(I, j, a)}} r^{\mu^*, \bsigma_{-j}^*}(z \given I) u_j(z) 
+ \max\limits_{\substack{\pi''_j \in \Pi_j \setminus \\ A_j(I)}} \sum\limits_{\substack{z \in Z(I')\\ I' \in \mathsf{Succ}(I, j, a) }} r^{\mu^*, \pi''_j \bsigma^*_{-j}}(z \given I) u_j(z) \\
&- \left(\sum\limits_{\substack{z \in Z(I) \setminus \\Z(\mathsf{Succ}(I, j)}} r^{\mu^*, \bsigma^*}(z \given I)u_j(z) + \sum\limits_{\substack{z \in Z(I') \\ I' \in \mathsf{Succ}(I, j)}} r^{\mu^*, \bsigma^*}(z \given I) u_j(z)
\right) 
\end{align*}

Given that $(\mu^*, \bsigma^*)$ is consistent, it follows that for each player $j$, any pair of information sets $I, I' \in \cI_j$, where $I'$ succeeds $I$, and for any terminal node $z$ succeeding $I'$, and any strategy $\pi'_j \in \Pi_j$ \citep{hendon94}:
\begin{equation*}
    r^{\mu^*, \pi'_j \bsigma^*_{-j}}(z \given I) = r^{\mu^*, \pi'_j \bsigma^*_{-j}}(z \given I') \cdot r^{\mu^*, \pi'_j \bsigma^*_{-j}}(I' \given I).
\end{equation*}

We make this substitution to the reach probability associated with the terminal nodes following any information set succeeding $I$:
\begin{align*}
    R^*_{j, full}(I) &= \max_{a \in A_j(I)} \sum\limits_{\substack{z \in Z(I) \setminus \\Z(\mathsf{Succ}(I, j, a)}} r^{\mu^*, \bsigma_{-j}^*}(z \given I) u_j(z) 
+ \max\limits_{\substack{\pi''_j \in \Pi_j \setminus \\ A_j(I)}} \sum\limits_{\substack{z \in Z(I')\\ I' \in \mathsf{Succ}(I, j, a) }} r^{\mu^*, \pi''_j \bsigma^*_{-j}}(I' \given I) r^{\mu^*, \pi''_j \bsigma^*_{-j}}(z \given I') u_j(z) \\
&- \left(\sum\limits_{\substack{z \in Z(I) \setminus \\Z(\mathsf{Succ}(I, j)}} r^{\mu^*, \bsigma^*}(z \given I)u_j(z) + \sum\limits_{\substack{z \in Z(I') \\ I' \in \mathsf{Succ}(I, j)}} r^{\mu^*, \bsigma^*}(z \given I') r^{\mu^*, \bsigma^*}(I' \given I) u_j(z) \right) 
\end{align*}

By definition, this is actually equal to

\begin{align*}
    R^*_{j, full}(I) &= \max_{a \in A_j(I)} \sum\limits_{\substack{z \in Z(I) \setminus \\ Z(\mathsf{Succ}(I, j, a)}} r^{\mu^*, \bsigma_{-j}^*}(z \given I) u_j(z) + \max\limits_{\substack{\pi''_j \in \Pi_j \setminus \\ A_j(I)}} \sum\limits_{I' \in \mathsf{Succ}(I, j, a)}  r^{\mu^*, \pi''_j \bsigma^*_{-j}}(I' \given I) U^B_j\left(\mu^*, \pi''_j \bsigma^*_{-j} \given I'\right) \\
    &- \left(\sum\limits_{\substack{z \in Z(I) \setminus \\Z(\mathsf{Succ}(I, j)}} r^{\mu^*, \bsigma^*}(z \given I)u_j(z) + \sum\limits_{I' \in \mathsf{Succ}(I, j)} r^{\mu^*, \bsigma^*}(I' \given I) U^B_j\left(\mu^*, \bsigma^* \given I') \right) \right).
\end{align*}

Because the induction hypothesis tells us that $\pi'_j$ maximizes $U^B_j\left(\mu^*, \pi''_j \bsigma^*_{-j} \given I'\right)$ for all $I'$, 

\[ U^B_j\left(\mu^*, \pi''_j \bsigma^*_{-j} \given I'\right) \leq U^B_j\left(\mu^*, \pi''_I \pi'_{j \setminus I} \bsigma^*_{-j} \given I\right),\] 

where $\pi''_I$ specifies the action $a \in A_j(I)$ that maximizes immediate believed regret. We can now bound the full believed regret:
\begin{align*}
    R^*_{j, full}(I) &\leq \max_{a \in A_j(I)} \sum\limits_{\substack{z \in Z(I) \setminus \\ Z(\mathsf{Succ}(I, j, a)}} r^{\mu^*, \bsigma_{-j}^*}(z \given I) u_j(z) + \sum\limits_{I' \in \mathsf{Succ}(I, j, a)}  r^{\mu^*, \pi''_I \pi'_{j \setminus I} \bsigma^*_{-j}}(I' \given I) U^B_j\left(\mu^*, \pi''_I \pi'_{j \setminus I} \bsigma^*_{-j} \given I'\right) \\
    &- \left(\sum\limits_{\substack{z \in Z(I) \setminus \\Z(\mathsf{Succ}(I, j)}} r^{\mu^*, \bsigma^*}(z \given I)u_j(z) + \sum\limits_{I' \in \mathsf{Succ}(I, j)} r^{\mu^*, \bsigma^*}(I' \given I) U^B_j\left(\mu^*, \bsigma^* \given I'\right)
    \right).
\end{align*}

We now rearrange the regret equation to group terms by $I' \in \mathsf{Succ}(I, j)$ and the remainder of the game after $I$. The first term is the \term{immediate believed regret} $R^*_{j, \mathit{imm}}(I)$ at $I$; the remaining regret for $I' \in \mathsf{Succ}(I, j, a)$ assumes $a$ is played at $I$ to maximize $R^*_{j, \mathit{imm}}(I)$.

\begin{align*}
    R^*_{j, full}(I) &\leq \max_{a \in A_j(I)} \sum\limits_{\substack{z \in Z(I) \setminus \\ Z(\mathsf{Succ}(I, j, a)}} r^{\mu^*, \bsigma_{-j}^*}(z \given I) u_j(z) - \sum\limits_{\substack{z \in Z(I) \setminus \\Z(\mathsf{Succ}(I, j)}} r^{\mu^*, \bsigma^*}(z \given I)u_j(z) \\
    &+ \sum\limits_{I' \in \mathsf{Succ}(I, j, a)}  r^{\mu^*, \pi''_I \pi'_{j \setminus I} \bsigma^*_{-j}}(I' \given I) U^B_j\left(\mu^*, \pi''_I \pi'_{j \setminus I} \bsigma^*_{-j} \given I'\right) - \sum\limits_{I' \in \mathsf{Succ}(I, j)} r^{\mu^*, \bsigma^*}(I' \given I) U^B_j\left(\mu^*, \bsigma^* \given I'\right) \\
\end{align*}

$\pi''_I$ must be assigned to $a \in A_j(I)$ that maximizes the believed immediate regret $R^T_{j, \mathit{imm}}(I)$ with probability 1. But this also means that $a$ is the local best response at $I$ which we know is specified by $\pi'_j(I)$, so $\pi''_I(I) \equiv \pi'_j(I)$, so it follows that

\begin{align*}
    R^*_{j, full}(I) &\leq \max_{a \in A_j(I)} \sum\limits_{\substack{z \in Z(I) \setminus \\ Z(\mathsf{Succ}(I, j, a)}} r^{\mu^*, \bsigma_{-j}^*}(z \given I) u_j(z) - \sum\limits_{\substack{z \in Z(I) \setminus \\Z(\mathsf{Succ}(I, j)}} r^{\mu^*, \bsigma^*}(z \given I)u_j(z) \\
    &+ \sum\limits_{I' \in \mathsf{Succ}(I, j, a)}  r^{\mu^*, \pi'_j \bsigma^*_{-j}}(I' \given I) U^B_j\left(\mu^*, \pi'_j \bsigma^*_{-j} \given I'\right) - \sum\limits_{I' \in \mathsf{Succ}(I, j)} r^{\mu^*, \bsigma^*}(I' \given I) U^B_j\left(\mu^*, \bsigma^* \given I'\right) \\
    &\leq U^B_j\left(\mu^*, \pi'_j \bsigma^*_{-j} \given I\right) - U^B_j\left(\mu^*, \bsigma^* \given I\right) \\
\end{align*}

Since we are forced to conclude that $U^B_j\left(\mu^*, \pi''_j \bsigma^*_{-j} \given I\right) \leq U^B_j\left(\mu^*, \pi'_j \bsigma^*_{-j} \given I\right)$ for any strategy $\pi''_j$, the believed regret induced by the joint strategy $\pi'_j$ of all local best responses succeeding $I$ must be greater than that of $\pi''_j$. It follows that $\pi'_j$ satisfies sequential rationality at information set $I$, and by induction the entire game tree.
\end{proof}

\subsection{Proof of Lemma~\ref{lem:imm_believed}}\label{app:proof_imm_believed}
\begin{proof}
We know that for all $I \in \cI_j$, the immediate believed regret $R^T_{j, \mathit{imm}}(I)$ (and therefore $R^{*}_{j, \mathit{imm}}(I)$) is bounded from above by $\frac{\Delta_{u, j} \vert A_j(I) \vert}{\sqrt{T}}$. For large enough $T$, this means that local sequential rationality is satisfied for $I$. Due to the one-shot deviation principle, we know that given a consistent assessment $(\bsigma^{*}, \mu^{*})$, if the strategy $\pi'_j$ is a local best response minimizing immediate regret to be at most $\frac{\Delta_{u, j} \vert A_j(I) \vert}{\sqrt{T}}$ at every $I$, then it must also be a sequential best response.
\end{proof}

\section{Games based on Goofspiel}

\subsection{\gengoof: Abstracted General-Sum Version of Goofspiel}\label{app:games_gengoof}

We define a new parameterized class of two-player general-sum games of imperfect information adapted from the game Goofspiel: a multiplayer symmetric zero-sum card game invented by Merrill Flood at Princeton University in the 1930s \citep{flood_goof_web30s}. In Goofspiel, a public card is drawn from the deck of point cards, and the players each bid on the card by playing a single card from their own hand of point cards simultaneously. Whoever bids highest wins the point card and earns the number of points associated with that card. Gameplay continues in a series of bidding rounds until the deck of point cards is gone. The player with the most points accumulated at the end wins.

Goofspiel has been a common game of choice for mathematical study and for evaluating the performance of AI and game-theoretic algorithms on multi-round multiplayer games that require considerable strategic thinking \citep{fixx_goof72, parlett_goof00}. For instance, \citet{ross_goof71} studied the case of Goofspiel for two players where one player randomly played his bidding cards to determine the optimal strategy for the other player, who bid strategically, and found that the best strategy was to match the public card. \citet{rhoads_goof12} computed a mixed Nash equilibrium for the game formulated by \citet{ross_goof71} by organizing the game into subgames represented as matrices and solving each matrix using linear and dynamic programming.

The parameterized, modified version \gengoofK{K} introduced by \citet{konicki25} differs from Goofspiel in the following ways:
\begin{itemize}
    \item The stochastic events and player actions of \gengoof are fundamentally abstract; the goal is to retain the basic structure and information flow of Goofspiel without necessarily having players bid on a public card.
    \item \gengoof also proceeds in rounds, but the number of rounds $K$ is a parameter of interest rather than always being set to 13 in the case of Goofspiel (the point deck contains all cards of the same suit). This allows the size and complexity of the game to be customized.
    \item \gengoof is a general-sum game while Goofspiel is a zero-sum game.
\end{itemize}

\begin{figure}[htbp]
\centering
\begin{subfigure}[t]{0.95\textwidth}
\centering
\begin{tikzpicture}[thick,
    level 1/.style = {level distance = 15mm, sibling distance = 22mm},
    level 2/.style = {level distance = 15mm, sibling distance = 26mm},
    level 3/.style = {level distance = 15mm, sibling distance = 32mm},
    level 4/.style = {level distance = 18mm, sibling distance = 12mm},
    engine/.style = {inner sep = 1pt, above}]
    \node [draw, black, fill={gg_chance}, regular polygon, regular polygon sides=3, rotate=180, inner sep=0.10cm] {}
    [black, ->]
    child { node [draw, black, fill={gg_player1}, circle] (1) {} 
        child {node [draw, black, fill={gg_player2}, diamond] (2) {} 
            child {node [draw, black, fill={gg_chance}, regular polygon, regular polygon sides=3, rotate=180, inner sep=0.10cm] (3) {}
                child {node [draw, black, fill={gg_player1_4}, circle] (4) {}
                edge from parent node[engine, left] {$B$}}
                child {node [draw, black, fill={gg_player1_5}, circle] (5) {}
                edge from parent node[engine, left] {$C$}}
                child {node [draw, black, fill={gg_player1_6}, circle] (6) {}
                edge from parent node[engine, right] {$D$}}
            edge from parent node[engine, sloped] {$a^1_2$}}
            child {node [draw, black, fill={gg_chance}, regular polygon, regular polygon sides=3, rotate=180, inner sep=0.10cm] (7) {}
                child {node [draw, black, fill={gg_player1_7}, circle] (8) {}
                edge from parent node[engine, left] {$B$}}
                child {node [draw, black, fill={gg_player1_8}, circle] (9) {}
                edge from parent node[engine, left] {$C$}}
                child {node [draw, black, fill={gg_player1_9}, circle] (10) {}
                edge from parent node[engine, right] {$D$}}
        edge from parent node[engine, sloped] {$a^4_2$}}
        edge from parent node[engine, sloped] {$a^1_1$}}
        child {node [draw, black, fill={gg_player2}, diamond] (11) {} 
        edge from parent node[engine, sloped] {$a^4_1$}}
    edge from parent node[engine, sloped] {$A$}}
    child { node [draw, black, fill={gg_player1_1}, circle] (12) {} 
    edge from parent node[engine, left] {$B$}}
    child { node [draw, black, fill={gg_player1_2}, circle] (13) {} 
    edge from parent node[engine, right] {$C$}}
    child { node [draw, black, fill={gg_player1_3}, circle] (14) {} 
        child {node [draw, black, fill={gg_player2_1}, diamond] (17) {} 
        edge from parent node[engine, sloped] {$a^1_1$}}
        child {node [draw, black, fill={gg_player2_1}, diamond] (26) {} 
            child {node [draw, black, fill={gg_chance}, regular polygon, regular polygon sides=3, rotate=180, inner sep=0.10cm] (18) {}
        child {node [draw, black, fill={gg_player1_10}, circle] (19) {}
        edge from parent node[engine, left] {$A$}}
        child {node [draw, black, fill={gg_player1_11}, circle] (20) {}
        edge from parent node[engine, left] {$B$}}
        child {node [draw, black, fill={gg_player1_12}, circle] (21) {}
        edge from parent node[engine, right] {$C$}}
    edge from parent node[engine, sloped] {$a^1_2$}}
    child {node [draw, black, fill={gg_chance}, regular polygon, regular polygon sides=3, rotate=180, inner sep=0.10cm] (22) {}
        child {node [draw, black, fill={gg_player1_13}, circle] (23) {}
        edge from parent node[engine, left] {$A$}}
        child {node [draw, black, fill={gg_player1_14}, circle] (24) {}
        edge from parent node[engine, left] {$B$}}
        child {node [draw, black, fill={gg_player1_15}, circle] (25) {}
        edge from parent node[engine, right] {$C$}}
edge from parent node[engine, sloped] {$a^4_2$}}
        edge from parent node[engine, sloped] {$a^4_1$}}
    edge from parent node[engine, sloped] {$D$}
    };
    \node[below=1cm of 12] (15) {};
    \node[below=1cm of 13] (16) {};
    \node[below=1cm of 11] (27) {};
    \node[below=1cm of 17] (28) {};
    \node[below=1cm of 4] (29) {};
    \node[below=1cm of 5] (30) {};
    \node[below=1cm of 6] (31) {};
    \node[below=1cm of 8] (32) {};
    \node[below=1cm of 9] (33) {};
    \node[below=1cm of 10] (34) {};
    \node[below=1cm of 19] (35) {};
    \node[below=1cm of 20] (36) {};
    \node[below=1cm of 21] (37) {};
    \node[below=1cm of 23] (38) {};
    \node[below=1cm of 24] (39) {};
    \node[below=1cm of 25] (40) {};

    \path (12) -- node[auto=false]{{\LARGE \vdots}} (15);
    \path (13) -- node[auto=false]{{\LARGE \vdots}} (16);
    \path (2) -- node[auto=false]{{\LARGE \ldots}} (11);
    \path (3) -- node[auto=false]{{\LARGE \ldots}} (7);
    \path (11) -- node[auto=false]{{\LARGE \vdots}} (27);
    \path (18) -- node[auto=false]{{\LARGE \ldots}} (22);
    \path (17) -- node[auto=false]{{\LARGE \vdots}} (28);
    \path (4) -- node[auto=false]{{\LARGE \vdots}} (29);
    \path (5) -- node[auto=false]{{\LARGE \vdots}} (30);
    \path (6) -- node[auto=false]{{\LARGE \vdots}} (31);
    \path (8) -- node[auto=false]{{\LARGE \vdots}} (32);
    \path (9) -- node[auto=false]{{\LARGE \vdots}} (33);
    \path (10) -- node[auto=false]{{\LARGE \vdots}} (34);
    \path (19) -- node[auto=false]{{\LARGE \vdots}} (35);
    \path (20) -- node[auto=false]{{\LARGE \vdots}} (36);
    \path (21) -- node[auto=false]{{\LARGE \vdots}} (37);
    \path (23) -- node[auto=false]{{\LARGE \vdots}} (38);
    \path (24) -- node[auto=false]{{\LARGE \vdots}} (39);
    \path (25) -- node[auto=false]{{\LARGE \vdots}} (40);
    
\end{tikzpicture}
\end{subfigure}
\begin{subfigure}[t]{0.9\textwidth}
\centering
\fbox{\begin{tabular}{rr}
\small{\textcolor{gg_chance}{\TriangleDown}} & \footnotesize Chance Nodes \\
\small{\textcolor{gg_player1}{\CircleSolid} \textcolor{gg_player1_1}{\CircleSolid} \textcolor{gg_player1_2}{\CircleSolid} \textcolor{gg_player1_3}{\CircleSolid} \textcolor{gg_player1_4}{\CircleSolid} \textcolor{gg_player1_5}{\CircleSolid} \textcolor{gg_player1_6}{\CircleSolid} \textcolor{gg_player1_7}{\CircleSolid}} \\ 
\small{\textcolor{gg_player1_8}{\CircleSolid} \textcolor{gg_player1_9}{\CircleSolid} \textcolor{gg_player1_10}{\CircleSolid} \textcolor{gg_player1_11}{\CircleSolid} \textcolor{gg_player1_12}{\CircleSolid} \textcolor{gg_player1_13}{\CircleSolid} \textcolor{gg_player1_14}{\CircleSolid} \textcolor{gg_player1_15}{\CircleSolid}} & \footnotesize Player 1 Infosets \\
\small{\textcolor{gg_player2}{\DiamondSolid} \textcolor{gg_player2_1}{\DiamondSolid}} & \footnotesize Player 2 Infosets
\end{tabular}}
    \end{subfigure}
    \caption{EFG representation of the first round of \gengoof}
    \label{fig:gengoof_chap3}
\end{figure}
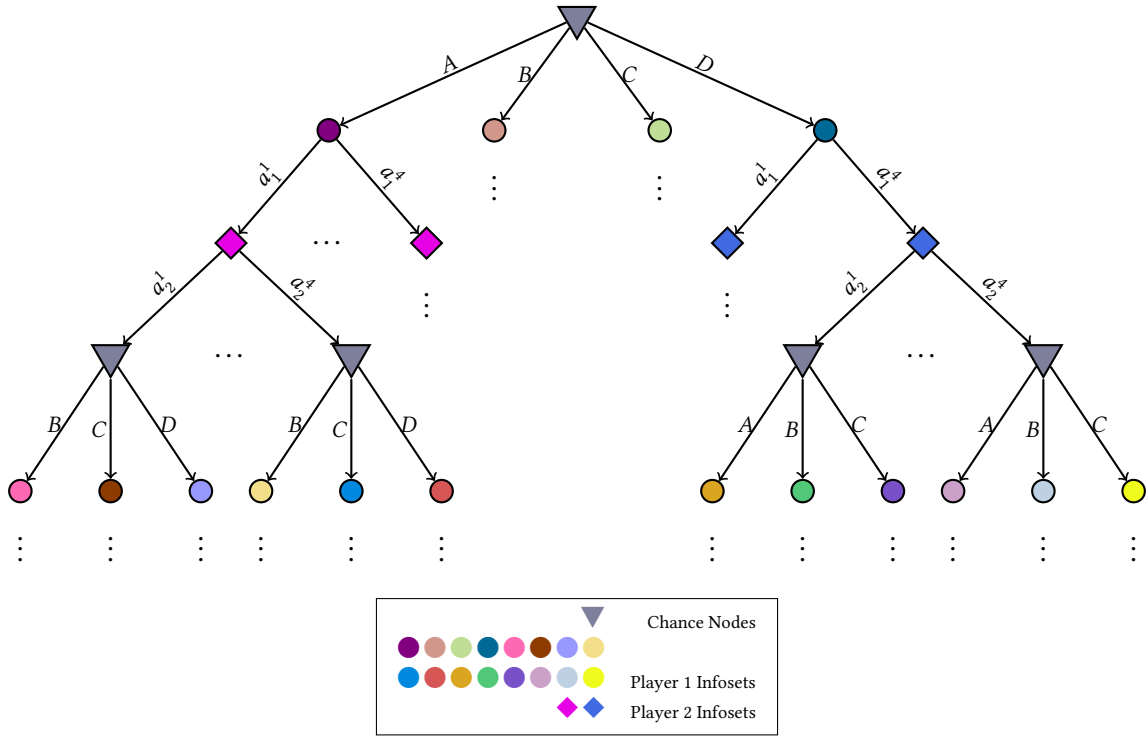

A single round of \gengoof is visualized in Figure~\ref{fig:gengoof_chap3}, where the number of stochastic outcomes at the start of the game is $K=4$, and the total number of game rounds is $K-1$. This particular version of \gengoof is denoted by \gengoofK4. Gameplay proceeds as follows. First, a stochastic event with $K$ outcomes occurs at the game root; the event is denoted
\begin{equation*}
     e_1 \in X(\emptyset)
\end{equation*}
where $X(\emptyset)$ is the set of the first $K$ letters of an alphabet. In our \gengoofK4 example in  Figure~\ref{fig:gengoof_chap3},  $X(\emptyset)=\{A,B,C,D\}$. 
Player~1 observes outcome $e_1$ and chooses one of $K$ actions from
\begin{equation*}
    \Pi_1(I(e_1)) = \{ a^k_1 \}_{k=1}^K.
\end{equation*}
Player~2 then observes $e_1$ but not player~1's action, and also chooses one of $K$ actions from
\begin{equation*}
    \Pi_2(I(e_1 a^k_1)) = \{ a^k_2 \}_{k=1}^K.
\end{equation*}
Then, a second stochastic event $e_2$ with $K - 1$ possible outcomes (excluding $e_1$) occurs:
\begin{equation*}
    X(e_1 a^k_1 a^k_2) = X(\emptyset)  \setminus \{ e_1 \}; \quad  e_2 \in X(e_1 a^k_1 a^k_2).
\end{equation*}
Player~1 then observes the history of all actions up to and including $e_2$ before choosing one of $K$ actions, followed by player~2 (who observes all but player~1's second chosen action). This process repeats until round $K-1$ where there are only 2 possible outcomes for the final stochastic event that occurs, followed by player~1 and player~2 each taking a turn as before. 

For each instance of \gengoof, we sample a categorical probability distribution, denoted by $\Prob(\cdot \given \emptyset)$, uniformly at random from the $(K-1)$-simplex for the round-$1$ stochastic event; for $k\in\{2,3,\dots,K-1\}$, we renormalize the distribution over the residual support after eliminating the outcome realized in round $(k-1)$. For example, the probability distribution of the round-$2$ stochastic event given that $e_1$ occurred in round $1$ is
\begin{equation*}
    P\left(e_2 \given e_1 a^k_1 a^k_2 \right) = \frac{P(e_2 \given \emptyset)}{\sum\limits_{e' \in X(\emptyset) \setminus \{ e_1 \}} P(e' \given \emptyset)} \quad \forall e_2 \in X(\emptyset) \setminus \{ e_1 \}.
\end{equation*}

For each possible combination of the stochastic outcome and the two players' action choices in each round of any game instance, we choose a reward for each player uniformly at random from $[0, u_{\max}]$ for a positive real number $u_{\max}$; we set the utility for each player on game termination equal to the sum of the player's rewards over all $K - 1$ rounds in that history. Thus, for every leaf $z \in Z$ and player $j\in\{1,2\}$, $u_j(z) \sim \Udist([0,u_{\max}(K-1)])$ where $\Udist(\cS)$ denotes the uniform distribution over the set $\cS$.

For our experiments in Section~\ref{sec:pbe_exp}, we used $K\in\{4,5\}$ and $u_{\max}=10$; all utility and probability parameters of the underlying game were hidden from the game theorist applying TE-PSRO to \gengoof.

\subsection{\modgg: Modification of \gengoof with Private Chance Events}\label{app:games_abs_private}


We obtain \modggK{K} by modifying the flow of information in \gengoofK{K} so that neither player $1$ nor player $2$ observes the chance outcome in each round before making their respective moves, but player~$2$ chooses its own action after observing the action player~$1$ who still cannot observe player~$2$'s action. The probability distributions, action spaces, and utility structure are unchanged, and the realized rewards in each round are still publicly observable at the end of the respective rounds. In our experiments in Section~\ref{sec:pbe_exp}, we used the same parameters (hidden from the game theorist) as those specified for \gengoof above. A single round of \modggK4 is visualized in Figure~\ref{fig:private_gengoof_chap3}.

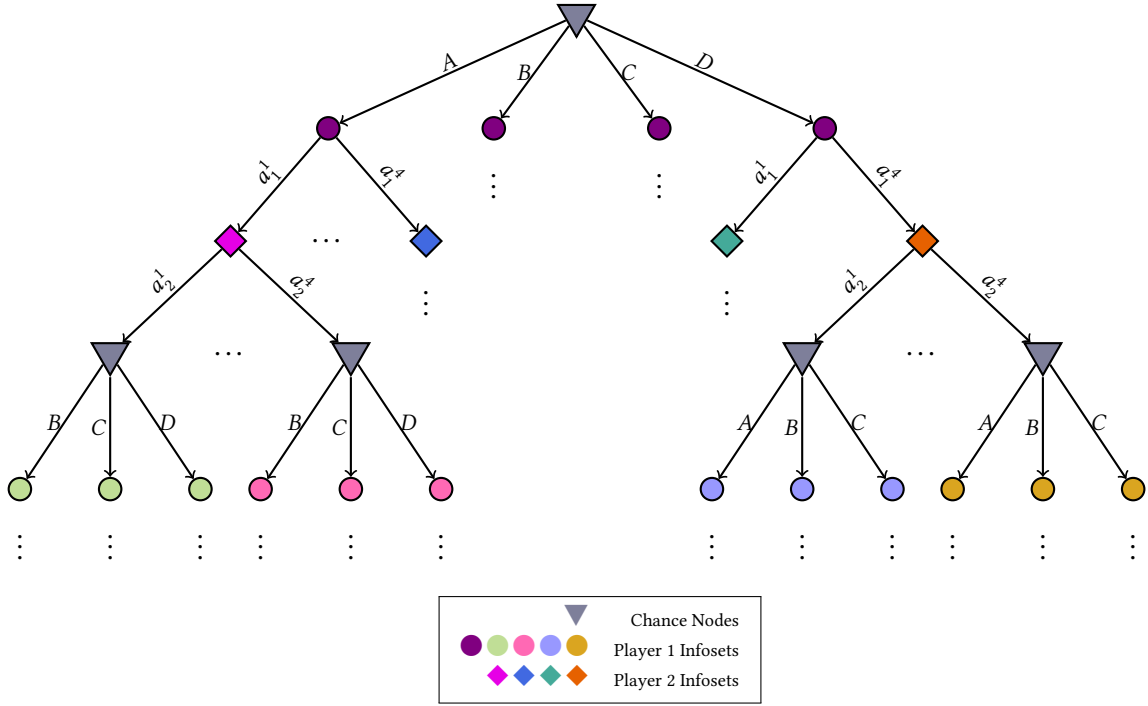
\begin{figure}[htbp]
\centering
\begin{subfigure}[t]{0.89\textwidth}
\hspace{-20pt}
\centering
\begin{tikzpicture}[thick,
    level 1/.style = {level distance = 15mm, sibling distance = 22mm},
    level 2/.style = {level distance = 15mm, sibling distance = 26mm},
    level 3/.style = {level distance = 15mm, sibling distance = 32mm},
    level 4/.style = {level distance = 18mm, sibling distance = 12mm},
    engine/.style = {inner sep = 1pt, above}]
    \node [draw, black, fill={gg_chance}, regular polygon, regular polygon sides=3, rotate=180, inner sep=0.10cm] {}
    [black, ->]
    child { node [draw, black, fill={gg_player1}, circle] (1) {} 
        child {node [draw, black, fill={gg_player2}, diamond] (2) {} 
            child {node [draw, black, fill={gg_chance}, regular polygon, regular polygon sides=3, rotate=180, inner sep=0.10cm] (3) {}
                child {node [draw, black, fill={gg_player1_2}, circle] (4) {}
                edge from parent node[engine, left] {$B$}}
                child {node [draw, black, fill={gg_player1_2}, circle] (5) {}
                edge from parent node[engine, left] {$C$}}
                child {node [draw, black, fill={gg_player1_2}, circle] (6) {}
                edge from parent node[engine, right] {$D$}}
            edge from parent node[engine, sloped] {$a^1_2$}}
            child {node [draw, black, fill={gg_chance}, regular polygon, regular polygon sides=3, rotate=180, inner sep=0.10cm] (7) {}
                child {node [draw, black, fill={gg_player1_4}, circle] (8) {}
                edge from parent node[engine, left] {$B$}}
                child {node [draw, black, fill={gg_player1_4}, circle] (9) {}
                edge from parent node[engine, left] {$C$}}
                child {node [draw, black, fill={gg_player1_4}, circle] (10) {}
                edge from parent node[engine, right] {$D$}}
        edge from parent node[engine, sloped] {$a^4_2$}}
        edge from parent node[engine, sloped] {$a^1_1$}}
        child {node [draw, black, fill={gg_player2_1}, diamond] (11) {} 
        edge from parent node[engine, sloped] {$a^4_1$}}
    edge from parent node[engine, sloped] {$A$}}
    child { node [draw, black, fill={gg_player1}, circle] (12) {} 
    edge from parent node[engine, left] {$B$}}
    child { node [draw, black, fill={gg_player1}, circle] (13) {} 
    edge from parent node[engine, right] {$C$}}
    child { node [draw, black, fill={gg_player1}, circle] (14) {} 
        child {node [draw, black, fill={gg_player2_2}, diamond] (17) {} 
        edge from parent node[engine, sloped] {$a^1_1$}}
        child {node [draw, black, fill={gg_player2_3}, diamond] (26) {} 
            child {node [draw, black, fill={gg_chance}, regular polygon, regular polygon sides=3, rotate=180, inner sep=0.10cm] (18) {}
        child {node [draw, black, fill={gg_player1_6}, circle] (19) {}
        edge from parent node[engine, left] {$A$}}
        child {node [draw, black, fill={gg_player1_6}, circle] (20) {}
        edge from parent node[engine, left] {$B$}}
        child {node [draw, black, fill={gg_player1_6}, circle] (21) {}
        edge from parent node[engine, right] {$C$}}
    edge from parent node[engine, sloped] {$a^1_2$}}
    child {node [draw, black, fill={gg_chance}, regular polygon, regular polygon sides=3, rotate=180, inner sep=0.10cm] (22) {}
        child {node [draw, black, fill={gg_player1_10}, circle] (23) {}
        edge from parent node[engine, left] {$A$}}
        child {node [draw, black, fill={gg_player1_10}, circle] (24) {}
        edge from parent node[engine, left] {$B$}}
        child {node [draw, black, fill={gg_player1_10}, circle] (25) {}
        edge from parent node[engine, right] {$C$}}
edge from parent node[engine, sloped] {$a^4_2$}}
        edge from parent node[engine, sloped] {$a^4_1$}}
    edge from parent node[engine, sloped] {$D$}
    };
    \node[below=1cm of 12] (15) {};
    \node[below=1cm of 13] (16) {};
    \node[below=1cm of 11] (27) {};
    \node[below=1cm of 17] (28) {};
    \node[below=1cm of 4] (29) {};
    \node[below=1cm of 5] (30) {};
    \node[below=1cm of 6] (31) {};
    \node[below=1cm of 8] (32) {};
    \node[below=1cm of 9] (33) {};
    \node[below=1cm of 10] (34) {};
    \node[below=1cm of 19] (35) {};
    \node[below=1cm of 20] (36) {};
    \node[below=1cm of 21] (37) {};
    \node[below=1cm of 23] (38) {};
    \node[below=1cm of 24] (39) {};
    \node[below=1cm of 25] (40) {};

    \path (12) -- node[auto=false]{{\LARGE \vdots}} (15);
    \path (13) -- node[auto=false]{{\LARGE \vdots}} (16);
    \path (2) -- node[auto=false]{{\LARGE \ldots}} (11);
    \path (3) -- node[auto=false]{{\LARGE \ldots}} (7);
    \path (11) -- node[auto=false]{{\LARGE \vdots}} (27);
    \path (18) -- node[auto=false]{{\LARGE \ldots}} (22);
    \path (17) -- node[auto=false]{{\LARGE \vdots}} (28);
    \path (4) -- node[auto=false]{{\LARGE \vdots}} (29);
    \path (5) -- node[auto=false]{{\LARGE \vdots}} (30);
    \path (6) -- node[auto=false]{{\LARGE \vdots}} (31);
    \path (8) -- node[auto=false]{{\LARGE \vdots}} (32);
    \path (9) -- node[auto=false]{{\LARGE \vdots}} (33);
    \path (10) -- node[auto=false]{{\LARGE \vdots}} (34);
    \path (19) -- node[auto=false]{{\LARGE \vdots}} (35);
    \path (20) -- node[auto=false]{{\LARGE \vdots}} (36);
    \path (21) -- node[auto=false]{{\LARGE \vdots}} (37);
    \path (23) -- node[auto=false]{{\LARGE \vdots}} (38);
    \path (24) -- node[auto=false]{{\LARGE \vdots}} (39);
    \path (25) -- node[auto=false]{{\LARGE \vdots}} (40);
    
\end{tikzpicture}
\end{subfigure}
\begin{subfigure}[t]{0.9\textwidth}
\centering
\fbox{\begin{tabular}{rr}
\small{\textcolor{gg_chance}{\TriangleDown}} & \footnotesize Chance Nodes \\
\small{\textcolor{gg_player1}{\CircleSolid} \textcolor{gg_player1_2}{\CircleSolid} \textcolor{gg_player1_4}{\CircleSolid} \textcolor{gg_player1_6}{\CircleSolid} \textcolor{gg_player1_10}{\CircleSolid}} & \footnotesize Player 1 Infosets \\
\small{\textcolor{gg_player2}{\DiamondSolid} \textcolor{gg_player2_1}{\DiamondSolid} \textcolor{gg_player2_2}{\DiamondSolid} \textcolor{gg_player2_3}{\DiamondSolid}} & \footnotesize Player 2 Infosets
\end{tabular}}
    \end{subfigure}
    \caption{EFG representation of the first round of \modggK4}
    \label{fig:private_gengoof_chap3}
\end{figure}

\section{\dondG: Sequential Bargaining Game with Outside Offers}\label{app:games_bargain}

We adapted this game from a well-known alternating-offer bargaining game of incomplete information that has been used in a variety of AI studies, even featuring the use of large-language models \citep{devault_dond15, lewis_dond17, strouse_dond21, li25lmmghlbw}. 
The adapted version strips away all natural-language elements from the original, but adds an external fallback option to augment the informational and strategic complexity. 

In \dondG, two players negotiate the split of a public pool $\numItems$ discrete items of $\tau$ distinct types between themselves.
We represent the item pool by a vector $\mathbf{p}$ where the $i^\mathrm{th}$ entry $p_i$ is the number of items of type $i \in \{1, \dotsc, \tau \} \equiv [\tau]$:
\begin{equation*}
    \sum_{i=1}^{\tau} p_1 = \numItems.
\end{equation*}

Each player~$j \in \{1,2\}$ has a private valuation over the items represented by a vector $\mathbf{v}_j$ of non-negative integers such that the $i^\mathrm{th}$ entry $v_{j,i}$ is player~$j$'s value for one item of type~$i$. 
In each game instance, $(\mathbf{v}_1, \mathbf{v}_2)$ is sampled uniformly at random from the collection $\cV$ of all vector pairs satisfying the following three constraints:
\begin{enumerate}
\item For each player, the total value of all items is the same constant:
\begin{equation*}
    \forall  j\in\{1,2\}. \mathbf{v}_j \cdot \mathbf{p} = \bar{V}
\end{equation*}
\item Each item type has a nonzero value to at least one player: 
\begin{equation*}
    \forall i \in [\tau]. v_{1, i} + v_{2, i} > 0 
\end{equation*}
\item At least one item type has a nonzero value to each player: 
\begin{align*}
 \exists i \in [\tau]. v_{1, i} v_{2, i} > 0
\end{align*} 
\end{enumerate}
An additional feature of \dondG is that each player~$j$ access to a private \term{outside offer} in the form of a vector of items $\mathbf{o}_j$ of the same $\tau$ types as in the above pool. This offer represents the fallback payoff that each player obtains if negotiation "fails," i.e., if one player walks away or if no deal is reached by the end of the game (see below).
This offer is drawn from a distribution $P_j(\cdot)$ for each player $j \in\{1,2\}$ at the start of each game instance. 
Moreover, during negotiation, player~$j$ may choose to reveal coarsened information about its outside offer to the other player in the form of a binary signal which is $L$ (resp.~$H$) if the value of the offer $\mathbf{o}_j \cdot \mathbf{v}_j$ is at most (resp.\ greater than) a fixed threshold~$\thresh$ where $1 < \thresh < \bar{V}$ where $\bar{V}$ is the total value of the public item pool to each each agent. This revelation is always truthful, and a player may only strategize over whether or not to reveal the signal.

In each of a finite number $R > 0$ of negotiation rounds, the players take turns, with player~1 moving first in each round. 
In its turn, a player $j$ can accept the latest offer from the other player (\textsc{deal}), end negotiations (\textsc{walk}), or make an offer-revelation combination of the form $(\offer,\cR)$. 
An offer denoted by
\begin{equation*}
    \offer \in \{ (\mathbf{p}_1, \mathbf{p}_2) \mid \mathbf{p}_1 + \mathbf{p}_2 = \mathbf{p} \}
\end{equation*}
is a proposed partition of the items where $\mathbf{p}_j$ is a vector of $\tau$ non-negative integers representing player~$j$'s allocated bundle by item type.
Revelation $\cR \in \{ \T, \F \}$ represents that player's decision to either disclose its signal (\T) or not (\F) to the other player during that turn. 
We also include a discount factor $\gamma \in (0,1]$ to capture preference for reaching deals sooner. 
Negotiation fails if a player chooses \textsc{walk} in any round $\rho \in\{1,\dotsc,R\}$ or $R$ rounds pass without any player choosing \textsc{deal}. 
In case of failure in round $\rho$, each player~$j$ receives a reward of $\gamma^{\rho} \mathbf{o}_j \cdot \mathbf{v}_j$ from its  outside offer. 
If a proposed partition $(\mathbf{p}_1, \mathbf{p}_2)$ is accepted in round $\rho$, then the reward to player $j$ is $\gamma^{\rho-1} \mathbf{p}_j \cdot \mathbf{v}_j$.

\begin{figure}[htbp]
\centering
\begin{subfigure}[t]{\textwidth}
\centering
\begin{tikzpicture}[thick,
    level 1/.style = {level distance = 10mm, sibling distance = 80mm},
    level 2/.style = {level distance = 10mm, sibling distance = 38mm},
    level 3/.style = {level distance = 10mm, sibling distance = 24mm},
    level 4/.style = {level distance = 15mm, sibling distance = 8mm},
    engine/.style = {inner sep = 1pt, above}]
    \node [draw, black, fill={chance}, regular polygon, regular polygon sides=3, rotate=180, inner sep=0.10cm] {}
    [black, ->]
    child { node [draw, black, fill={chance}, regular polygon, regular polygon sides=3, rotate=180, inner sep=0.10cm] (1) {} 
      child {node [draw, black, fill={chance}, regular polygon, regular polygon sides=3, rotate=180, inner sep=0.10cm] (2) {} 
        child {node [draw, black, fill={chance}, regular polygon, regular polygon sides=3, rotate=180, inner sep=0.10cm] (3) {} 
          child {node [draw, black, fill={player1}, circle] (4) {} 
          edge from parent node[engine, left] {$H$}}
          child {node [draw, black, fill={player1}, circle] (5) {} 
          edge from parent node[engine, right] {$L$}}
        edge from parent node[engine, sloped] {$v_2^1$}
          }
        child {node [draw, black, fill={chance}, regular polygon, regular polygon sides=3, rotate=180, inner sep=0.10cm] (6) {} 
          child {node [draw, black, fill={player1}, circle] (7) {} 
          edge from parent node[engine, left] {$H$}}
          child {node [draw, black, fill={player1}, circle] (8) {} 
          edge from parent node[engine, right] {$L$}}
          edge from parent node[engine, sloped] {$v_2^{\mid \cV \mid}$}
          }
          edge from parent node[engine, sloped] {$H$}
        }
      child {node [draw, black, fill={chance}, regular polygon, regular polygon sides=3, rotate=180, inner sep=0.10cm] (9) {} 
        child {node [draw, black, fill={chance}, regular polygon, regular polygon sides=3, rotate=180, inner sep=0.10cm] (10) {} 
          child {node [draw, black, fill={player1}, circle] (11) {} 
          edge from parent node[engine, left] {$H$}}
          child {node [draw, black, fill={player1}, circle] (12) {} 
          edge from parent node[engine, right] {$L$}}
        edge from parent node[engine, sloped] {$v_2^1$}}
        child {node [draw, black, fill={chance}, regular polygon, regular polygon sides=3, rotate=180, inner sep=0.10cm] (13) {} 
          child {node [draw, black, fill={player1}, circle] (14) {} 
          edge from parent node[engine, left] {$H$}}
          child {node [draw, black, fill={player1}, circle] (15) {} 
          edge from parent node[engine, right] {$L$}}
        edge from parent node[engine, sloped] {$v_2^{\mid \cV \mid}$}
        }
        edge from parent node[engine, sloped] {$L$}
        }
    edge from parent node[engine, sloped] {$v_1^1$}
    }
    child { node [draw, black, fill={chance}, regular polygon, regular polygon sides=3, rotate=180, inner sep=0.10cm] (16) {} 
      child {node [draw, black, fill={chance}, regular polygon, regular polygon sides=3, rotate=180, inner sep=0.10cm] (17) {} 
        child {node [draw, black, fill={chance}, regular polygon, regular polygon sides=3, rotate=180, inner sep=0.10cm] (18) {} 
        child {node [draw, black, fill={player1}, circle] (19) {} 
        edge from parent node[engine, left] {$H$}}
        child {node [draw, black, fill={player1}, circle] (20) {} 
        edge from parent node[engine, right] {$L$}}
        edge from parent node[engine, sloped] {$v_2^1$}
        }
        child {node [draw, black, fill={chance}, regular polygon, regular polygon sides=3, rotate=180, inner sep=0.10cm] (21) {} 
          child {node [draw, black, fill={player1}, circle] (22) {} 
          edge from parent node[engine, left] {$H$}}
          child {node [draw, black, fill={player1}, circle] (23) {} 
          edge from parent node[engine, right] {$L$}}
        edge from parent node[engine, sloped] {$v_2^{\mid \cV \mid}$}}
      edge from parent node[engine, sloped] {$H$}}
      child {node [draw, black, fill={chance}, regular polygon, regular polygon sides=3, rotate=180, inner sep=0.10cm] (24) {} 
      child {node [draw, black, fill={chance}, regular polygon, regular polygon sides=3, rotate=180, inner sep=0.10cm] (25) {} 
        child {node [draw, black, fill={player1}, circle] (26) {} 
        edge from parent node[engine, left] {$H$}}
        child {node [draw, black, fill={player1}, circle] (27) {} 
        edge from parent node[engine, right] {$L$}}
        edge from parent node[engine, sloped] {$v_2^1$}
        }
        child {node [draw, black, fill={chance}, regular polygon, regular polygon sides=3, rotate=180, inner sep=0.10cm] (28) {} 
          child {node [draw, black, fill={player1}, circle] (29) {} 
          edge from parent node[engine, left] {$H$}}
          child {node [draw, black, fill={player1}, circle] (30) {} 
          edge from parent node[engine, right] {$L$}}
        edge from parent node[engine, sloped] {$v_2^{\mid \cV \mid}$}}
      edge from parent node[engine, sloped] {$L$}}
      edge from parent node[engine, sloped] {$v_1^{\mid \cV \mid}$}
    };
    \path (1) -- node[auto=false]{{\LARGE \ldots}} (16);
    \path (3) -- node[auto=false]{{\normalsize \ldots}} (6);
    \path (10) -- node[auto=false]{\ldots} (13);
    \path (9) -- node[auto=false]{{\Large \ldots}} (17);
    \path (18) -- node[auto=false]{\ldots} (21); 
    \path (25) -- node[auto=false]{\ldots} (28);
    
\end{tikzpicture}
\end{subfigure}
\begin{subfigure}[t]{\textwidth}
\centering
\vspace{1em}
    \fbox{\begin{tabular}{rr}
\small{\textcolor{chance}{\TriangleDown}} & \footnotesize Chance Nodes \\
\small{\textcolor{player1}{\CircleSolid}} & \footnotesize Player 1 Nodes \\
\end{tabular}}
\end{subfigure}
    \caption{EFG representation of \dondG, from the start of the game until just before negotiations start with player~1's initial offer to player~2.}
    \label{fig:full_bargain_chap3}
\end{figure}

\begin{figure}[htp!]
\centering
\begin{subfigure}[t]{\textwidth}
\centering
\begin{tikzpicture}[thick,
    level 1/.style = {level distance = 10mm, sibling distance = 70mm},
    level 2/.style = {level distance = 15mm, sibling distance = 22mm},
    level 3/.style = {level distance = 28mm, sibling distance = 16mm},
    engine/.style = {inner sep = 1pt, above}]
    \node [draw, black, fill={chance}, regular polygon, regular polygon sides=3, rotate=180, inner sep=0.10cm] {} 
    [black, ->]
      child { node [draw, black, fill={player1}, circle] (1) {}
        child { node [draw, black, fill={player2}, diamond] (2) {}
          child { node [draw, black, fill={player1}, circle] (3) {}
          edge from parent node[engine, left] {{\small $a_1$}}}
          child { node [draw, black, fill={player1}, circle] (4) {}
          edge from parent node[engine, left] {{\small $a_m$}}}
          child { node [draw, black, fill={terminal}, regular polygon, regular polygon sides=3, inner sep=0.08cm, left=12em, above=3.5em] (5) {}
          edge from parent node[engine, sloped] {{\footnotesize $\textsc{walk}$}}}
          child { node [draw, black, fill={terminal}, regular polygon, regular polygon sides=3, inner sep=0.08cm, left=12em, above=4.5em] (6) {}
          edge from parent node[engine, sloped] {{\footnotesize $\textsc{deal}$}}}
        edge from parent node[engine, left] {{\small $a_1$}}} 
        child { node [draw, black, fill={player2}, diamond, right=2.5em] (7) {}
          child { node [draw, black, fill={player1}, circle, right=2em] (8) {}
          edge from parent node[engine, left] {{\small $a_1$}}}
          child { node [draw, black, fill={player1}, circle, right=2em] (9) {}
          edge from parent node[engine, left] {{\small $a_m$}}}
          child { node [draw, black, fill={terminal}, regular polygon, regular polygon sides=3, inner sep=0.08cm, above=3.5em] (10) {}
          edge from parent node[engine, sloped] {{\footnotesize $\textsc{walk}$}}}
          child { node [draw, black, fill={terminal}, regular polygon, regular polygon sides=3, inner sep=0.08cm, left=4em, above=4.5em] (11) {}
          edge from parent node[engine, sloped] {{\footnotesize $\textsc{deal}$}}}
        edge from parent node[engine, left] {{\small $a_m$}}} 
      edge from parent node[engine, sloped] {$H$}} 
      child { node [draw, black, fill={player1}, circle] (14) {} 
        child { node [draw, black, fill={player2}, diamond] (15) {}
             child { node [draw, black, fill={player1}, circle, right=2em] (16) {}
          edge from parent node[engine, left] {{\small $a_1$}}}
          child { node [draw, black, fill={player1}, circle, right=2em] (17) {}
          edge from parent node[engine, left] {{\small $a_m$}}}
          child { node [draw, black, fill={terminal}, regular polygon, regular polygon sides=3, inner sep=0.08cm, above=3.5em] (18) {}
          edge from parent node[engine, sloped] {{\footnotesize $\textsc{walk}$}}}
          child { node [draw, black, fill={terminal}, regular polygon, regular polygon sides=3, inner sep=0.08cm, left=2em, above=4.5em] (19) {}
          edge from parent node[engine, sloped] {{\footnotesize $\textsc{deal}$}}}
        edge from parent node[engine, left] {{\small $a_1$}}} 
        child { node [draw, black, fill={player2}, diamond, right=2.5em] (20) {}
          child { node [draw, black, fill={player1}, circle, right=3em] (21) {}
          edge from parent node[engine, left] {{\small $a_1$}}}
          child { node [draw, black, fill={player1}, circle, right=3em] (22) {}
          edge from parent node[engine, left] {{\small $a_m$}}}
          child { node [draw, black, fill={terminal}, regular polygon, regular polygon sides=3, inner sep=0.08cm, above=3.5em] (23) {}
          edge from parent node[engine, sloped] {{\footnotesize $\textsc{walk}$}}}
          child { node [draw, black, fill={terminal}, regular polygon, regular polygon sides=3, inner sep=0.08cm, right=1em, above=4.5em] (24) {}
          edge from parent node[engine, sloped] {{\footnotesize $\textsc{deal}$}}}
        edge from parent node[engine, left] {{\small $a_m$}}} 
      edge from parent node[engine, sloped] {$L$}}
;
    \path (2) -- node[auto=false]{{\large \ldots}} (7);
    \path (3) -- node[auto=false]{{\large \ldots}} (4);
    \path (8) -- node[auto=false]{{\large \ldots}} (9);
    \path (15) -- node[auto=false]{{\large \ldots}} (20);
    \path (16) -- node[auto=false]{{\large \ldots}} (17);
    \path (21) -- node[auto=false]{{\large \ldots}} (22);
    
\end{tikzpicture}
\end{subfigure}~\\
\begin{subfigure}[t]{\textwidth}
\centering
\vspace{0.3em}
\fbox{\begin{tabular}{rr}
\small{\textcolor{chance}{\TriangleDown}} & \footnotesize Chance Nodes \\
\small{\textcolor{player1}{\CircleSolid}} & \footnotesize Player 1 Nodes \\
\small{\textcolor{player2}{\DiamondSolid}} & \footnotesize Player 2 Nodes \\
\small{\textcolor{terminal}{\TriangleUp}} & \footnotesize Terminal Nodes
\end{tabular}}
\end{subfigure}
\caption{The first round of negotiation in \dondG.}
\label{fig:full_bargain2_chap3}
\end{figure}

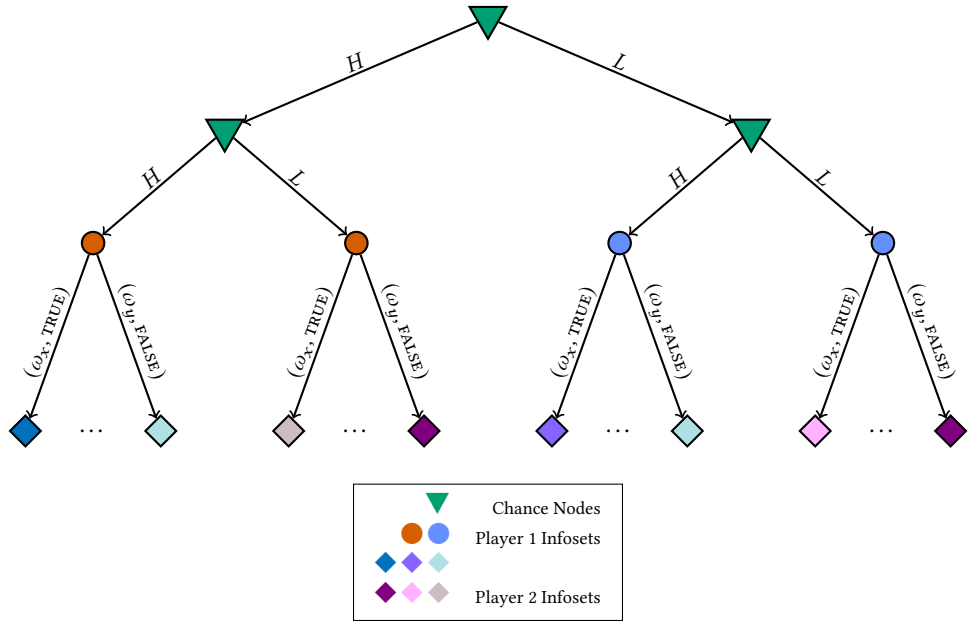
\begin{figure}[ht!]
    \centering
\begin{subfigure}[t]{\textwidth}
\centering
\begin{tikzpicture}[thick,
    level 1/.style = {level distance = 15mm, sibling distance = 70mm},
    level 2/.style = {level distance = 15mm, sibling distance = 35mm},
    level 3/.style = {level distance = 25mm, sibling distance = 18mm},
    engine/.style = {inner sep = 1pt, above}]
    \node [draw, black, fill={chance}, regular polygon, regular polygon sides=3, rotate=180, inner sep=0.10cm] {}
    [black, ->]

    child {node [draw, black, fill={chance}, regular polygon, regular polygon sides=3, rotate=180, inner sep=0.10cm] (1) {} 
      child {node [draw, black, fill={player1}, circle] (2) {} 
        child {node [draw, black, fill={player2}, diamond] (3) {}
        edge from parent node[engine, sloped] {{\small $(\offer_x, \T)$}}}
        child {node [draw, black, fill={player2_2}, diamond] (4) {}
        edge from parent node[engine, sloped] {{\small $(\offer_y, \F)$}}}
      edge from parent node[engine, sloped] {$H$}}
      child {node [draw, black, fill={player1}, circle] (6) {} 
        child {node [draw, black, fill={player2_5}, diamond] (7) {}
        edge from parent node[engine, sloped] {{\small $(\offer_x, \T)$}}}
        child {node [draw, black, fill={player2_3}, diamond] (8) {}
        edge from parent node[engine, sloped] {{\small $(\offer_y, \F)$}}}
      edge from parent node[engine, sloped] {$L$}}
      edge from parent node[engine, sloped] {$H$}}
    child {node [draw, black, fill={chance}, regular polygon, regular polygon sides=3, rotate=180, inner sep=0.10cm] (10) {}
    child {node [draw, black, fill={player1_1}, circle] (11) {} 
        child {node [draw, black, fill={player2_1}, diamond] (12) {}
        edge from parent node[engine, sloped] {{\small $(\offer_x, \T)$}}}
        child {node [draw, black, fill={player2_2}, diamond] (13) {}
        edge from parent node[engine, sloped] {{\small $(\offer_y, \F)$}}}
      edge from parent node[engine, sloped] {$H$}}
      child {node [draw, black, fill={player1_1}, circle] (15) {} 
        child {node [draw, black, fill={player2_4}, diamond] (16) {}
        edge from parent node[engine, sloped] {{\small $(\offer_x, \T)$}}}
        child {node [draw, black, fill={player2_3}, diamond] (17) {}
        edge from parent node[engine, sloped] {{\small $(\offer_y, \F)$}}}
      edge from parent node[engine, sloped] {$L$}}
      edge from parent node[engine, sloped] {$L$}
    };

   \path (3) -- node[auto=false]{{\large \ldots}} (4);
    \path (7) -- node[auto=false]{{\large \ldots}} (8);
    \path (12) -- node[auto=false]{{\large \ldots}} (13);
    \path (16) -- node[auto=false]{{\large \ldots}} (17);
\end{tikzpicture}

\end{subfigure}
\begin{subfigure}[t]{\textwidth}
\centering
\vspace{0.3em}
\fbox{\begin{tabular}{rr}
\small{\textcolor{chance}{\TriangleDown}} & \footnotesize Chance Nodes \\
\small{\textcolor{player1}{\CircleSolid} \textcolor{player1_1}{\CircleSolid}} & \footnotesize Player 1 Infosets \\
\small{\textcolor{player2}{\DiamondSolid} \textcolor{player2_1}{\DiamondSolid} \textcolor{player2_2}{\DiamondSolid}} \\ \small{\textcolor{player2_3}{\DiamondSolid} \textcolor{player2_4}{\DiamondSolid} \textcolor{player2_5}{\DiamondSolid}} & \footnotesize Player 2 Infosets
\end{tabular}}
\end{subfigure}
\caption{Illustration of the effect of player~1's revelation-decision $\cR$ on player~2's infoset structure at start of \dondG.}
\label{fig:DOND_no_offer_reveal_chap3}
\end{figure}

We visualize \dondG\ as an EFG with the help of Figures~\ref{fig:full_bargain_chap3} through \ref{fig:DOND_no_offer_reveal_chap3}. 
Figure~\ref{fig:full_bargain_chap3} shows the partial game tree starting at the root and ending just before negotiation begins, and Figure~\ref{fig:full_bargain2_chap3} shows the subtree beginning at one of the chance nodes following player~2's sampled valuation from Figure~\ref{fig:full_bargain_chap3}. Each subsequent round of negotiation repeats starting from one of player~1's nodes, for a total of $R$ rounds.

Figure~\ref{fig:DOND_no_offer_reveal_chap3} offers a alternative view of the game where the sampling of the valuation vectors is suppressed and the formation of player~$2$'s information sets based on player~$1$'s actions is emphasized. First, the players' outside offers are sequentially sample by Nature. Next, player~1 chooses an action, comprising an offer $\offer$ and revelation $\cR$. Player~2 has four distinguishable histories that result when player~1 takes action $(\omega_x, \T)$, choosing to reveal its signal to player~2: $(H, H, (\omega_x, \T))$, $(H, L, (\omega_x, \T))$, $(L, H, (\omega_x, \T))$, $(L, L, (\omega_x, \T))$. When player~1 takes action $(\omega_y, \F)$, choosing not to reveal its signal to player~2, two non-singleton information sets are induced for player~2: one containing the histories $(H, H, (\omega_y, \F))$ and $(L, H, (\omega_y, \F))$, and the other containing the histories $(H, L, (\omega_y, \F))$ and $(L, L, (\omega_y, \F))$.

For our experiments on \dondG (Section~\ref{sec:pbe_exp}), we set $\tau=3$, $\bar{V}=10$, $\thresh=5$, $\gamma=0.99$, and $R=5$.
We generated five unique sets of the remaining parameters: $\mathbf{p} \in \{(2,0,3),(3,1,2),(1,2,2),(1,4,2),(0,0,5)\}$, and $(\mathbf{v}_1, \mathbf{v}_2)$, $P_1$, $P_2$ sampled uniformly at random from their respective supports. 

\section{All plots for experiments in Section~\ref{sec:tepsro_expts}}\label{sec:app_pbe_mss}
In this section, we provide the complete set of regret curves versus TE-PSRO epochs obtained from our experiments in Section~\ref{sec:tepsro_expts} comparing (unrefined) NE and PBE as the chosen MSS, for \gengoofK4 and \dondG. We used NE for the EVAL solution $\bsigma^*$. In addition to the details provided in Section~\ref{sec:tepsro_expts}, note that we varied $M$ in $\{1, 2, 4, 8, 16\}$ for \dondG and $\{1, 2, 4, 8\}$ for \gengoofK4.


\begin{figure}[H]
    \centering
    \begin{subfigure}[b]{0.49\textwidth}
     \includegraphics[width=0.85\textwidth]{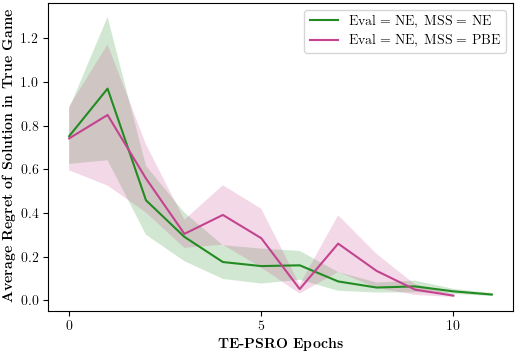}
    \caption{$M = 1$ \label{fig:app_pbe_barg_M1}}
    \end{subfigure}~
    \begin{subfigure}[b]{0.49\textwidth}
     \includegraphics[width=0.85\textwidth]{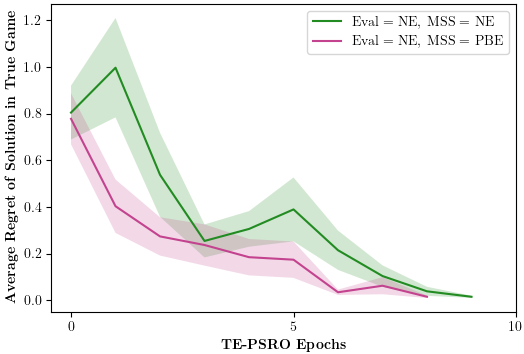}
    \caption{$M = 2$ \label{fig:app_pbe_barg_M2}}
    \end{subfigure}\\
    \begin{subfigure}[b]{0.49\textwidth}
    \includegraphics[width=0.85\textwidth]{chapters/plots_chap8/plots_nash_refine_pbe/bargaining_game/true_regret_9_3_pbe_M4_no_NF.png}
    \caption{$M = 4$ \label{fig:app_pbe_barg_M4}}
    \end{subfigure}~
    \begin{subfigure}[b]{0.49\textwidth}
    \includegraphics[width=0.85\textwidth]{chapters/plots_chap8/plots_nash_refine_pbe/bargaining_game/true_regret_9_3_pbe_M8_no_NF.png}
    \caption{$M = 8$ \label{fig:app_pbe_barg_M8}}
    \end{subfigure}\\
    \begin{subfigure}[b]{0.49\textwidth}
    \includegraphics[width=0.85\textwidth]{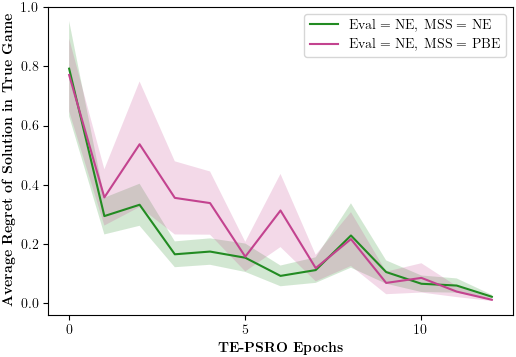}
    \caption{$M = 16$ \label{fig:app_pbe_barg_M16}}
    \end{subfigure}
    \caption{Average regret of $\bsigma^*$ evaluated in \dondG \hspace{0.1em} over the course of TE-PSRO's runtime, using NE or PBE as the MSS.}
    \label{fig:app_barg_pbe_mss}
\end{figure}


\begin{figure}[H]
    \centering
    \begin{subfigure}[b]{0.49\textwidth}
     \includegraphics[width=0.85\textwidth]{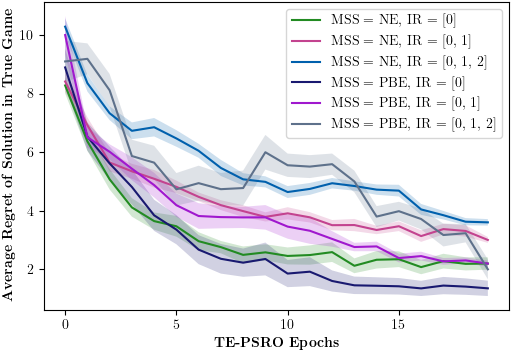}
     \caption{$M = 1$ \label{fig:app_pbe_abs4_M1}}
    \end{subfigure}~
    \begin{subfigure}[b]{0.49\textwidth}
     \includegraphics[width=0.85\textwidth]{chapters/plots_chap8/plots_nash_refine_pbe/abstract_4rounds/true_regret_4rounds_ne_eval_pbe_mss_M2.png}
     \caption{$M = 2$ \label{fig:app_pbe_abs4_M2}}
    \end{subfigure}\\
    \begin{subfigure}[b]{0.49\textwidth}
     \includegraphics[width=0.85\textwidth]{chapters/plots_chap8/plots_nash_refine_pbe/abstract_4rounds/true_regret_4rounds_ne_eval_pbe_mss_M4.png}
     \caption{$M = 4$ \label{fig:app_pbe_abs4_M4}}
    \end{subfigure}~
    \begin{subfigure}[b]{0.49\textwidth}
     \includegraphics[width=0.85\textwidth]{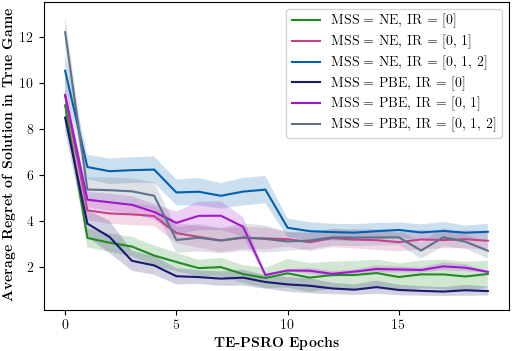}
     \caption{$M = 8$ \label{fig:app_pbe_abs4_M8}}
    \end{subfigure}\\
    \caption{Average regret of $\bsigma^*$ evaluated in \gengoofK4 \hspace{0.1em} over the course of TE-PSRO's runtime, using NE or PBE as the MSS.}
    \label{fig:app_abs4_pbe_mss}
\end{figure}

\section{Review of game-theoretic solution concepts with belief}\label{app:refinements}

\subsection{Consistency and Sequential Equilibria}\label{sec:seq_eq}

A couple of decades after \citet{selten65} introduced SPE, \citet{kw82} introduced a new theoretical solution concept called \term{sequential equilibrium} (SE). Sequential equilibrium, as well as the other solution concepts outlined in this chapter, is based on extending the idea of a NE being induced at every decision point in an EFG that is required by SPE and ensured via backward induction to arbitrary games of imperfect information \citep{kohlberg86}.

An assessment $(\bsigma, \mu)$ must satisfy three conditions in order to be considered a SE. First, $(\bsigma, \mu)$ must satisfy sequential rationality (Definition~\ref{def:seq_rat}). Second, every distribution of $\mu$ must be satisfy Bayes' rule given $\bsigma$ and $X$ (since information sets can be induced by unknown chance outcomes also). Specifically, at every information set $I \in \cI_j$ with reach probability $r(I, \bsigma) > 0$, for all players $j \in N \setminus \{ 0 \}$,
\begin{equation*}
    \mu(h) = \frac{r(h, \bsigma)}{r(I, \bsigma)} = \frac{r(h, \bsigma)}{\sum_{h' \in I} r(h', \bsigma)}
\end{equation*}
for all $h \in I$.

The third and final SE condition that an assessment $(\bsigma, \mu)$ must satisfy is referred to by \citet{kw82} as \term{consistency}; to distinguish from other concepts of consistency, I refer to their definition as as Kreps-Wilson (KW) consistency. Consistency ensures that the beliefs are sensible in all parts of the game, even for the information sets that are not reached given the assessment. An assessment is \textit{KW-consistent} if given an infinite sequence of completely mixed strategy profiles $\left[ \bsigma^1, \ldots, \bsigma^m \right]$, and an associated belief system $\mu^m$ for each profile defined according to Bayes' rule,
\begin{equation*}
    \lim_{m \rightarrow \infty} \left( \bsigma^m, \mu^m \right) = (\bsigma, \mu).
\end{equation*}

The sequence in question is a theoretical infinite sequence of assessments drawn from the set of all possible assessments, not a sequence resulting from a series of iteratively tweaked assessments returned by RL, a no-regret learning algorithm, or some other game-theoretic algorithm being applied to the game in question. The KW-consistent assessment $(\bsigma, \mu)$ is the closure of the set of all possible assessments. A solution criterion that is based on topological spaces is not very straightforward for direct computation or for verification of an assessment candidate. Later works have attempted to give a different solution concept that allowed for the incorporation of beliefs via Bayes' rule while enforcing the sensibility of the paths taken through the game tree given $\bsigma$ and $\mu$, albeit more simply.

\subsection{Weak Sequential Equilibrium}\label{sec:weak_se}

\citet{myerson91} gave a new solution concept called \term{weak sequential equilibrium} that did not include the consistency requirement as proposed by \citet{kw82}. In order to be a weak sequential equilibrium, a solution candidate $(\bm{\sigma}, \mu)$ must satisfy two conditions. First, $(\bm{\sigma}, \mu)$ must induce a NE at every information set \textit{reached with positive probability}, for every player. Second, $\mu$ must be defined for nodes on the paths defined by $(\bm{\sigma}$ through the game tree according to Bayes’ rule. However, since no restrictions are imposed by weak sequential equilibrium on the assessment at information sets that are off the equilibrium path, this solution concept does not actually guarantee that sequential rationality is satisfied.

Consider as an example a candidate assessment for the game depicted in Figure~\ref{fig:not_agm_consist} with the following strategy profile:
\begin{align*}
    \sigma_1(I(\emptyset)) &= c \\
    \sigma_2(I(b)) &= d \\
    \sigma_3(I(c)) &= f,\ \sigma_3(I(\mathit{bd})) = h.
\end{align*}

Weak SE would impose no requirements upon Player 3's belief at the highlighted information set since it is off the equilibrium path (highlighted in green). This means that $\mu(\mathit{bd})$ and $\mu(\mathit{be})$ do not have to be equal to zero, for instance, or even add up to 1. Weak SE also would not restrict the choices of Player 2 and Player 3 in the other subgame that is rooted at the node with history $B$. It is clear that although this solution concept is more lenient than SE, this particular solution candidate could satisfy both conditions for weak SE, but fail to satisfy the conditions required for the assessment to be optimal at every part of the game, including information sets that are off the equilibrium path.

\subsection{Perfect Bayesian Equilibrium for Multi-Stage Signaling Games}\label{sec:pbe_msg}

\citet{pbe91} introduced a more restrictive solution concept for dynamic games of imperfect information that is close to sequential equilibrium. Their version does not include the condition of consistency that is imposed by sequential equilibrium upon paths that are taken through the game tree with zero probability. However, this solution concept was developed specifically for a certain class of games known by several names: \textit{multi-stage signaling games}, \textit{multi-period games with observed actions}, \textit{multi-period games of imperfect information}, \textit{multi-period games with independent types} \citep{pbe91, game_theory_ft, fud_levine_83}. This class was chosen for the games' ability to showcase the complications introduced by imperfect information. The new solution concept introduced via this class of games was named the perfect Bayesian equilibrium by \citeauthor{pbe91}; we will call it PBE-MSG to avoid confusion with the solution concept of interest in our paper.

At the very beginning of the multi-stage signaling game described by \citet{pbe91,game_theory_ft}, each player $j$ is first assigned a private type $\theta_j$ from a finite set $\Theta_j$. The player types are all independent of each other. Each player learns his own type but learns nothing about the other players' types. Then, the game proceeds in a series of stages, each of which is consistent with the format of a simple signaling game. Given their own type and nothing else at stage $t$, each player simultaneously chooses a strategy $\sigma^t_j$ that maps the history $h^t$ up to stage $t$ and $\theta_j$ to its action space $A_j(h^t)$. A history $h^t$ includes each $n$-long vector of all player actions $a^t = \left[ a^t_j \right]_{j = 1}^n$ chosen from the beginning of the game up to stage $t$. Finally, all actions are revealed at the end of the stage, and payoffs are made given the new history $h^{t + 1} = \left[ a^0, \ldots, a^t \right]$ and the vector of player types. The total number of stage repetitions is given by the parameter $T$.

Each player $j$ formulates an independent posterior belief $\mu$ about the vector of other players' types $\theta_{-j}$. It is assumed that the posterior beliefs are also independent and that all types of player $j$ have the same beliefs, meaning even unexpected observations to not make any of the players think that his opponents' types are correlated: 
\begin{equation*}
    \mu_j \left( \theta_{-j} \given \theta_j, h^t \right) = \prod_{k \neq j} \mu\left( \theta_k \given h^t \right).
\end{equation*}
$\mu$ is updated for each player $j$ at the end of each stage according to Bayes' rule, for each $\theta_k \in \theta_{-j}$
\begin{equation*}
    \mu_j \left( \theta_k \given (h^t, a^t) \right) = \frac{\mu_j (\theta_k \given h^t) \cdot \sigma_k \left( a^t_k \given h^t, \theta_k \right)}{\sum_{\theta'_k \in \theta_{-j}} \mu_j \left( \theta'_k \given h^t \right) \cdot \sigma_k \left( a^t_k \given h^t, \theta'_k \right)}.
\end{equation*}

The beliefs are updated even for histories that have probability 0 at stage $t$, which is slightly stronger than simply requiring that Bayes' rule be applied consistently for history-action combinations with positive probability. $\mu$ should naturally also be consistent with the player types all being independent. Another requirement for updating beliefs is that even unexpected observations of opponent actions do not induce $\mu$ to reflect correlation between opponent player types. 
PBE-MSG also requires that the expected payoff from $(\bsigma, \mu)$ induces a NE for each stage $t$, thus ensuring that $(\bsigma, \mu)$ is a subgame-perfect equilibrium for this setting. Finally, a player $j$'s belief about player $k$'s type should not change because the actions of players other than $j$ or $k$ deviated. Altogether, these requirements provided a promising format for the general concept of perfect Bayesian equilibrium for imperfect information games (Section~\ref{sec:pbe}).

\subsection{Other Related Work}\label{sec:related_chap7}

An alternative NE refinement for imperfect-information EFGs was proposed by \citet{battigalli96} that was also called PBE. But, this refinement is in terms of the even more complex solution space of tree-extended assessments $(\nu,\bsigma,\mu)$, where $\nu$ is a conditional probability system on the set of terminal nodes.

The PBE concept that we use is a weaker version of the sequential equilibrium (SE), with a more lenient and verifiable definition of consistency. \citet{azhar05} gave an exponential-time algorithm that outlined all possible bases of a given EFG and $(\bsigma, \mu)$ that satisfy consistency and gave a system of polynomial equations and inequalities that specified the sequential equilibria of each basis. A basis was defined as a set of terminal nodes reached with nonzero probability according to $(\bsigma, \mu)$, of which there could be exponentially many. \citet{turocy2010computing}, extending the work of \citet{mckelvey1998quantal}, provided a method to approximate a sequential equilibrium arbitrarily well via a sequence of agent quantal response equilibria. 
\citet{panozzo14} proposed multiple approaches for verifying the sequential equilibria of an imperfect-information EFG algorithmically. To the best of our knowledge, although these algorithms demonstrate the feasibility of computing an (approximate) SE of of an imperfect-information game, their scalability has not been adequately established in the literature.

More recently, \citet{Thoma25} introduced a best-response based algorithm for computing $\varepsilon$-perfect Bayesian equilibria in sequential auctions with incomplete information and combinatorial bidding; since auctions are a form of multi-stage game, this work demonstrated how to algorithmically compute an approximate PBE-MSG. 
\citet{graf_sequent_eq24} introduced an algorithm that symbolically solved imperfect-information games for sequential equilibria using a finite system of polynomial equations and inequalities; however, the algorithm is viable only for small games and not scalable.

Beliefs when incorporated for games of imperfect information are always probabilistic rather than qualitative; by contrast, a qualitative player belief system might utilize the logic of Kripke frames regarding the consequences of taking alternative actions in a perfect information game in order to assess rationality \citep{bonanno_beliefs13}. Furthermore, given that our considerations are restricted to non-cooperative extensive-form games, player strategies are by definition causally independent, even if they may be epistemically correlated with past actions \citep{bernheim84}. It follows that the beliefs must also be ``uncorrelated probabilistic assessments of their opponents' choices" \citep{brandenburger87}.

The notion of consistency as a necessary part of the definition of sequential equilibrium is justified by many other works \cite{kohlberg_consist97,swinkels_consist93,battigalli96}, even if the topographical definition given by \citet{kw82} is not algorithmically verifiable. Intuitively, $\bsigma$ can be understood as a roadmap of paths to be taken by the players in the future starting from any decision point while $\mu$ can be understood as a roadmap of paths taken in hindsight. Consistency is required in order to ensure that the possible paths specified by both do not contradict each other, particularly for information sets that happen to be off the equilibrium path.

Several works following \citet{kw82} have introduced other approaches to the consistency requirement. A weaker measure of consistency known as preconsistency was introduced by \citet{hendon94}, requiring only that consistency be maintained between information sets belonging to the same player. \citet{perea02} introduced an even weaker restriction known as updating consistency. Both works demonstrated that their respective formulations of consistency were sufficient for a given assessment to be sequentially rational, given that the assessment also satisfied a game-theoretic principle known as the one-shot deviation principle (defined in Section~\ref{sec:pbe_proof}). \citet{kohlberg_consist97} demonstrated that the set of consistent assessments for any given game could be described through a finite system of polynomial inequalities; it is important to note that this work treated beliefs as being part of the assessment of a game, derived by an outside observer based on the strategies employed by the players. 
\citet{pimienta_consist14} characterized the set of consistent assessments given an EFG, with a particular focus on how to guarantee that an assessment whose beliefs satisfy Bayes' rule on the equilibrium path is also consistent at any information sets off the equilibrium path.


\end{document}